\documentclass[11pt]{article}
\usepackage{yaonan}

\newcommand{\SPA}{{\sf SPA}}
\newcommand{\MA}{{\sf MA}}
\newcommand{\SPP}{{\sf SPP}}
\newcommand{\UIVVSPP}{{\sf UIVVSPP}}

\newcommand{\SecondPriceAuction}{\textsf{Second Price Auction}}
\newcommand{\MyersonAuction}{\textsf{Myerson Auction}}
\newcommand{\SequentialPostedPricing}{\textsf{Sequential Posted Pricing}}
\newcommand{\UIVVSequentialPostedPricing}{\textsf{UIVV Sequential Posted Pricing}}
\newcommand{\UniformIronedVirtualValueSequentialPostedPricing}{\textsf{Uniform-Ironed-Virtual-Value Sequential Posted Pricing}}

\newcommand{\DRB}{{\sf DRB}}
\newcommand{\OLP}{{\sf OLP}}
\newcommand{\OIP}{{\sf OIP}}
\newcommand{\IP}{{\sf IP}}
\newcommand{\UIVVIP}{{\sf UIVVIP}}

\newcommand{\DualityRelaxationBenchmark}{\textsf{Duality Relaxation Benchmark}}
\newcommand{\OptimalLotteryPricing}{\textsf{Optimal Lottery Pricing}}
\newcommand{\OptimalItemPricing}{\textsf{Optimal Item Pricing}}
\newcommand{\ItemPricing}{\textsf{Item Pricing}}
\newcommand{\UniformIronedVirtualValueItemPricing}{\textsf{Uniform-Ironed-Virtual-Value Item Pricing}}
\newcommand{\UIVVItemPricing}{\textsf{UIVV Item Pricing}}
\newcommand{\UIVV}{\textsf{UIVV}}

\renewcommand{\iff}{\Leftrightarrow}
\renewcommand{\implies}{\Rightarrow}
\newcommand{\Conv}{\mathrm{CE}}

\renewcommand{\P}{\mathbf{P}}

\newcommand{\NP}{\mathbf{NP}}

\newcommand{\sharpP}{\mathbf{\#P}}

\newcommand{\QPTAS}{\mathbf{QPTAS}}

\newcommand{\iron}{\text{\tt Iron}}
\newcommand{\truncate}{\text{\tt Truncate}}
\newcommand{\scale}{\text{\tt Scale}}
\newcommand{\extend}{\text{\tt Extend}}
\newcommand{\perturb}{\text{\tt Perturb}}

\newcommand{\bphi}{\boldsymbol{\phi}}

\newcommand{\bkappa}{\boldsymbol{\kappa}}

\title{Benchmark-Tight Approximation Ratio of Simple Mechanism \\ for a Unit-Demand Buyer\footnote{A preliminary version of this work with the same title appears in the 65th IEEE Annual Symposium on Foundations of Computer Science (FOCS 2024). Most proofs are omitted there but are shown in the current full version.}}

\author{
Yaonan Jin\thanks{Hong Kong University of Science and Technology. Part of this work was done while the author was at Huawei's Taylor Lab. Email: {\tt jinyaonan1996@gmail.com}}
\and
Pinyan Lu\thanks{Shanghai University of Finance and Economics, Laboratory of Interdisciplinary Research of Computation and Economics (SUFE), \& Huawei TCS Lab. Email: {\tt lu.pinyan@mail.shufe.edu.cn}}
}
\date{}

\begin{document}

\maketitle
\begin{abstract}
We study revenue maximization in the unit-demand single-buyer setting. Our main result is that {\UniformIronedVirtualValueItemPricing} guarantees a {\em tight} $3$-approximation to the {\DualityRelaxationBenchmark} [Chawla-Malec-Sivan, EC'10/GEB'15; Cai-Devanur-Weinberg, STOC'16/ SICOMP'21], breaking the barrier of $4$ since [Chawla-Hartline-Malec-Sivan, STOC'10; Chawla-Malec-Sivan, EC'10/GEB'15]. To our knowledge, this is the first {\em benchmark-tight} revenue guarantee of any simple multi-item mechanism.

Technically, all previous works employ {\MyersonAuction} as an intermediary. The barrier of $4$ follows as {\UniformIronedVirtualValueItemPricing} achieves a {\em tight} $2$-approximation to {\MyersonAuction}, which then achieves a {\em tight} $2$-approximation to {\DualityRelaxationBenchmark}. Instead, our new approach avoids {\MyersonAuction}, thus enabling the improvement. Central to our work are a {\em benchmark-based} $3$-competitive prophet inequality and its {\em fully constructive} proof. Such variant prophet inequalities shall find future applications, e.g., to Multi-Item Mechanism Design where optimal revenues are relaxed to various more accessible benchmarks.

We complement our benchmark-tight ratio with an impossibility result. All previous works and ours follow the {\em single-dimensional representative} approach introduced by [Chawla-Hartline-Kleinberg, EC'07]. Against {\DualityRelaxationBenchmark}, it turns out that this approach cannot beat our bound of $3$ for a large class of {\ItemPricing}'s.
\end{abstract}
\thispagestyle{empty}

\newpage
\setcounter{page}{1}

\section{Introduction}
\label{sec:intro}

The TCS community has tremendously advanced on Multi-Item Mechanism Design in the last two decades. Once it became clear that, even only with a single buyer, optimal multi-item mechanisms are intolerably complicated \cite{T04,BCKW15,HN19,DDT14,CDPSY18,CDOPSY22,CMPY18}, the community turned to examine the problem through the lenses of {\em approximation}.
Hitherto there is a long line of works on the design and analysis of {\em simple and approximately optimal} mechanisms \cite[and the references therein]{CHK07,CHMS10,CMS15,CD15,KW19,HN17,LY13,BILW20,Y15,HH15,RW18,CM16,CDW21,CZ17,Y18,JLQTX19,DFLSV22,MS21,DKL20,CC23,CCW23}.

\subsection{A paradigmatic model}
\label{subsec:model}

This rich literature dates back to the seminal work of Chawla, Hartline, and Kleinberg \cite{CHK07}, who studied a revenue-maximizing seller (she) with $n \geq 2$ items facing a {\em unit-demand} buyer (he) with item-wise values $v_{i}$ for $i \in [n]$ and bundle-wise values $v_{S} = \max_{i \in S} \{v_{i}\}$ for $S \subseteq [n]$.
Namely, item-wise values $v_{i} \sim F_{i}$ are {\em independently distributed}; the seller knows the priors $\bF = \{F_{i}\}_{i \in [n]}$ but does not know the outcomes $\bv = (v_{i})_{i \in [n]}$, thus aiming to maximize her expected revenue.

This revenue maximization problem can be formulated as an (infinite-dimensional) linear program\footnote{\label{footnote:LP_MIP}More precisely, this linear or mixed-integer program has a size polynomial in $\prod_{i \in [n]} |\supp(F_{i})|$, {\em the support size of the joint value distribution}. Thus, even for binary-supported distributions $|\supp(F_{i})| = 2$, it has a size exponential in the number of items $n \geq 2$.} and a feasible solution can be interpreted as follows: The seller offers the buyer a menu of lotteries $\{(\bx,\, \ell)\}$; each lottery includes its item-wise allocation probabilities $\bx = (x_{i})_{i \in [n]}$ and its price $\ell$.\footnote{Since the buyer is unit-demand, without loss of generality, a lottery $(\bx,\, \ell)$ can have a total allocation probability of at most one $\sum_{i \in [n]} x_{i} \leq 1$, i.e., exclusively allocating each item $i \in [n]$ with probability $x_{i}$.}
The buyer will purchase the utility-maximizing lottery $\argmax_{(\bx,\, \ell)} \{\bv \cdot \bx - \ell\}$, or nothing if no lottery induces a nonnegative expected utility.
The revenue-maximizing menu is called {\OptimalLotteryPricing} ({\OLP}).
Unfortunately, the associated computation problem is intractable \cite{CDOPSY22}:
First, $\Omega(2^{n})$ many lotteries can be necessary even for binary-supported distributions $|\supp(F_{i})| = 2$, hence an exponential {\em menu complexity} \cite{HN13}.
Second, no randomized polynomial-time algorithm can implement {\OptimalLotteryPricing}, unless $\P^{\sharpP} = \P^{\NP}$, even for ternary-supported distributions $|\supp(F_{i})| = 3$.

An equally fascinating problem, revenue maximization among deterministic mechanisms, can be formulated as an (infinite-dimensional) mixed-integer program\textsuperscript{\ref{footnote:LP_MIP}}, and a feasible solution can be interpreted as follows:
The seller posts item-wise prices $\bp = (p_{i})_{i \in [n]}$ and, then, the buyer purchases his favorite item $\argmax_{i \in [n]} \{v_{i} - p_{i}\}$, or nothing if every price $p_{i}$ is above value $v_{i}$.
The revenue-maximizing deterministic mechanism is called {\OptimalItemPricing} ({\OIP}).
Unfortunately, the associated computation problem is $\NP$-hard even for ternary-supported distributions $|\supp(F_{i})| = 3$, or even for i.i.d.\ distributions $\bF = \{F\}^{\otimes n}$ of support size $|\supp(F)| = \poly(n)$ \cite{CDPSY18}.

Given the above hardness results, it is of particular interest to investigate simple {\em deterministic polynomial-time} mechanisms and to analyze their revenue guarantees against {\OptimalItemPricing} and {\OptimalLotteryPricing}.\footnote{It is equally interesting to design and analyze simple {\em randomized} mechanisms. In this regard, however, the only positive progress is a quasi-polynomial time approximation scheme ($\QPTAS$) for {\OptimalLotteryPricing} \cite{KSMSW19}.}
Among all the simple mechanisms studied in the literature, only one unconditionally achieves a constant-factor revenue guarantee against either optimum.\footnote{But if we impose mild assumptions on the value distributions $\bF = \{F_{i}\}_{i \in [n]}$, more simple mechanisms also can achieve constant-factor revenue guarantees. E.g., if the value distributions satisfy the {\em regularity} condition, a standard condition from Myerson's seminal work \cite{M81}, {\OptimalItemPricing} admits a $\QPTAS$ \cite{CD15} and even the simplest mechanism \textsf{Uniform Pricing}, i.e., posting an optimal uniform price $p$ for all items, achieves a tight $\calC_{\OIP / {\sf UP}} \approx 2.6202$-approximation to {\OptimalItemPricing} and a $2\calC_{\OIP / {\sf UP}} \approx 5.2404$-approximation to {\OptimalLotteryPricing} \cite{JLQTX19}.}
It is called {\sf Uniform-Ironed-Virtual-Value Item Pricing}, or simply {\UIVVItemPricing} ({\UIVVIP}); cf.\ \Cref{fig:result} for a diagram of its revenue guarantees against either {\OptimalItemPricing} or {\OptimalLotteryPricing}, as well as other revenue gaps.

{\UIVVItemPricing} was proposed in Chawla, Hartline, and Kleinberg's original work \cite{CHK07}. In particular, they proved that it achieves a $3$-approximation to {\OptimalItemPricing}.
Afterward, Chawla, Hartline, Malec, and Sivan \cite{CHMS10} improved this upper bound to $2$, which remains the state of the art.

In a breakthrough, Chawla, Malec, and Sivan \cite{CMS15} (the conference version was in EC'10) established that {\UIVVItemPricing} achieves a $4$-approximation to {\OptimalLotteryPricing}.
Indeed, since the $\P^{\sharpP}$-hard {\OptimalLotteryPricing} is prohibitively difficult to deal with, \cite{CMS15} instead relaxed it to a more accessible benchmark, {\DualityRelaxationBenchmark} ({\DRB}).\footnote{\cite{CMS15} did not name this benchmark by {\DualityRelaxationBenchmark}. Instead, the follow-ups \cite{HH15,CDW21} gave it a {\em duality-based} new interpretation. Thus, without ambiguity, we call it {\DualityRelaxationBenchmark}.}
I.e., they proved that this benchmark revenue-surpasses {\OptimalLotteryPricing}  and that {\UIVVItemPricing} even ensures a $4$-approximation to it.
The follow-up by Cai, Devanur, and Weinberg \cite{CDW21} presented an alternative and possibly more ``universal'' proof, including a {\em duality-based} new interpretation for the benchmark.
Nonetheless, for nearly fifteen years, the original bound of $4$ from \cite{CMS15} remains state of the art and {\DualityRelaxationBenchmark} remains the only known useful relaxation of {\OptimalLotteryPricing}.\footnote{More precisely, \cite{EFFTW17a,BW19} considered another duality-based benchmark (slightly different from the one by \cite{CMS15,HH15,CDW21}). Nonetheless, {\UIVVItemPricing} turns out to have a worse or the same revenue guarantee $\geq 3$ against the \cite{EFFTW17a,BW19} benchmark, which means that benchmark is not so useful for our purpose. (Our lower-bound instances (\Cref{exp:DRB_SPP}) are {\em symmetric instances}, by which both benchmarks have the same revenue, so the \cite{EFFTW17a,BW19} benchmark can only establish a worse or the same revenue guarantee $\geq 3$.)}

Exploring ``{\OptimalLotteryPricing} vs.\ {\OptimalItemPricing}'' is also interesting, i.e., the gap between randomized and deterministic revenues, regardless of computational constraints. Unfortunately, mainly because both mechanisms are computationally intractable, nothing is known except an upper bound of $4$ (by implication) and a best-known lower bound of $1 + \frac{\ln 2}{5} \approx 1.1386$ \cite{CMS15}.

\subsection{Our contributions}
\label{subsec:contribution}

\begin{figure}[t]
\centering
\begin{tikzpicture}
    \node(n0) at (0, 3) [draw, thick] {{\DualityRelaxationBenchmark}};
    \node(n1) at (0, 0) [draw, thick] {{\OptimalLotteryPricing}};
    \node(n2) at (0, -3) [draw, thick] {{\OptimalItemPricing}};
    \node(n3) at (0, -6) [draw, thick] {{\UIVVItemPricing}};
    
    \draw[thick, dashed, ->, >=triangle 45] (n1.north) to (n0.south);
    
    \draw[thick, ->, >=triangle 45] (n3.north) to (n2.south);
    \node at (0, -4.5) [right, yshift = .6cm] {$\leq 3$ \cite{CHK07}};
    \node at (0, -4.5) [right] {$\leq 2$ \cite{CHMS10}};
    \node at (0, -4.5) [right, yshift = -.6cm] {$\geq 2$ [\Cref{thm:UIVVIP}] \OliveGreen{(tight)}};

    \draw[thick, dashed, ->, >=triangle 45] (n3.south east) to (9.5, -6 - 0.315) to
    (9.5, 3 + 0.315) to (n0.north east);
    \node at (9.5, -1.5) [right, yshift = .3cm] {$\leq 4$ \cite{CMS15,CDW21}};
    \node at (9.5, -1.5) [right, yshift = -.3cm] {$= 3$ [\Cref{thm:UIVVIP}] \OliveGreen{(tight)}};
    
    \draw[thick, ->, >=triangle 45] (n3.north east) to (9, -6 + 0.315) to (9, 0) to (n1.east);
    \node at (5, -6) [right, yshift = .3cm + .315cm] {$\geq 2$ [\Cref{thm:UIVVIP}]};
    \node at (5, -6) [right, yshift = .9cm + .315cm] {$\leq 3$ [\Cref{thm:UIVVIP}]};
    \node at (5, -6) [right, yshift = 1.5cm + .315cm] {$\leq 4$ \cite{CMS15,CDW21}};
    
    \draw[thick, ->, >=triangle 45] (n2.north) to (n1.south);
    \node at (0, -1.5) [right, yshift = -.6cm] {$\geq 1.1386$ \cite{CMS15}};
    \node at (0, -1.5) [right] {$\leq 3$ [\Cref{cor:OIP}]};
    \node at (0, -1.5) [right, yshift = .6cm] {$\leq 4$ \cite{CMS15,CDW21}};
    
    \draw[thick, dashed, ->, >=triangle 45] (n2.east) to (8.5, -3) to (8.5, 3 - 0.315) to (n0.south east);
    \node at (4.5, -3) [right, yshift = .3cm] {$\geq 2$ [\Cref{cor:OIP}]};
    \node at (4.5, -3) [right, yshift = .9cm] {$\leq 3$ [\Cref{cor:OIP}]};
    \node at (4.5, -3) [right, yshift = 1.5cm] {$\leq 4$ \cite{CMS15,CDW21}};
\end{tikzpicture}
\caption{A diagram of the previous results and our new results. The four mechanisms constitute a revenue hierarchy: {\DualityRelaxationBenchmark} $\succeq$ {\OptimalLotteryPricing} $\succeq$ {\OptimalItemPricing} $\succeq$ {\UIVVItemPricing}. ({\DualityRelaxationBenchmark} itself may lack a concrete economic meaning but remains the only known useful relaxation of {\OptimalLotteryPricing}.)
One solid arrow refers to the revenue gap between two concrete mechanisms, and one dashed arrow refers to the revenue gap between one concrete mechanism and {\DualityRelaxationBenchmark}. \\
We emphasize that, mainly because the $\P^{\sharpP}$-hard {\OptimalLotteryPricing} and the $\NP$-hard {\OptimalItemPricing} both are sophisticated, actual positive progress has been made  only for ``{\DualityRelaxationBenchmark} (resp.\ {\OptimalItemPricing}) vs.\ {\UIVVItemPricing}''. \\
For the other revenue gaps, all the previous upper bounds and our new upper bounds are implications from  ``{\DualityRelaxationBenchmark} vs.\ {\UIVVItemPricing}''.
\label{fig:result}}
\end{figure}

In this work, we show the {\em tight} approximation ratio $\calC_{\DRB / \UIVVIP} = 3$ of {\UIVVItemPricing} to {\DualityRelaxationBenchmark}.
To our knowledge, this is the first benchmark-tight approximation ratio of any simple multi-item mechanism -- within or beyond the unit-demand single-buyer setting.

The worst-case instances for $\calC_{\DRB / \UIVVIP} = 3$ further inspire us to improve the lower bounds on the approximation ratios of {\UIVVItemPricing} against either {\OptimalLotteryPricing} or {\OptimalItemPricing}.
Namely, we show a family of $n$-item instances by which $\calC_{\OLP / \UIVVIP} \geq \calC_{\OIP / \UIVVIP} \geq \frac{2}{1 + 1 / n}$. This lower bound can be arbitrarily close to $2$, when the number of items is large enough $n \gg 2$.
Also, it matches the upper bound $\calC_{\OIP / \UIVVIP} \leq 2$ from \cite{CHMS10}, thus closing this approximation ratio $\calC_{\OIP / \UIVVIP} = 2$.

The following theorem summarizes our improved upper and lower bounds on the revenue guarantees of {\UIVVItemPricing}; cf.\ \Cref{fig:result} for a diagram.
(The upper-bound part of Item~(iii) is credited to \cite{CHK07,CHMS10}.)

\begin{theorem}[Revenue Guarantees of {\UIVVIP}]
\label{thm:UIVVIP}
\begin{flushleft}
{\UniformIronedVirtualValueItemPricing} achieves \\
(i)~a tight $\calC_{\DRB / \UIVVIP} = 3$ approximation to {\DualityRelaxationBenchmark}, \\
(ii)~a $\calC_{\OLP / \UIVVIP} \in [2,\, 3]$ approximation to {\OptimalLotteryPricing}, and \\
(iii)~a tight $\calC_{\OIP / \UIVVIP} = 2$ approximation to {\OptimalItemPricing}.
\end{flushleft}
\end{theorem}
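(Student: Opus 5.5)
The theorem has several parts, and I would split the work accordingly. The substantive new claim is the upper bound $\calC_{\DRB / \UIVVIP} \leq 3$. Given that bound, the upper half of Item~(ii) follows from the chain $\DRB \geq \OLP \geq \OIP$. The upper half of Item~(iii) is $\calC_{\OIP / \UIVVIP} \leq 2$, which I take from \cite{CHMS10}. The remaining work is three lower bounds: $\calC_{\DRB / \UIVVIP} \geq 3$ and $\calC_{\OIP / \UIVVIP} \geq 2$, with $\calC_{\OLP / \UIVVIP} \geq 2$ then following from $\OLP \geq \OIP$.

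For the upper bound I would not pass through {\MyersonAuction} at all. I would work with the single-dimensional representative instance of \cite{CHK07}, with ironed virtual values $\bar\phi_i$. In that language, {\DualityRelaxationBenchmark} can be written as
\[
\Ex\Big[\max_{i}\bar\phi_i(v_i)^+\Big] \;+\; \text{(a ``second-highest/tail'' term)},
\]
where the tail term comes from the flow-based dual of \cite{CDW21}. {\UIVVItemPricing} picks a uniform threshold $\theta$ on ironed virtual values. Its revenue can be lower-bounded, in prophet-inequality style, by
\[
\theta \cdot \Pr\big[\max_i \bar\phi_i \geq \theta\big] \;+\; \Ex\Big[\sum_i (\bar\phi_i - \theta)^+ \cdot \mathbb{1}[\text{no other } j \text{ exceeds } \theta]\Big],
\]
where the second term uses that the expected payment equals the expected ironed virtual surplus of the allocation.

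The target is then a \emph{benchmark-based} prophet inequality: for a suitable $\theta$, the expression above is at least $\frac{1}{3}\DRB$. The factor $4=2\times 2$ came from losing $2$ twice. The point of going direct is that the two losses are correlated, so one can compare against the benchmark's own structure rather than against the prophet $\Ex[\max\bar\phi_i^+]$. My first attempt would be to choose $\theta$ by the median or balancing rule, then bound the worst case. To bound it, I would reduce any instance to an extremal one by iterative operations on the representative distributions (ironing, truncating, scaling, perturbing), each of which weakly decreases $\UIVVIP/\DRB$. This would leave a parametric family, for example triangle-like distributions, on which the ratio is computed exactly and shown to be $\geq 1/3$. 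This constructive reduction is the step I expect to be the main obstacle. The hard part is showing that each operation moves monotonically against the ratio while keeping the benchmark's tail term under control.

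For the lower bounds, I would take symmetric instances with $n$ items, each valued by a small-probability, high-value distribution plus an equal-revenue component. The parameters should be tuned so that every uniform ironed-virtual-value threshold earns about $1$, while $\DRB \to 3$. For Item~(iii), a family with $\OIP/\UIVVIP = \frac{2}{1+1/n}$ is needed. There, optimal item pricing exploits asymmetric prices, which the uniform threshold on ironed virtual values cannot mimic, and verifying this requires only explicit revenue computations. Letting $n\to\infty$ gives tightness $2$ in (iii), and hence $\geq 2$ in (ii).
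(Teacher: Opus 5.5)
Your bookkeeping matches the paper: the upper half of (ii) comes from $\DRB \ge \OLP \ge \OIP$, the upper half of (iii) from \cite{CHMS10}, and the lower half of (ii) from (iii). The overall route also matches: pass to order-oblivious {\SequentialPostedPricing} with deterministic UIVV prices, then reduce to an extremal family. But the step you call ``the main obstacle'' is essentially the whole upper bound, and you give no mechanism for it. The missing idea is \Cref{lem:revenue_monotonicity}: for one unit-demand buyer with independent values, $\DRB$ is monotone under first-order stochastic dominance. This is not automatic, because the benchmark mixes values with ironed virtual values, and the analogous benchmark fails monotonicity for an additive buyer. It is also not a pointwise fact. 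The paper couples quantiles one item at a time. In the case where the modified item is highest in both profiles, it uses that $\int_0^{q^*}\max(\bar{R}_1'(q), v_{(2)})\,\mathrm{d}q$ equals the height at $q^*$ of the slope-$v_{(2)}$ tangent to $\bar{R}_1$, which is monotone in the curve. Each reduction is then designed, at the \emph{fixed} threshold $T^{(0)}$, to produce a dominating instance while leaving the order-oblivious revenue unchanged. \iron{} keeps the ironed virtual values. \truncate{} moves all virtual-value mass below $T^{(0)}$ up to $T^{(0)-}$. \extend{} replaces $R_i$ on $[0,\kappa_i]$ by its slope-$T^{(0)}$ tangent through $(\kappa_i, R_i(\kappa_i))$; this keeps prices and sale probabilities and pushes the excess into a point mass at infinity. \extend{} can \emph{lower} virtual values on $(0,\kappa_i)$, so only the tangent-line form of monotonicity covers it. It also changes which affordable item the buyer picks, so the reductions must act on the {\SequentialPostedPricing} proxy, not on item pricing as your ``$\UIVVIP/\DRB$'' suggests.

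The threshold must also be pinned down, and a median rule fails. Take $n$ i.i.d.\ items with $R_i(q) = a + \tau q$. Every finite value has virtual value $\tau$, so the median of $\max_i \bar{\phi}_i$ is that atom. The deterministic UIVV prices then either all sell, giving revenue $a+\tau$, or essentially never sell, giving revenue $na$. Meanwhile $\DRB = 2na+\tau$, so the ratio is unbounded over this family under either convention. The paper takes $T = \alpha\cdot\DRB$. Because the reductions raise $\DRB$, they also move $T$, and this is the specific job of ``scaling''. It translates every curve down by $\gamma R_i(0)$ and picks $\gamma$ by the intermediate value theorem so that $\DRB$ returns to its original value; the threshold is restored while the order-oblivious revenue only falls. \perturb{} then uses that a semi-linear instance's revenue is at least the $\kappa$-mixture of the revenues of its all-$\kappa_i=1$ and all-$\kappa_i=0$ versions, with $\kappa = 1-\prod_j(1-\kappa_j)$. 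So the extremal family is linear curves $R_i(q) = R_i(0) + Tq$, not triangles. On this family, \Cref{lem:revenue_formula} gives $\DRB = \sum_i R_i(0) + \SPA \le 2\sum_i R_i(0) + T$, while the revenue is either at least $T$ or exactly $\sum_i R_i(0)$; balancing the two cases forces $\alpha = \tfrac13$.

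Your lower-bound description is too vague to check. The paper's instance is $R(q) = \tfrac1n + q$, i.e.\ value $1$ plus an equal-revenue component of revenue $\tfrac1n$. It has $\DRB = 3$ and UIVV revenue at most $1+\tfrac1n$. On the same instance, prices $(t,\dots,t,\tfrac{n+1}{n})$ with $t\to\infty$ earn revenue tending to $2$, which gives (iii) and the lower half of (ii) together.
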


By implication, we also improve the upper bounds on the revenue guarantees of {\OptimalItemPricing} against either {\DualityRelaxationBenchmark} or {\OptimalLotteryPricing}, both from $4$ to $3$.
(The previous upper bounds also are implications of $\calC_{\DRB / \UIVVIP} \leq 4$ \cite{CHK07,CHMS10,CMS15,CDW21}.)
In this regard, we construct another $2$-approximation lower-bound instance for ``{\DualityRelaxationBenchmark} vs.\ {\OptimalItemPricing}''.

The following corollary concludes our improved upper and lower bounds on the revenue guarantees of {\OptimalItemPricing}; cf.\ \Cref{fig:result} for a diagram. (The lower-bound part of Item~(ii) is credited to \cite{CMS15}.)

\begin{corollary}[Revenue Guarantees of {\OIP}]
\label{cor:OIP}
\begin{flushleft}
{\OptimalItemPricing} achieves \\
(i)~a $\calC_{\DRB / \OIP} \in [2,\, 3]$ approximation to {\DualityRelaxationBenchmark} and \\
(ii)~a $\calC_{\OLP / \OIP} \in [1 + \frac{\ln 2}{5} \approx 1.1386,\, 3]$ approximation to {\OptimalLotteryPricing}.
\end{flushleft}
\end{corollary}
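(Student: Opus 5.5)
The statement to prove is \Cref{cor:OIP}. I would derive it almost entirely from \Cref{thm:UIVVIP} and the revenue hierarchy in \Cref{fig:result}, namely {\DualityRelaxationBenchmark} $\succeq$ {\OptimalLotteryPricing} $\succeq$ {\OptimalItemPricing} $\succeq$ {\UIVVItemPricing}. The one genuinely new ingredient is a lower-bound instance for Item~(i).

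\textbf{Upper bounds.} {\OptimalItemPricing} is by definition the revenue-optimal deterministic mechanism, and {\UIVVItemPricing} is one particular item pricing. Hence $\OIP \geq \UIVVIP$ on every instance. Combining this with \Cref{thm:UIVVIP}(i) gives
\[
\DRB \leq 3 \cdot \UIVVIP \leq 3 \cdot \OIP,
\]
so $\calC_{\DRB/\OIP} \leq 3$. Since {\OptimalLotteryPricing} is revenue-surpassed by the benchmark (the \cite{CMS15} relaxation fact), we also get
\[
\OLP \leq \DRB \leq 3 \cdot \OIP,
\]
so $\calC_{\OLP/\OIP} \leq 3$.

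\textbf{Lower bound in Item~(ii).} The bound $\calC_{\OLP/\OIP} \geq 1 + \frac{\ln 2}{5}$ is simply the \cite{CMS15} instance, which I would cite.

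\textbf{Lower bound in Item~(i).} I need a family of instances on which $\DRB / \OIP \to 2$. This is the main obstacle, because {\OptimalItemPricing} may choose arbitrary non-uniform prices, so the \Cref{thm:UIVVIP}(iii) instances, which only defeat {\UIVVItemPricing}, may not suffice on their own.

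My first attempt would be symmetric i.i.d.\ instances with many items $n \to \infty$, in the spirit of the lower-bound instances (\Cref{exp:DRB_SPP}). For these, the benchmark reduces to a single-dimensional Myerson-type quantity: roughly the expected maximum ironed virtual value of $\max_i v_i$ with copies, or equivalently the ex-ante relaxation $\sum_i$ of per-item optimal revenues under quantile constraints summing to one.

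The plan is to choose $F$ as a suitably truncated equal-revenue distribution scaled by $n$, so that:
\begin{itemize}
\item the ex-ante relaxation collects about $2$ units, one from a ``low'' regime plus one from a ``high'' regime;
\item any item pricing, being a posted-price-like mechanism for a unit-demand buyer, can secure only about $1$ unit, because selling at high prices forgoes the low-price mass and vice versa.
\end{itemize}

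To upper bound {\OptimalItemPricing} on this instance, I would use symmetry. A standard averaging or convexity argument should show that uniform prices are near optimal for i.i.d.\ items, and then I would optimize a single price $p$ against the distribution of the maximum value, $p\bigl(1 - F(p)^n\bigr)$, which stays bounded by about $1$.

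If this construction fails to reach $2$, the fallback is a two-item non-identical instance: one item deterministic-ish and one equal-revenue. Here the benchmark adds the two revenues, while item pricing suffers interference between the items. I would then tune the parameters numerically to approach the ratio $2$.
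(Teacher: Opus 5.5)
Your upper bounds and the lower bound in Item~(ii) match the paper: the hierarchy $\DRB \geq \OLP \geq \OIP \geq \UIVVIP$ together with \Cref{thm:UIVVIP}(i), and the \cite{CMS15} instance. The gap is the lower bound $\calC_{\DRB/\OIP} \geq 2$. All three of your ingredients for it fail.

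(a)~The benchmark is not a Myerson-type or ex-ante-relaxation quantity. We have $\DRB = \E[\max(\bar{\varphi}_{(1)}(v_{(1)}),\, v_{(2)})]$, which mixes {\MyersonAuction} with {\SecondPriceAuction}. On the instance that actually works ($n$ i.i.d.\ items with $R \equiv 1/n$, the paper's instance rescaled), both of your descriptions evaluate to $1$. Yet $\DRB = 2$ there, and the extra unit comes precisely from the second-highest-value term.

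(b)~The claim ``by symmetry, uniform prices are near optimal for i.i.d.\ items'' is false. Revenue is not concave in the price vector, so averaging the permutations of an optimal $\bp$ proves nothing. The paper's own \Cref{exp:DRB_SPP} is a counterexample: it is i.i.d.\ and regular, and the asymmetric prices $p_{n} = \frac{n+1}{n}$, $p_{i} \to \infty$ for $i < n$, earn $2$. The best uniform price earns only $\max_{p} p\,(1 - F(p)^{n})$, which tends to $\max_{c > 0} (1 + \frac{1}{c})(1 - e^{-c}) \approx 1.3$.

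(c)~Your fallback instance has ratio $1$, not $2$. Take $v_{1} \equiv a$ and $v_{2}$ equal-revenue. Then $\DRB = 1 + a$, but prices $(a,\, t)$ with $t \to \infty$ already earn $t \cdot \frac{1}{t} + a\,(1 - \frac{1}{t}) \to 1 + a$. Pricing the heavy-tailed item near infinity removes all the interference you hoped to exploit, so no tuning helps.

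What you are missing is an upper bound on $\OIP$ that holds for arbitrary non-uniform prices. The paper uses \Cref{exp:DRB_OIP}: $n \geq 2$ i.i.d.\ equal-revenue items with $R(q) \equiv 1$. \Cref{lem:revenue_formula} gives $\DRB = nR(0) + \SPA = n + n = 2n$. The single-dimensional representative bound $\OIP(\bF) \leq \MA(\bF) = n$ from \cite[Lemma~5]{CHK07} then gives ratio exactly $2$. On this instance even the elementary bound $\IP(\bF,\, \bp) \leq \sum_{i} p_{i} \Pr[v_{i} \geq p_{i}] \leq \sum_{i} \max_{q} R_{i}(q) = n$ suffices. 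If by ``truncated equal-revenue scaled by $n$'' you meant $R \equiv 1/n$, you had the right instance up to scaling. But your proposal gives no valid argument for $\DRB \approx 2$, since your ex-ante accounting yields $1$. Nor does it give one for $\OIP \approx 1$, since the symmetrization step fails.
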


\noindent
{\bf Limitations on Further Improvements.}
Given our improved upper bounds of $3$, one might be optimistic with better revenue guarantees of {\UIVVItemPricing}, {\OptimalItemPricing}, or other simple deterministic mechanisms in between.
Unfortunately, en route to \Cref{thm:UIVVIP}, we also show that beating our bound of $3$ transcends the scope of known methods;
see \Cref{subsec:overview} for details.
Roughly speaking, this is because both the previous works \cite{CHK07,CHMS10,CMS15,CDW21} and this paper (possibly implicitly) apply the {\em single-dimensional representative} approach initiated by \cite{CHK07}.
We show that, against {\DualityRelaxationBenchmark}, this approach cannot establish an approximation ratio better than $3$ for any {\em ironed-virtual-value-based} {\ItemPricing}.
We note that all the mentioned works (and many others beyond the unit-demand single-buyer setting) restricted their attention to this family of mechanisms.
On the other hand, direct analysis of the $\P^{\sharpP}$-hard {\OptimalLotteryPricing} seems out of reach.

\subsection{Technical overview}
\label{subsec:overview}

Below, we sketch our new approach for the tight revenue guarantee $\calC_{\DRB / \UIVVIP} = 3$ of {\UIVVItemPricing} against {\DualityRelaxationBenchmark}, assuming that the reader has basic familiarity with the previous works, especially Myerson's virtual value theory \cite{M81}.

To begin with, recall that the previous approach \cite{CHK07,CHMS10,CMS15,CDW21} for the weaker bound $\calC_{\DRB / \UIVVIP} \leq 4$ takes three steps.\footnote{Of course, if we consider {\OptimalItemPricing} and/or {\OptimalLotteryPricing} instead, there are two other steps: \\
$\text{\OptimalItemPricing} \geq \text{\UIVVItemPricing}$ (obvious) and
$\frac{1}{4} \cdot \text{\DualityRelaxationBenchmark} \geq \frac{1}{4} \cdot \text{\OptimalLotteryPricing}$ (proved in \cite{CMS15,CDW21}).
Both equalities simultaneously hold for (say) the first instance in \Cref{footnote:tight_instance}.}
\begin{align}
    {\UIVVItemPricing}
    & ~\geq~ {\UIVVSequentialPostedPricing}
    \tag{1}\label{eq:step_1} \\
    & \phantom{~\geq~} \text{\OliveGreen{$\triangleright$ the single-dimensional representative approach (tight)}}
    \notag \\
    & ~\geq~ \tfrac{1}{2} \cdot {\MyersonAuction}
    \tag{2}\label{eq:step_2} \\
    & \phantom{~\geq~} \text{\OliveGreen{$\triangleright$ the order-oblivious prophet inequalities (tight)}}
    \notag \\
    & ~\geq~ \tfrac{1}{4} \cdot {\DualityRelaxationBenchmark}.
    \tag{3}\label{eq:step_3} \\
    & \phantom{~\geq~} \text{\OliveGreen{$\triangleright$ benchmark decomposition (tight)}}
    \notag
\end{align}

Step~\eqref{eq:step_1} applies the single-dimensional representative approach \cite{CHK07,CHMS10}, which relates the (less understood) unit-demand single-buyer setting to the (better understood) single-item multi-buyer setting.
The former single buyer's value and distribution $v_{i} \sim F_{i}$ for each former item $i \in [n]$ is reinterpreted as each latter buyer $i$'s value and distribution $v_{i} \sim F_{i}$ for the latter single item.
Therefore, {\UIVVItemPricing} shall be replaced with its single-item multi-buyer counterpart, {\UIVVSequentialPostedPricing} ({\UIVVSPP}) in the {\em order-oblivious} model (elaborated soon).

Step~\eqref{eq:step_2}, due to well-known reductions \cite{CHMS10,CFPV19} based on Myerson's virtual value theory, is equivalent to the order-oblivious prophet inequalities \cite{KS78,S84,KW19}.

Step~\eqref{eq:step_3} is evident once we write down the revenue formula of {\DualityRelaxationBenchmark} (\Cref{subsec:mechanism}).
It is basically a ``mixture'' of the revenues from {\MyersonAuction} ({\MA}) and {\SecondPriceAuction} ({\SPA}), thus being upper-bounded by the latter two's total revenue.

We emphasize that all individual steps are tight, i.e., the equalities hold for some step-specific instances,\footnote{\label{footnote:tight_instance}Here are three ``tight'' instances, respectively; for brevity, we omit calculation of revenues (elementary algebra). \\
The equality in Step~\eqref{eq:step_1}: $F_{1}(v) = F_{2}(v) = \dots = F_{n}(v) \eqdef \indicator(v \geq 1)$, i.e., all values are deterministic $\equiv 1$. \\
The equality in Step~\eqref{eq:step_2}: $F_{1}(v) \eqdef \indicator(v \geq 1)$ and $F_{2}(v) \eqdef 1 - 1 / v$ for $v \geq 1$ (a.k.a.\ the equal-revenue distribution). \\
The equality in Step~\eqref{eq:step_3}: $F_{1}(v) = F_{2}(v) = \dots = F_{n}(v) \eqdef 1 - 1 / v$ for $v \geq 1$ and $n \geq 2$ (see \Cref{exp:DRB_OIP} in \Cref{sec:example}).} which may explain why the bound of $4$ is long-standing.
In contrast, our approach will enable the improvement by unifying and then refining Steps~\eqref{eq:step_2} and \eqref{eq:step_3}.

\vspace{.1in}
\noindent
{\bf Mechanisms and Their Relation.}
First, we shall formalize {\ItemPricing} and {\SequentialPostedPricing} and recap how to relate them via the single-dimensional representative approach.

\begin{definition}[{\ItemPricing}]
\label{def:IP}
\begin{flushleft}
For a unit-demand single-buyer instance $\bF = \{F_{i}\}_{i \in [n]}$,
\begin{itemize}
 \item {\ItemPricing}: The seller posts item-wise prices $\bp = (p_{i})_{i \in [n]}$ on items $i \in [n]$. Then the buyer purchases the utility-maximizing item $\argmax_{i \in [n]} \{v_{i} - p_{i}\}$, breaking ties in favor of (one of) the highest-priced item(s),\footnote{As shown in \cite[Section~2.2]{CDPSY18}, this tie-breaking rule is without loss of generality.} or nothing if every price is above the value $p_{i} > v_{i}$, $\forall i \in [n]$. \\
 Denote by $\IP(\bF,\, \bp)$ the resulting revenue.
\end{itemize}
\end{flushleft}
\end{definition}

\begin{definition}[{\SequentialPostedPricing}]
\label{def:SPP}
\begin{flushleft}
For a single-item multi-buyer instance $\bF = \{F_{i}\}_{i \in [n]}$,
\begin{itemize}
 \item {\SequentialPostedPricing}: The seller posts buyer-wise prices $\bp = \{p_{i}\}_{i \in [n]}$ for buyers $i \in [n]$.
 Then the buyers arrive one by one $\pi(1),\, \pi(2),\, \dots,\, \pi(n)$ in a specific order $\pi \in \Pi_{n}$; the first arrival buyer $\pi(i)$ who accepts her price $v_{\pi(i)} \geq p_{\pi(i)}$ gets the item. \\
 Denote by $\SPP(\bF,\, \bp,\, \pi)$ the resulting revenue.
 
 \item The {\em order-oblivious} model:
 Once the seller determines the buyer-wise prices $\bp = \{p_{i}\}_{i \in [n]}$, \\
 a hypothetical adversary chooses (one of) the revenue-worst arrival order(s) $\pi^{*} \in \Pi_{n}$ for the seller.\ignore{ $\pi^{*} \eqdef \argmin_{\pi \in \Pi_{n}} \{\SPP(\bF,\, \bp,\, \pi)\}$.}
 Denote by $\SPP^{*}(\bF,\, \bp) \eqdef \min_{\pi \in \Pi_{n}} \{\SPP(\bF,\, \bp,\, \pi)\}$ the resulting revenue.\ignore{$\SPP^{*}(\bF,\, \bp) \eqdef \SPP(\bF,\, \bp,\, \pi^{*})$ the resulting revenue.}
\end{itemize}
\end{flushleft}
\end{definition}

The single-dimensional representative approach \cite[Theorem~4]{CHMS10} asserts that, under the same {\em deterministic} prices $\bp$, {\ItemPricing} revenue-surpasses {\SequentialPostedPricing} in the order-oblivious model.
(We note that, to validate \Cref{prop:representative} as a black-box reduction, the involved prices $\bp$ must be deterministic; see \Cref{sec:randomness} for more details. The previous work \cite{CHMS10} did not emphasize this issue.)

\begin{proposition}[{The Single-Dimensional Representative Approach \cite{CHMS10}}]
\label{prop:representative}
\begin{flushleft}
\, \\
Given the same instance $\bF$ and the same \textbf{deterministic} prices $\bp$,
{\ItemPricing} revenue-surpasses {\SequentialPostedPricing} in the order-oblivious model $\IP(\bF,\, \bp) \geq \SPP^{*}(\bF,\, \bp)\ignore{\min_{\pi \in \Pi_{n}} \{\SPP(\bF,\, \bp,\, \pi)\}}$.
\end{flushleft}
\end{proposition}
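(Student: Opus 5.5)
The idea is to exhibit one specific arrival order $\pi$ under which {\SequentialPostedPricing} collects, pointwise for every value profile $\bv$, no more than {\ItemPricing} does. Then
\[
\SPP^{*}(\bF,\, \bp) ~=~ \min_{\pi' \in \Pi_{n}} \SPP(\bF,\, \bp,\, \pi') ~\leq~ \SPP(\bF,\, \bp,\, \pi) ~\leq~ \IP(\bF,\, \bp).
\]

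Order the buyers (equivalently, the items) by \emph{increasing} price: $p_{\pi(1)} \leq p_{\pi(2)} \leq \dots \leq p_{\pi(n)}$, with ties broken arbitrarily. We then compare the two mechanisms pointwise. Fix any realization $\bv$ and let $A(\bv) \eqdef \{i \in [n] : v_{i} \geq p_{i}\}$ be the set of items or buyers whose value meets the price.

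If $A(\bv) = \emptyset$, both mechanisms earn zero. Otherwise, under order $\pi$, {\SequentialPostedPricing} sells to the first arrival in $A(\bv)$. That buyer has the lowest price in $A(\bv)$, so the revenue is $\min_{i \in A(\bv)} p_{i}$.

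In {\ItemPricing}, the buyer picks some $j \in \argmax_{i} \{v_{i} - p_{i}\}$ with $v_{j} - p_{j} \geq 0$, so $j \in A(\bv)$. Only items in $A(\bv)$ give nonnegative utility, so no other choice is possible. The tie-breaking rule in \Cref{def:IP} is not even needed here. Since $j \in A(\bv)$, the revenue is $p_{j} \geq \min_{i \in A(\bv)} p_{i}$.

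Taking expectations over $\bv \sim \bF$ gives $\IP(\bF,\, \bp) \geq \SPP(\bF,\, \bp,\, \pi) \geq \SPP^{*}(\bF,\, \bp)$. Note that this pointwise argument needs $\bp$ to be fixed before $\bv$ is drawn, so that the order $\pi$ depends only on $\bp$ and not on $\bv$. This is exactly why the statement insists on deterministic prices: with randomized prices, the adversary's minimum would be taken per realization of $\bp$ rather than after averaging over $\bp$, and the black-box comparison could break.

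The only real obstacle is choosing the right order. Once the increasing-price order is fixed, the rest is routine. The one remaining subtlety is the boundary case $v_{i} = p_{i}$. There, acceptance in {\SequentialPostedPricing} and nonnegative utility in {\ItemPricing} use the same weak inequality, so the set $A(\bv)$ is the same for both mechanisms.
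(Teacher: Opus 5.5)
Your proof is correct, and it is the standard argument for this cited result (the paper does not reprove the proposition and defers to \cite{CHMS10}): under the increasing-price order, sequential posted pricing earns exactly $\min_{i \in A(\bv)} p_{i}$ pointwise, while the unit-demand buyer pays some $p_{j}$ with $j \in A(\bv)$. Your closing remark on randomization is backwards, though: in the paper's model (\Cref{sec:randomness}) the adversary commits to one order against the distribution of $\bp$, so your realization-dependent increasing-price order is unavailable and $\min_{\pi} \E_{\bp}[\SPP(\bF,\bp,\pi)]$ can exceed $\E_{\bp}[\IP(\bF,\bp)]$ (\Cref{exp:randomness}), whereas a per-realization minimum $\E_{\bp}[\min_{\pi} \SPP(\bF,\bp,\pi)]$ is bounded by $\E_{\bp}[\IP(\bF,\bp)]$ by applying your argument realization by realization.
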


Specifically, \cite{CHK07,CHMS10} designed the uniform-ironed-virtual-value prices $\bp$ as follows.

\begin{definition}[Uniform-Ironed-Virtual-Value Prices]
\label{def:UIVV_prices}
\begin{flushleft}
For a single-item multi-buyer instance $\bF = \{F_{i}\}_{i \in [n]}$, let $\bar{\bvarphi} = \{\bar{\varphi}_{i}\}_{i \in [n]}$ be (\Cref{subsec:distribution}) the {\em increasing} ironed virtual value functions.
Given a uniform ironed-virtual-value threshold ({\UIVV}-threshold) $T > 0$, define the item-wise prices as $p_{i} \eqdef \sup \{ v \in \RR \mid \bar{\varphi}_{i}(v) < T \} = \inf \{ v \in \RR \mid \bar{\varphi}_{i}(v) \geq T \}$ for each $i \in [n]$,\footnote{We can alternatively define the prices as $p_{i} \eqdef \sup \{ v \in \RR \mid \bar{\varphi}_{i}(v) \leq T \} = \inf \{ v \in \RR \mid \bar{\varphi}_{i}(v) > T \}$. Both definitions will induce different prices $\bp = \{p_{i}\}_{i \in [n]}$ but the same bounds $\calC_{\MA / \UIVVIP} = 2$ and $\calC_{\DRB / \UIVVIP} = 3$.} \\
where the equality holds since each $\bar{\varphi}_{i}$ is an increasing function.
\end{flushleft}
\end{definition}

Now we are ready to elaborate on how to unify and refine Steps~\eqref{eq:step_2} and \eqref{eq:step_3}, thus obtaining our benchmark-tight revenue guarantee $\calC_{\DRB / \UIVVIP} = 3$.

\vspace{.1in}
\noindent
{\bf Unifying Steps~\eqref{eq:step_2} and \eqref{eq:step_3}.}
To clarify the differences between our approach and the previous works \cite{CHK07,CHMS10}, let us recap the previous proof of Step~\eqref{eq:step_2}, as follows.

\begin{theorem}[Revenue Guarantees of {\UIVVSPP} \cite{CHK07,CHMS10}]
\label{thm:MA_SPP}
\begin{flushleft}
Against {\MyersonAuction}: \\
(i)~{\UniformIronedVirtualValueSequentialPostedPricing} with a {\UIVV}-threshold $T = \frac{1}{2} \cdot \MA$ of one half of the {\MyersonAuction} revenue achieves a $\calC_{\MA / \UIVVSPP} = 2$ approximation. \\
(ii)~This ratio is optimal for \textbf{all} stopping rules in the order-oblivious model.
\end{flushleft}
\end{theorem}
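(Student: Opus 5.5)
Part~(i) is proved by the standard threshold prophet-inequality argument carried out in ironed-virtual-value space. Part~(ii) is proved by a two-buyer lower-bound instance.

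\textbf{Part (i): the reduction.} Myerson's theory (\Cref{subsec:distribution}) gives
\[
\MA \;=\; \E\bigl[\max_{i}\bar{\varphi}_{i}(v_{i})^{+}\bigr].
\]
For any deterministic prices, the revenue of a single-item mechanism equals the expected ironed virtual value of the winner. This requires that each price sit at the boundary of an ironing interval, and the UIVV prices of \Cref{def:UIVV_prices} are of this form. It therefore suffices to lower-bound
\[
\E\bigl[\bar{\varphi}_{\mathrm{win}}(v_{\mathrm{win}})\bigr]
\]
under the worst arrival order. Here buyer $i$ accepts exactly when $\bar{\varphi}_{i}(v_{i})\ge T$, up to a measure-zero tie at the boundary, which I will handle by the sup/inf convention in the definition.

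\textbf{Part (i): the threshold bound.} Write $X_i \eqdef \bar{\varphi}_{i}(v_{i})$; these are independent. Let $q \eqdef \Pr[\max_i X_i \ge T]$. Decompose the winner's virtual value as $T + (X_{\mathrm{win}} - T)$.
\begin{itemize}
\item The first term contributes exactly $T\cdot q$, regardless of the order.
\item For the second term, buyer $i$ wins with $X_i \ge T$ whenever no one else is accepted. For any order, this happens with probability at least $1-q$, independently of $X_i$. Hence the surplus term is at least $(1-q)\sum_i \E[(X_i-T)^+]$, which is at least $(1-q)\,\E[(\max_i X_i - T)^+]$.
\item Finally, $\E[(\max_i X_i - T)^+]$ is at least $\MA - T$.
\end{itemize}
Combining, the revenue is at least $Tq + (1-q)(\MA - T)$. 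With $T = \tfrac12\MA$ this equals $\tfrac12\MA$ for every $q$. By \Cref{prop:representative} the same bound transfers to order-oblivious SPP. The delicate point is the ironing and tie-breaking at the threshold, i.e., ensuring that the acceptance events correspond exactly to $\{X_i\ge T\}$ in the ironed sense. If some atom must be split, I would use randomized acceptance only at the boundary for the revenue accounting.

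\textbf{Part (ii): optimality of the factor $2$.} Take the tight instance from \Cref{footnote:tight_instance}: $F_1$ is a point mass at $1$, and $F_2$ is the equal-revenue distribution truncated at a large $H$. I would use the truncated version so that $\MA$ is finite, namely $\MA \to 2$. The revenue of the second buyer then comes almost entirely from the rare high-value event, through ironed virtual values.

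Since the adversary may choose the order, place buyer~2 first. Any stopping rule, and in particular any price $p_2$ for buyer~2, then yields at most about $1$:
\begin{itemize}
\item Stopping on buyer~2 at a price $p$ earns $p\cdot\Pr[v_2\ge p] \le 1$.
\item Otherwise the mechanism proceeds to buyer~1 and earns $1$.
\end{itemize}
Since $\Pr[v_2\ge p]=1/p$, any combination gives at most $1\cdot\tfrac1p + 1\cdot(1-\tfrac1p)$. This overestimates, so I would refine the computation. Stopping with $v_2\ge p$ yields $p$ with probability $1/p$, giving $1$; continuing otherwise yields $1-1/p$. The total approaches $2$ as well, so this instance fails for order-obliviousness.

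Hence I would instead use the standard prophet-inequality lower-bound instance: a deterministic value $1$, followed by a value $1/\varepsilon$ with probability $\varepsilon$ and $0$ otherwise, placed so that the deterministic buyer arrives first. Then stopping earns $1$, continuing earns $1$, and $\MA = 2-\varepsilon$. Converting it into a valid value-distribution instance, where virtual values equal the values via the appropriate distributions, is the step I expect to be the main obstacle.
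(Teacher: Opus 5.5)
Your part (i) is correct and follows the paper's route. Revenue equivalence turns the UIVV sequential-posted-pricing revenue into ironed virtual welfare. The threshold argument with $T=\frac12\MA$, which the paper only cites, then gives $Tq+(1-q)(\MA-T)=\frac12\MA$ for every arrival order. Two small corrections:
\begin{itemize}
\item \Cref{prop:representative} plays no role here. It passes from sequential posted pricing to item pricing, and your bound already holds for every order, hence for the order-oblivious model.
\item The tie at the threshold need not have measure zero; in the paper's linear instances essentially all mass sits at virtual value exactly $T$. Your argument does not need acceptance sets exactly $\{X_i\ge T\}$, though. Any acceptance set $A_i$ with $\{X_i>T\}\subseteq A_i\subseteq\{X_i\ge T\}$ that is constant on ironing intervals satisfies $(X_i-T)\cdot\mathbbm{1}(A_i)=(X_i-T)^+$, so the same bound follows.
\end{itemize}

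The genuine gap is part (ii), which you leave unproved.

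Your first instance is a point mass at $1$ plus an equal-revenue buyer truncated at $H$. This is exactly the tight instance the paper points to in \Cref{footnote:tight_instance}; the paper credits (ii) to prior work rather than proving it. You discarded this instance because you analysed the wrong order. Putting the random buyer first is the order most favorable to the seller. In the order-oblivious model the adversary picks the worst order, namely the deterministic buyer first. With that order, any stopping rule, deterministic or randomized, has only two options:
\begin{itemize}
\item It sells to buyer~1 at a price at most $1$.
\item It passes buyer~1 and then faces buyer~2 alone, where it earns at most $\max_p p\cdot\Pr[v_2\ge p]=1$.
\end{itemize}
Hence $\SPP^{*}\le 1$, while $\MA=2-1/H\to 2$. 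Your conclusion that ``this instance fails for order-obliviousness'' is therefore wrong.

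Likewise, the ``conversion'' you flag as the main obstacle does not exist. The two-point distribution with value $1/\varepsilon$ w.p.\ $\varepsilon$ and $0$ otherwise is already a valid value distribution. Its ironed revenue curve is $\min(q/\varepsilon,1)$, so its ironed virtual values are $1/\varepsilon$ w.p.\ $\varepsilon$ and $0$ otherwise. This gives $\MA=2-\varepsilon$: sell to buyer~2 at $1/\varepsilon$ when his value is high, and otherwise to buyer~1 at $1$. The same two-case argument with the deterministic buyer first bounds the revenue by $1$.
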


\begin{proof}[Proof (The Upper-Bound Part)]
Given a specific threshold $T > 0$ and an arbitrary arrival order $\pi \in \Pi_{n}$ for {\SequentialPostedPricing}, the prices $\bp = \{p_{i}\}_{i \in [n]}$ from \Cref{def:UIVV_prices} satisfy that ``each buyer $i \in [n]$ gets allocated with the same probability, under any two values $v_{i}$ and $\tilde{v}_{i}$ with the same ironed virtual value $\bar{\varphi}_{i}(v_{i}) = \bar{\varphi}_{i}(\tilde{v}_{i})$.''
Thus, we can apply the revenue equivalence \cite{M81,CDW21} (see \Cref{prop:revenue_equivalence} for a formal statement).
I.e., the {\UIVVSequentialPostedPricing} revenue $=$ its ironed virtual welfare, and its revenue guarantee reduces to the order-oblivious prophet inequalities \cite{KS78,S84,KW19}.
More precisely, a threshold of $T = \frac{1}{2} \cdot \MA$ is well-known to $2$-approximate the optimal ironed virtual welfare $=$ the {\MyersonAuction} revenue.
\end{proof}

\Cref{thm:MA_SPP} in combination with ``$\MyersonAuction \geq \frac{1}{2} \cdot \DualityRelaxationBenchmark$'' \cite{CMS15,CDW21} gives the previous revenue guarantee $\calC_{\DRB / \UIVVSPP} \leq 4$ of {\UIVVSequentialPostedPricing} against {\DualityRelaxationBenchmark}.

An astute reader may already note that, for our sake of the revenue gap $\calC_{\DRB / \UIVVSPP}$ between {\DualityRelaxationBenchmark} and {\UIVVSequentialPostedPricing}, the ``right'' approach shall avoid the intermediary, {\MyersonAuction}, and instead attack this revenue gap directly.
This is precisely our main technical ingredient, which we call {\em benchmark-based} prophet inequalities.

\begin{theorem}[Revenue Guarantees of {\UIVVSPP}]
\label{thm:DRB_SPP}
\begin{flushleft}
Against {\DualityRelaxationBenchmark}: \\
(i)~{\UniformIronedVirtualValueSequentialPostedPricing} with a {\UIVV}-threshold $T = \frac{1}{3} \cdot \DRB$ of one third of the {\DualityRelaxationBenchmark} revenue achieves a $\calC_{\DRB / \UIVVSPP} = 3$ approximation. \\
(ii)~This ratio is optimal for \textbf{deterministic ironed-virtual-value-based} and/or \textbf{(arbitrary) uniform-ironed-virtual-value} stopping rules in the order-oblivious model.\footnote{\label{footnote:tie}By a \textbf{deterministic ironed-virtual-value-based} stopping rule, we mean: First, the seller chooses a deterministic set of {\em acceptable} ironed virtual values $\Phi_{i} \subseteq \R$ for every buyer $i \in [n]$, i.e., different possible values $v_{i} \sim F_{i}$ with the same ironed virtual value either all are acceptable $\bar{\varphi}_{i}(v_{i}) \in \Phi_{i}$ or all are unacceptable $\bar{\varphi}_{i}(v_{i}) \notin \Phi_{i}$. (So a buyer $i \in [n]$ on his arrival takes the item if and only if $\bar{\varphi}_{i}(v_{i}) \in \Phi_{i}$ and no earlier buyer had taken the item.) Then, the adversary chooses a worst-case arrival order $\pi^{*} \in \Pi_{n}$ against this stopping rule, i.e., the acceptance sets $\{\Phi_{i}\}_{i \in [n]}$. \\
By an \textbf{arbitrary uniform-ironed-virtual-value}, we mean: First, the seller chooses an arbitrary ironed-virtual-value threshold $T$ (which can be randomized) and posted prices $\bp = (p_{i})_{i \in [n]}$ such that $\bar{\varphi}_{i}(p_{i}) = T$ (which can be randomized even conditioned on $T$). Then, the adversary chooses a worst-case arrival order $\pi^{*} \in \Pi_{n}$ against the distributional information of $\bp$.}
\end{flushleft}
\end{theorem}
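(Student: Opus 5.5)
The plan is to avoid {\MyersonAuction} altogether and compare the {\UIVVSequentialPostedPricing} revenue directly with the revenue formula of {\DualityRelaxationBenchmark}. I use the flow-based form of \cite{CMS15,CDW21}. Write $i^{*} = \argmax_{i} v_{i}$ and $v_{(2)}$ for the second-highest value. Up to the standard tie-breaking, I take the benchmark to be
\[
\DRB ~=~ \E_{\bv}\Big[\max\big\{\bar{\varphi}_{i^{*}}(v_{i^{*}})^{+},\; v_{(2)}\big\}\Big],
\]
i.e., the ``mixture'' of a Myerson-type term and a second-price-type term mentioned in Step~\eqref{eq:step_3}.

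\textbf{Lower bound on the mechanism.} Fix a threshold $T$ and the prices from \Cref{def:UIVV_prices}. Let $P \eqdef \Pr[\exists i:\ \bar{\varphi}_{i}(v_{i}) \geq T]$ be the probability that some buyer accepts. By revenue equivalence, the revenue of {\UIVVSequentialPostedPricing} under any order $\pi$ equals its expected ironed virtual welfare. The usual prophet-inequality accounting then gives, for the worst order,
\[
\SPP^{*} ~\geq~ T\cdot P + (1-P)\cdot \E\Big[\textstyle\sum_{i}(\bar{\varphi}_{i}(v_{i})-T)^{+}\Big].
\]
This step uses only earlier results and should be routine.

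\textbf{Upper bound on the benchmark.} I would show $\DRB \leq 3T$ plus terms that the display above dominates. I split on the event $E_{\geq 2}$ that at least two buyers have $\bar{\varphi}_{i}(v_{i}) \geq T$ (equivalently $v_{i} \geq p_{i}$).

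For the first term, $\bar{\varphi}_{i^{*}}^{+} \leq T + \sum_{i}(\bar{\varphi}_{i}-T)^{+}$.

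For the second-price term, off $E_{\geq 2}$ at most one item clears its price. I will bound $v_{(2)}$ by comparing it with a price $p_{j}$ of an item that does not clear, together with the elementary inequality $p_{j}\Pr[v_{j}\ge p_{j}] \le \cdots$. On $E_{\geq 2}$ itself, I intend to charge $v_{(2)}$ to $T$ via a revenue-curve argument: the area under the ironed revenue curve above the quantile where $\bar{\varphi}_{j}=T$ is at most $T$ times that quantile plus the virtual surplus above $T$.

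Summing, I aim for an inequality of the shape
\[
\DRB ~\leq~ T + 2T\cdot P + \E\Big[\textstyle\sum_{i}(\bar{\varphi}_{i}-T)^{+}\Big].
\]
With $T=\frac{1}{3}\DRB$, this should combine with the mechanism-side bound to give $\SPP^{*}\geq \frac{1}{3}\DRB$. For this, it must be arranged that the $E[\sum(\bar{\varphi}_{i}-T)^{+}]$ weight and the $P$ weight trade off exactly, possibly by proving the bound separately for each fixed value of $P$.

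\textbf{Main obstacle.} I expect the hardest step to be controlling the second-price term $v_{(2)}$ by quantities tied to the threshold $T$ alone rather than to {\MyersonAuction}. That term is exactly where the old proof lost a factor by bounding it by a separate {\SecondPriceAuction} revenue. My first attempt is a fully constructive pointwise argument: for every value profile, charge the benchmark to the posted-price events. If that fails, I would fall back to a worst-case reduction. Since both sides are linear in each $F_{i}$ given $T$, it suffices to check extreme instances, namely triangular or point-mass-plus-equal-revenue distributions, and then verify the inequality by direct calculation.

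\textbf{Part~(ii), tightness.} I would construct symmetric instances: i.i.d.\ distributions made of a point mass plus a truncated equal-revenue tail, with $n\to\infty$. On these instances, every deterministic ironed-virtual-value-based acceptance rule, and every (possibly randomized) uniform-ironed-virtual-value rule facing the worst order, earns at most $(\frac13+o(1))\DRB$. The computation compares the two regimes: either the rule accepts the point mass, and then the adversary orders buyers so that the low value is sold, or it rejects the point mass and loses the $v_{(2)}$ mass.
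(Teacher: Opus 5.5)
Your benchmark-side target is where the argument breaks. Write $S \eqdef \E[\sum_{i}(\bar{\varphi}_{i}(v_{i})-T)^{+}]$. The inequality $\DRB \le T + 2TP + S$ is false, even at $T = \frac{1}{3}\DRB$. Take $n$ i.i.d.\ truncated equal-revenue items (\Cref{exp:DRB_OIP}). Then $\DRB = 2n$ and $T = \frac{2n}{3}$. Every {\UIVV} price sits at the truncation point, so $P \to 0$ and $S \to n$, and the right-hand side is $\frac{5n}{3} < 2n$.

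In this instance essentially all of the second-price mass $\SPA = n$ lives on the event that no item clears, and all prices are essentially infinite. So bounding $v_{(2)}$ by a non-clearing price $p_{j}$, or charging it to $T$ or $P$, cannot work; it has to be charged to the above-threshold surplus $S$.

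Even if your inequality held, it would not suffice. Combined with your (correct) mechanism bound $\SPP^{*} \ge TP + (1-P)S$, it only gives $\SPP^{*} \ge (P + 2(1-P)^{2})\,T$, which is $\frac{7}{8}T$ at $P = \frac{3}{4}$.

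The shape that works is $\DRB \le T + 2S$, with no $P$-term. At $T = \frac{1}{3}\DRB$ it forces $S \ge T$, and then $\SPP^{*} \ge TP + (1-P)T = T$ for every $P$. This is what the paper's reductions effectively establish. The missing idea is the revenue monotonicity of $\DRB$ under stochastic dominance (\Cref{lem:revenue_monotonicity}). It lets one:
\begin{itemize}
\item replace each curve by its concave envelope,
\item lift the below-$T$ virtual values to $T^{-}$, and
\item replace the above-$T$ part by the slope-$T$ tangent plus an atom at infinity.
\end{itemize}
Each step raises $\DRB$ while leaving the posted-price revenue and $S$ unchanged. On the resulting semi-linear instance $S = \sum_{i} R_{i}(0)$, and \Cref{lem:revenue_formula} gives $\DRB = \sum_{i} R_{i}(0) + \SPA \le \sum_{i} R_{i}(0) + \MA = 2\sum_{i} R_{i}(0) + T$. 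The paper finishes through {\scale}, {\perturb} and linear instances, but these reductions are what yield the inequality.

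Your fallback does not replace this step. $\DRB$ is not linear in $F_{i}$: it involves $\bar{\varphi}_{i}$ and the ranking of values, and $T$ itself depends on the instance. So there is no extreme-point reduction available.

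For part~(ii), symmetric i.i.d.\ instances do handle deterministic ironed-virtual-value-based rules. They cannot handle randomized uniform-ironed-virtual-value rules, because the adversary fixes the order knowing only the distribution of $\bp$. On the paper's instance $R(q) = \frac{1}{n} + q$ (\Cref{exp:DRB_SPP}), every finite value has ironed virtual value $1$. So with $T = 1$ the seller may give the minimum price $\frac{n+1}{n}$ to a uniformly random item and a huge price to all the others. Against any fixed order the cheap item arrives at a uniformly random position, and each earlier item contributes about $\frac{1}{n}$. The expected revenue is about $\frac{3}{2}$, against $\DRB = 3$.

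The paper covers this class with the asymmetric slopes $1 + \frac{i}{n^{2}}$ of \Cref{exp:DRB_SPP_uniform}, with item $n$ arriving first. Either item $n$ is always accepted, giving revenue at most $1 + \frac{2}{n}$. Or $T \ge 1 + \frac{1}{n}$, which forces infinite prices on every other item.

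The specific distribution also matters. It must be the shifted equal-revenue one, $v = 1 + \frac{1}{n}\cdot(\text{equal-revenue})$, whose linear curve gives $\SPA = 2$ and $\DRB = 3$. A plain truncated equal-revenue distribution gives ratio only $2$ (\Cref{exp:DRB_OIP}).
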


As mentioned, {\DualityRelaxationBenchmark} is a mixture of the revenues from {\MyersonAuction} and {\SecondPriceAuction}. This means, in our approach to \Cref{thm:DRB_SPP}, we must frequently switch between the value space and the ironed virtual value space.
To prove \Cref{thm:MA_SPP}, in contrast, the revenue equivalence allows us to concentrate solely on the ironed virtual value space and thus apply the classical prophet inequalities.
This difference incurs many technical difficulties (and may explain why the previous works \cite{CMS15,CDW21} did not investigate this revenue gap directly but leveraged {\MyersonAuction} as an intermediary).

Nonetheless, we develop many new techniques here and give a {\em fully constructive} proof of \Cref{thm:DRB_SPP}.
More concretely, we will prove more and more necessary conditions for the worst-case instances, thus gradually shrinking the optimization space and finally deriving the exact worst-case instances and the tight bound $\calC_{\DRB / \UIVVSPP} = 3$.
Such a fully constructive proof might be more structure-revealing than the nonconstructive proofs of the classical prophet inequalities.\footnote{I.e., \cite{KS78,S84,KW19} set the threshold as $T = \frac{1}{2} \cdot \MA$ (say) constructively, but proved the upper bound $\calC \leq 2$ non-constructively using the probabilistic method and gave the matching lower-bound instances separately.}

Such constructive proofs have made many success stories in Single-Item Mechanism Design. There is a long line of works on proving tight revenue/welfare guarantees of both truthful and nontruthful single-item mechanisms \cite{CGL14,AHNPY19,CFHOV21,AB20,JLTX20,JLQTX19,JLQ19,HJL20,JJLZ22,CST18,BMTT19,JL22,JL23}.
To our knowledge, however, our work is the first to adopt such techniques to Multi-Item Mechanism Design. 
In this regard, we look forward to more future works that technically bridge these two highly relevant areas.

Given \Cref{prop:representative} and the lower-bound part of \Cref{thm:DRB_SPP}, we can also conclude that the single-dimensional representative approach cannot establish a bound better than $3$ for any {\em ironed-virtual-value-based} {\ItemPricing}.
(As mentioned, all the previous works \cite{CHK07,CHMS10,CMS15,CDW21} and many others beyond the unit-demand single-buyer setting restricted their attention to this family of mechanisms.)

Further, we remark that the {\DualityRelaxationBenchmark} considered here actually is the instantiation of the benchmark from \cite{CMS15,HH15,CDW21} to the unit-demand single-buyer setting -- the original benchmark accommodates much more general settings.
It is conceivable that our main technical ingredient -- the benchmark-based prophet inequality (\Cref{thm:DRB_SPP}) -- also leaves enough room for future generalizations, which shall give improved or even benchmark-tight approximation ratios of other simple multi-item mechanisms.

\subsection{Proof overview}
In this section, we provide an outline of the proof for our main theorem.

One of our first key observations is that the {\DualityRelaxationBenchmark} for a single unit-demand buyer with independent values is monotone with respect to the stochastic dominance of value distributions; see \Cref{lem:revenue_monotonicity,fig:revenue_monotonicity} for more details. This property turns out to be very useful in our proof.
Actually, one of the reasons why Multi-Item Mechanism Design is much more difficult than Single-Item Mechanism Design is that the optimal mechanisms (in general) no longer satisfy revenue monotonicity \cite{HR15}. It is very difficult to compare to a benchmark that itself is not even monotone.
The proof for the monotonicity of {\DualityRelaxationBenchmark} is not that difficult, but also nontrivial since it is not point-wise dominance. To the best of our knowledge, this monotonicity has not been pointed out in the literature before.\footnote{Rather, approximate versions of the revenue monotonicity of (optimal) multi-item mechanisms have been established in \cite{BILW20,RW18,Y18,CCW23}.} We believe that this observation may find its applications in other problems.

Based on this revenue monotonicity, we can do a few transformations of the given instance to get instances with a worse revenue gap.
Our first reduction {\iron} that transforms a given instance to a regular instance is standard and natural, as {\UIVVSequentialPostedPricing} uses \emph{ironed} virtual values in the first place. After that, we simply assume that all value distributions are regular.
The next two transformations are crucial and also quite intuitive. They preserve the revenue of {\UIVVSequentialPostedPricing} ({\UIVVSPP}) but increase the revenue of  {\DualityRelaxationBenchmark} by its revenue monotonicity, thus inducing a larger revenue gap after the transformation. Since the revenue of {\UIVVSPP} only depends on the probability that one's value is below or above the given price but not on how much, we can modify the instance so that their virtual values are either slightly above or slightly below the given threshold. In particular, the {\truncate} reduction makes the not-buying part (below the price) only slightly below the given threshold. This will not change the revenue of {\UIVVSPP} but will increase the {\DualityRelaxationBenchmark} since the distribution after the modification stochastically dominates the original one. The {\extend} reduction modifies the buying part (above the price) so that the virtual value is only slightly above the threshold for most probability and at a very high value ($\to \infty$) for a very small probability. Again, this will not change the revenue of {\UIVVSPP}. The fact that the distribution after the modification dominates the previous one seems less obvious than the {\truncate} reduction if one just reads the above description. However, it {\em is} obvious once one looks at the revenue-quantile curve before and after the modification, as the latter revenue-quantile curve is point-wise above the original one.
Such instances after the modification are called {\em semi-linear} instances (\Cref{def:extend}) in our paper, as their revenue-quantile curves are almost linear.

We note that the revenue-quantile curve is the most convenient representation for a value distribution in our proof, and we use it extensively \cite[Chapter~3]{H13}. This is because we need to go back and forth between the value space and the virtual value space; fortunately, both values and virtual values have intuitive geometry representation in revenue-quantile curves. If we use value CDF (say) instead, it is inconvenient and less intuitive to represent virtual values.
Another notational advantage of using revenue-quantile curves rather than value distributions directly is that we can implicitly introduce the value of $\infty$, which is both mathematically sound and notational simple. This is achieved by allowing $R(0) > 0$, i.e., we have strictly positive revenue by a sale probability of $0$, which means that the price (and the value) is $\infty$.
However, this convention may cause some confusion if the reader is less familiar with it. Thus, let us point it out in this outline to avoid potential confusion since we use it very often in our proof. If one does not like infinite values, this can always be interpreted as a sufficiently high value $H \to \infty$ with a sufficiently small probability $R(0) / H \to 0^{+}$. Once the reader gets familiar with this, it is convenient to use revenue-quantile curves with $R(0) > 0$.

These transforms and the above arguments omit a subtlety; they assume that the threshold remains the same after the modifications of the instances.
This is inaccurate. Actually, the threshold is selected after the instance is given, which is set to be $T = \frac{1}{3} \cdot \DRB$ in our final {\SequentialPostedPricing}. To address this subtlety, we introduced the {\scale} transform. This reduction is controlled by a continuous parameter
$\gamma \in [0,1]$ which will reduce the benchmark revenue and the {\UIVVSPP} revenue simultaneously and continuously. But by the mean value theorem, we prove that there exist a $\gamma^* \in [0,1]$ such that the instance after the {\scale} reduction has the same benchmark revenue  as the most original instance, thus the same threshold $T = \frac{1}{3} \cdot \DRB$. This reduction remedies the mentioned issue. Thus, compared with the most original instance, the new instance (after {\scale}) has the same benchmark revenue but a lower {\UIVVSPP} revenue, thus a larger revenue gap.

Our last reduction {\perturb} transforms a semi-linear instance into a {\em linear instance} (\Cref{def:perturb}). To achieve this, we can take a different but equivalent viewpoint of the above semi-linear instances. Namely, their virtual values are exactly equal to the threshold, rather than slightly above or slightly below it. The take-it-or-leave-it decisions in {\UIVVSequentialPostedPricing} are determined by a tie-breaking rule that yields the same posted prices and take-it-or-leave-it decisions, hence the same revenue as before. From this viewpoint, it is already a linear instance. These two different viewpoints do not change the benchmark revenue, as the tie-breaking rule only affects the {\UIVVSPP} revenue. The key observation here is to consider two extreme-case tie-breaking rules, namely ``always take it'' vs.\ ``always leave it'' in ties.
Regarding the {\UIVVSPP} revenues, the worse one between these two tie-breaking rules {\em is} the worst one among all possible tie-breaks. Thus, it is sufficient to only consider these two extreme cases.
For these two families of (extreme-case) linear instances, we can easily obtain the formulae for both the benchmark revenues and the {\UIVVSPP} revenues. Thus, we can derive an upper bound on the revenue gap $\calC_{\DRB/\UIVVSPP}$ after optimizing the parameters.

We notice that, for revenue maximization in {\em uniform pricing} mechanisms, breaking ties in favor of buying (trivially) is always good. But this is not the case for {\UIVVItemPricing}, since the same virtual value may correspond to many different values/prices. (Our linear instances are typical of this phenomenon.) The so-called ``tie-breaking rule'' throughout our proof is not an actual tie-breaking rule. Rather, it refers to different choices (as the {\UIVV} posted prices) of different values that have the same ironed virtual value. It is unclear which choice, ``higher prices with lower sale probabilities'' vs.\ ``lower prices with higher sale probabilities'', is better for revenue maximization.
To our knowledge, this subtlety about different choices of {\UIVV} posted prices (or, generally, {\em ironed-value-value-based} posted prices) has not been pointed out in the literature.

Whether this subtlety has effects on revenue maximization is an important question to study.
For the {\em upper bounds}, we remark that all mentioned revenue guarantees hold for any choice of {\UIVV} posted prices (i.e., even the worst-case {\UIVV} posted prices).
For the {\em lower bounds}, however, this subtlety seems to influence {\ItemPricing} more than {\SequentialPostedPricing}.
(i)~The tight revenue guarantee of $3$ for {\SequentialPostedPricing} is quite robust to different choices of {\UIVV} posted prices; see the statement of \Cref{thm:DRB_SPP}, \Cref{footnote:tie}, and \Cref{subsec:SPP_LB} for more details.
(ii)~The revenue guarantee for {\ItemPricing} is trickier.
On the one hand, if we insist on the ``naive'' tie-breaking rule in \Cref{def:UIVV_prices} for {\UIVV} posted prices, namely the smallest/largest values that correspond to the given {\UIVV} threshold $T$, then the bound of $3$ {\em is} tight. (Also, if we follow the single-dimensional representative approach, this bound of $3$ is the best possible.)
On the other hand, if we instead adopt (say) the ``best-{\UIVV}'' posted prices, then we cannot find a matching lower-bound instance.
In this regard, it is interesting for future works to study the revenue guarantees of the ``best-{\UIVV}'' {\ItemPricing}; see \Cref{subsec:SPP_LB} for more details. 







\subsection{Further related works}
\label{subsec:related_works}

Beyond the paradigmatic unit-demand single-buyer setting, Multi-Item Mechanism Design has been systematically studied in other parallel and more general settings, which can be categorized as the single- vs.\ multi-buyer settings, and based on the underlying valuation families:
\begin{align*}
    \text{unit-demand} ~\&~
    \text{additive} \subseteq
    \text{matroid-rank} \subseteq
    \text{constrained-additive} \subseteq
    \text{XOS} \subseteq
    \text{subadditive}.
\end{align*}
See, e.g., \cite{CZ17} for their formal definitions.
In particular, ``unit-demand'' and ``additive'' are the two most basic valuation families; one is neither a subset nor a superset of the other.

This flexibility in modeling has made Multi-Item Mechanism Design a centerpiece in the intersection of Theoretical Computer Science and Mathematical Economics.
As we quote from Cai and Papadimitriou \cite{CP14}: ``{\em ...\ multi-item mechanisms have played a role akin to that enjoyed by the Traveling Salesman Problem in combinatorial optimization: A paradigmatic hard nut on which all new ideas must be tried\ ...}''
It is impossible to review such extensive literature here thoroughly.
Instead, we list some of these works in \Cref{tab:multi-item} for guidance.

\begin{table}[h]
    \centering
    \begin{tabular}{|l|l|l|}
        \hline
        \rule{0pt}{13pt} & single-buyer & multi-buyer \\ [2pt]
        \hline
        \hline
        unit-demand &
        \begin{tabular}{@{}l@{}}
        \rule{0pt}{12pt}\cite{CHK07,CHMS10,CMS15} \\
        \cite{CD15,HH15,CDW21,JLQTX19} \\ [2pt]
        \end{tabular}
        & \cite{CMS15,KW19,HH15,CDW21} \\ [2pt]
        \hline
        \rule{0pt}{13pt}additive &
        \begin{tabular}{@{}l@{}}
        \rule{0pt}{12pt}\cite{HN17,LY13,BILW20} \\
        \cite{HH15,CDW21,MS21} \\ [2pt]
        \end{tabular}
        & \cite{Y15,HH15,CDW21,DFLSV22} \\ [2pt]
        \hline
        \hline
        \rule{0pt}{13pt}matroid-rank & \cite{CM16,CZ17} & \cite{CM16,CZ17} \\ [2pt]
        \hline
        \rule{0pt}{13pt}constrained-additive & \cite{CM16,CZ17} & \cite{CZ17} \\ [2pt]
        \hline
        \rule{0pt}{13pt}XOS & \cite{RW18,CZ17} & \cite{CZ17,Y18,COZ22} \\ [2pt]
        \hline
        \rule{0pt}{13pt}subadditive & \cite{RW18,CZ17} & \cite{CZ17,DKL20,CC23,CCW23} \\ [2pt]
        \hline
    \end{tabular}
    \caption{A summary of some previous works on Multi-Item Mechanism Design.
    \label{tab:multi-item}}
\end{table}

Very recently, it was shown in \cite{CC23,CCW23} that, even in the (most general) subadditive multi-buyer setting, simple deterministic mechanisms are constant-factor approximately revenue-optimal, for which prophet inequalities again play a pivotal role.
Hence, it is now a natural time to tighten these bounds.
For this purpose, our contributions are threefold:
(i)~the benchmark-tight ratio of {\UIVVItemPricing},
(ii)~a new technique with enough room for future generalizations, benchmark-based prophet inequalities, and
(iii)~showing the limitations of the single-dimensional representative approach.
We hope all our positive and negative results will inspire future works.

\vspace{.1in}
\noindent
{\bf Organization.}
We present the notation and preliminaries in \Cref{sec:prelim}.
The upper-bound parts of \Cref{thm:UIVVIP,thm:DRB_SPP} are proved in \Cref{sec:DTM}, which basically are a benchmark-based $3$-approximation prophet inequality.
Also, the lower-bound parts of \Cref{thm:UIVVIP,thm:DRB_SPP,cor:OIP} are proved in \Cref{subsec:SPP_LB,sec:example}.
How to implement the considered simple mechanisms in polynomial time is relatively simple and is deferred to \Cref{sec:implementation}.

\section{Notation and Preliminaries}
\label{sec:prelim}

This section introduces the notation to be adopted throughout the paper and backgrounds about Bayesian mechanism design.

\vspace{.1in}
\noindent
{\bf Notation.}
Denote by $\R$ the set of all real numbers. For any pair of integers $m \geq n \geq 0$, define the sets $[n] \eqdef \{1, 2, \cdots, n\}$ and $[n: m] \eqdef \{n, n + 1, \cdots, m\}$. Denote by $\mathbbm{1}(\cdot)$ the indicator function.

\subsection{Three equivalent representations of distributions}
\label{subsec:distribution}

We will adopt three equivalent representations for an instance:
(i)~value distributions,
(ii)~revenue-quantile curves, and
(iii)~virtual value distributions.
These three representations mutually admit one-to-one correspondences, so we can always adopt the one that is most convenient for our presentation.

The first representation of an instance is its (nonnegative) $n$-dimensional product value distribution $\bF = \{F_{i}\}_{i \in [n]}$.
Without ambiguity, each one-dimensional value distribution $F_{i}$ also denotes its cumulative density function (CDF).
Given a price-$p$ item, a buyer with a random value $v_{i} \sim F_{i}$ purchases the item with probability $\Pr_{v_{i} \sim F_{i}}[v_{i} \geq p]$ rather than $\Pr_{v_{i} \sim F_{i}}[v_{i} > p]$.
In this regard, we would modify the definitions of the CDF $F_{i}(v) \eqdef \Pr_{v_{i} \sim F_{i}}[v_{i} < v]$, $\forall v \in \RR$ and the inverse CDF $F_{i}^{-1}(y) \eqdef \inf \{v \in \RR \mid F_{i}(v) \geq y\}$, $\forall y \in [0, 1]$.


\begin{figure}[t]
    \centering
    \subfloat[\label{fig:iron_curve_nonconcave}
    {\em non-concave} revenue-quantile curves]{
    \includegraphics[width = .48\textwidth]
    {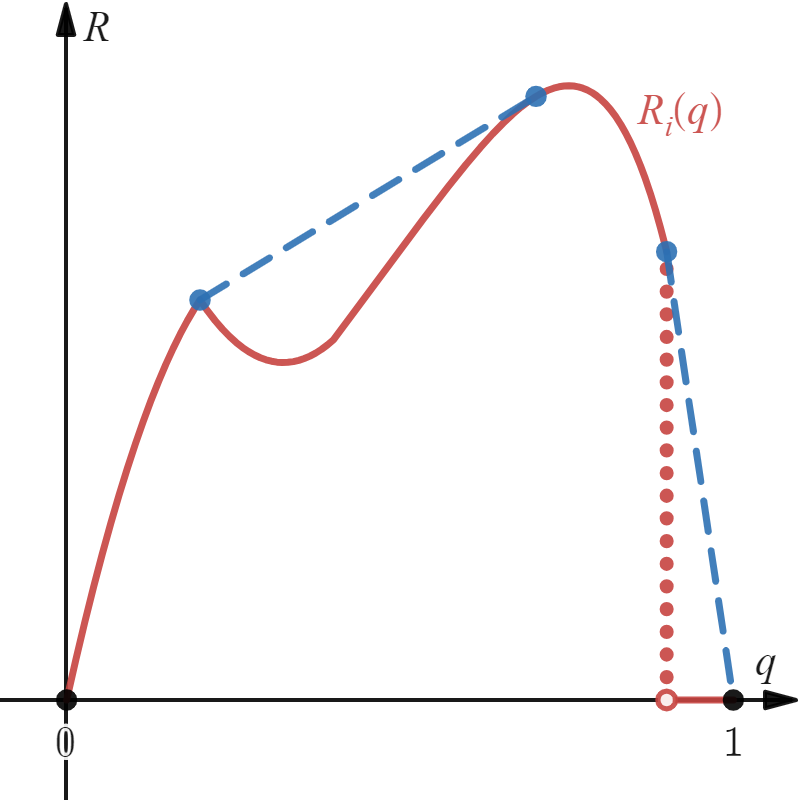}}
    \hfill
    \subfloat[\label{fig:iron_curve_concave}
    {\em concave} revenue-quantile curves]{
    \includegraphics[width = .48\textwidth]
    {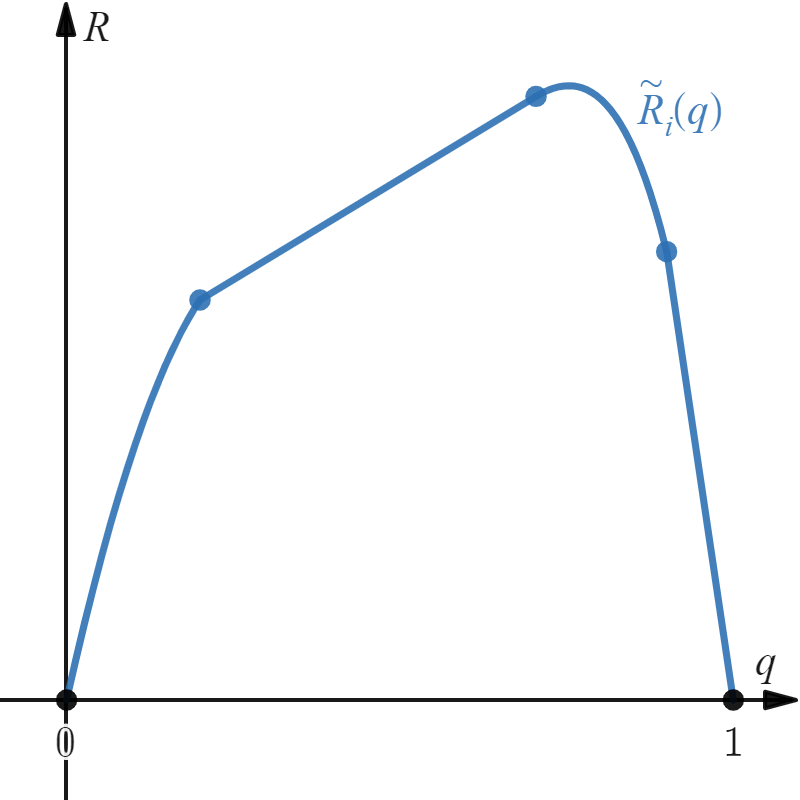}} \\
    \subfloat[\label{fig:iron_CDF_undefined}
    {\em badly-defined} virtual value CDF's]{
    \includegraphics[width = .48\textwidth]
    {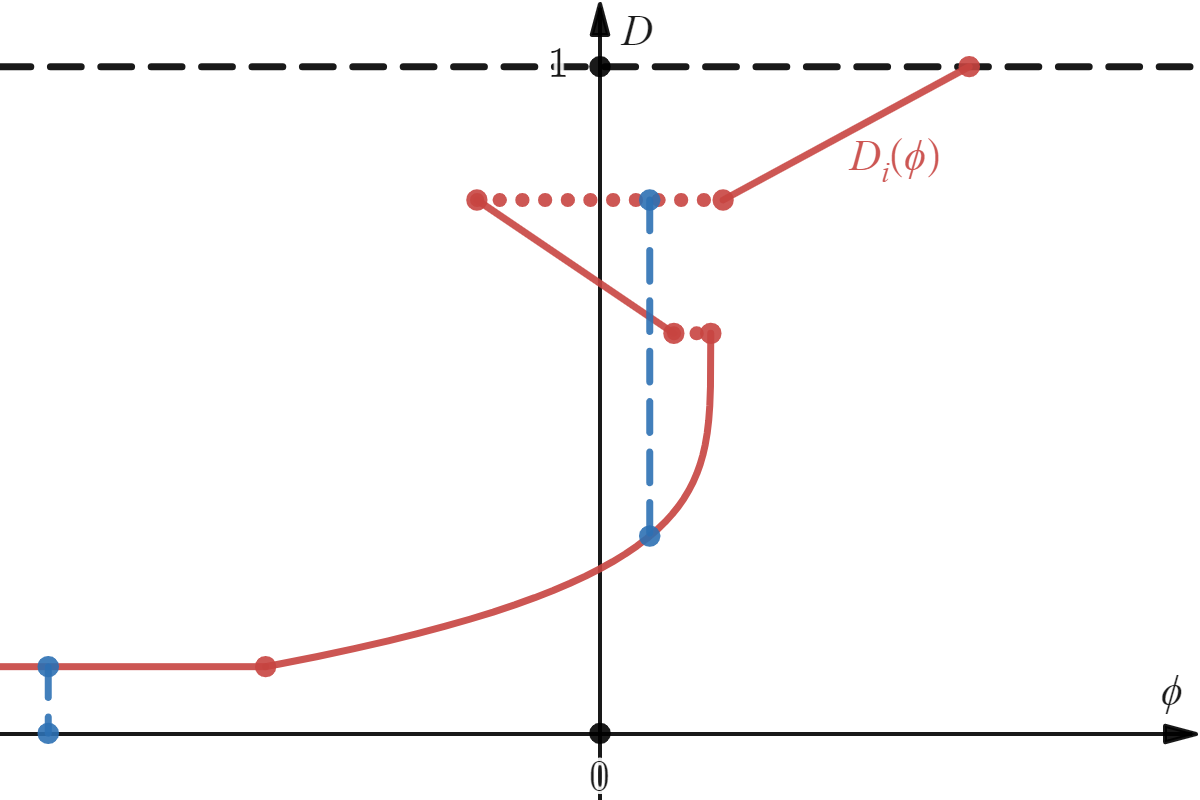}}
    \hfill
    \subfloat[\label{fig:iron_CDF_defined}
    {\em well-defined} virtual value CDF's]{
    \includegraphics[width = .48\textwidth]
    {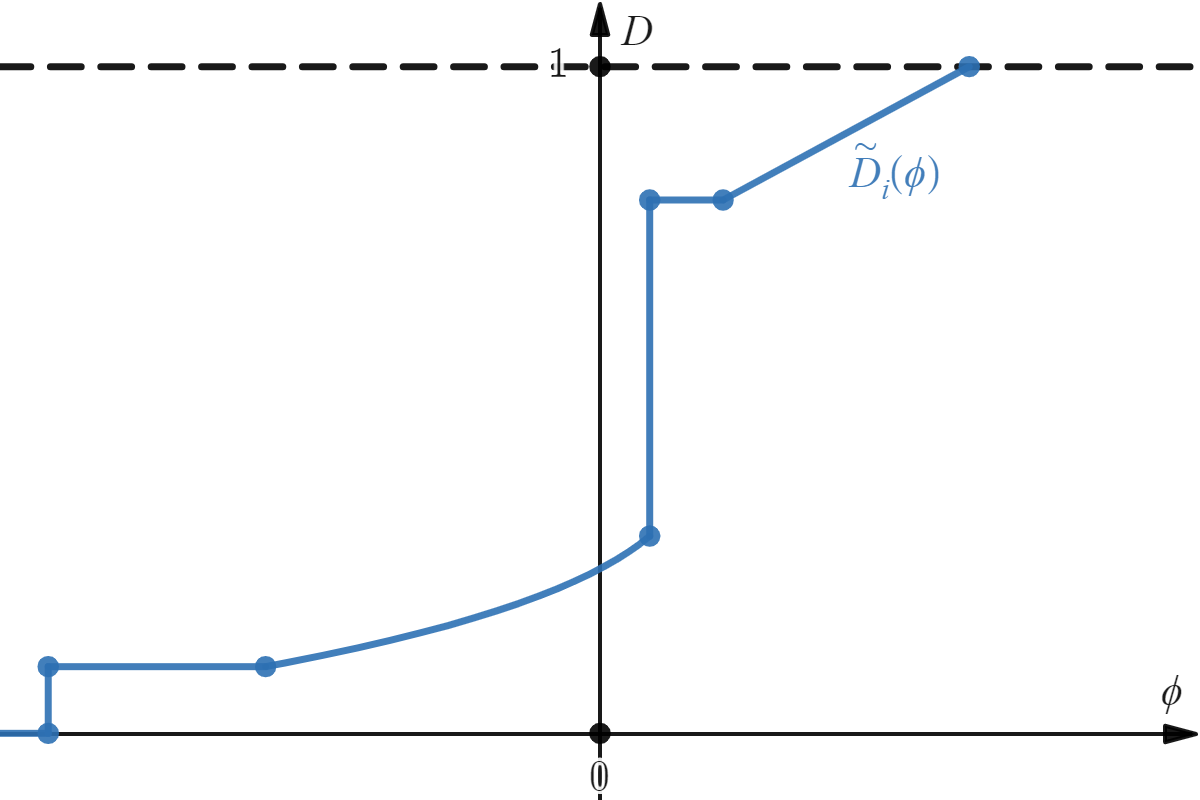}}
    \caption{Diagrams of revenue-quantile curves, virtual value CDF's, and ``ironing'' \cite{M81}.
    \label{fig:distribution}}
\end{figure}

The second representation of an instance is its revenue-quantile curves $\bR = \{R_{i}\}_{i \in [n]}$, which are determined by the value CDF's $\bF = \{F_{i}\}_{i \in [n]}$ as follows (cf.\ \Cref{fig:iron_curve_nonconcave,fig:iron_curve_concave}).
\begin{align}
    R_{i}(q) ~\eqdef~ q \cdot F_{i}^{-1}(1 - q),~ \forall q \in [0,\, 1].
    \label{eq:value_to_curve}
\end{align}
In particular, $R_{i}(0) \eqdef \lim_{q \to 0^{+}} R_{i}(q)$ when the value CDF $F_{i}$ has an unbounded support, namely $F_{i}(v) < 1$ whenever $v < +\infty$.

The third representation of an instance is its virtual value CDF's $\bD = \{D_{i}\}_{i \in [n]}$, which are determined by the revenue-quantile curves $\bR = \{R_{i}\}_{i \in [n]}$ as follows (cf.\ \Cref{fig:iron_CDF_undefined,fig:iron_CDF_defined}).
\begin{align}
    (\phi,\, D_{i}(\phi)) ~\eqdef~ (R'_{i}(q),\, 1 - q),~ \forall q \in [0,\, 1].
    \label{eq:curve_to_virtual_value}
\end{align}
(In general, the derivatives $R'_{i}(q)$ may not be decreasing functions over $q \in [0,\, 1]$, so \Cref{eq:curve_to_virtual_value} may induce badly-defined virtual value CDF's. However, in case of {\em regular distributions} (introduced below), the $R'_{i}(q)$'s are decreasing over $q \in [0,\, 1]$, so \Cref{eq:curve_to_virtual_value} induces well-defined virtual value CDF's.
Without ambiguity, we still call $\bD = \{D_{i}\}_{i \in [n]}$ virtual value CDF's.)
In particular, when $R_{i}(0) > 0$, by which the value CDF $F_{i}$ must have an unbounded support, we shall interpret $D_{i}$ as taking an infinite virtual value $R_{i}(\eps) / \eps \to +\infty$ with an infinitesimal probability mass $\eps \to 0^{+}$.

Alternatively, through the virtual value functions $\varphi_{i}(v) \eqdef v - \frac{1 - F_{i}(v)}{F'_{i}(v)}$ for $v \in \supp(F_{i})$, we can determine the virtual value CDF's $\bD = \{D_{i}\}_{i \in [n]}$ directly from the value CDF's $\bF = \{F_{i}\}_{i \in [n]}$:
\begin{align}
    (\phi,\, D_{i}(\phi)) ~\eqdef~ (\varphi_{i}(v),\, F_{i}(v)),~ \forall v \in \supp(F_{i}).
    \label{eq:value_to_virtual_value}
\end{align}
A virtual value is always upper bounded by the corresponding value, $\varphi_{i}(v) \leq v$ for $v \in \supp(F_{i})$.

Given \Cref{eq:value_to_curve,eq:curve_to_virtual_value,eq:value_to_virtual_value}, it is easy to see that all the three representations mutually admit one-to-one correspondences. I.e., we can use any one representation to reconstruct the other two.

\vspace{.1in}
\noindent
{\bf Regular Instances and Ironing.}
The family of {\em regular} value distributions (\Cref{def:regular}) was introduced in Myerson's original paper \cite{M81} and plays an important role in Bayesian mechanism design. In particular, Myerson gave a transformation called {\em ironing} (\Cref{def:ironing}) that converts irregular instances $\bF = \{F_{i}\}_{i \in [n]}$ into regular instances $\bar{\bF} = \{\bar{F}_{i}\}_{i \in [n]}$ such that, in the single-item multi-buyer setting, the optimal revenues keep the same.

\begin{definition}[Regular Instances]
\label{def:regular}
\begin{flushleft}
A value distribution $F_{i}$ is {\em regular} when its revenue-quantile curve $R_{i}(q)$ is concave over $q \in [0,\, 1]$, or equivalently, when   its virtual value CDF $D_{i}(\phi)$ is well- defined over $\phi \in \R$.
Likewise, an instance $\bF = \{F_{i}\}_{i \in [n]}$ is {\em regular} when all $F_{i}$'s are {\em regular}.
\end{flushleft}
\end{definition}

\begin{definition}[{Ironing \cite{M81}}]
\label{def:ironing}
\begin{flushleft}
Given a (possibly non-concave) revenue-quantile curve $R_{i}$, the ironed revenue-quantile curve $\bar{R}_{i} \eqdef \Conv(R_{i})$ refers to the concave envelope of $R_{i}$. Then the ironed value CDF $\bar{F}_{i}$, the ironed virtual value function $\bar{\varphi}_{i}$, and the ironed virtual value CDF $\bar{D}_{i}$ all can be reconstructed from this concave envelope $\bar{R}_{i}$. \\
In particular, an ironed virtual value is always upper bounded by the corresponding value, $\bar{\varphi}_{i}(v) = v - \frac{1 - \bar{F}_{i}(v)}{\bar{F}'_{i}(v)} \leq v$ for $v \in \supp(\bar{F}_{i})$.
\end{flushleft}
\end{definition}



\subsection{Bayesian mechanism design}
\label{subsec:mechanism}

In the single-item multi-buyer setting (\Cref{def:SPP}), in addition to {\UIVVSequentialPostedPricing}, we also study the revenue-optimal mechanism {\MyersonAuction} ({\MA}) and the welfare-optimal mechanism {\SecondPriceAuction} ({\SPA}).
\begin{flushleft}
\begin{itemize}
    \item {\MyersonAuction}: The buyer $i_{\MA} \in [n] \cup \{\emptyset\}$ with the highest nonnegative ironed virtual value (using an arbitrary but fixed tie-breaking rule) is allocated and pays the threshold value for continuing to be allocated. \\
    (\Cref{prop:revenue_equivalence}) The resulting revenue formula is $\MA(\bF) = \E_{\bv \sim \bF} \big[ \max(\bar{\bvarphi}(\bv),\, 0) \big]$.
    
    \item {\SecondPriceAuction}: The buyer $i_{\SPA} \in [n]$ with the highest value is allocated \\
    (using an arbitrary but fixed tie-breaking rule) and pays the second-highest value. \\
    The resulting revenue formula is $\SPA(\bF) = \E_{\bv \sim \bF} \big[ \max_{i \in [n]} \big\{ v_{i} \cdot \indicator(i \neq \argmax(\bv)) \big\} \big]$.
\end{itemize}
\end{flushleft}
Both mechanisms and {\UIVVSequentialPostedPricing} are truthful mechanisms and satisfy {\em revenue equivalence} \cite{M81}.
Moreover, the equality condition in \Cref{prop:revenue_equivalence} holds for {\UIVVSequentialPostedPricing} (under the particular prices $\bp$ given in \Cref{def:UIVV_prices}) and {\MyersonAuction}, but in general fails for {\SecondPriceAuction}.

\begin{proposition}[{Revenue Equivalence \cite{M81}}]
\label{prop:revenue_equivalence}
\begin{flushleft}
Given a single-item truthful mechanism $\calM = (\bx, \bpi)$, its revenue formula $\calM(\bF)$ satisfies the following:
\begin{align*}
 \calM(\bF)
 ~=~ \E_{\bv \sim \bF} \big[ \bvarphi(\bv) \cdot \bx(\bv) \big]
 ~\leq~ \E_{\bv \sim \bF} \big[ \bar{\bvarphi}(\bv) \cdot \bx(\bv) \big].
\end{align*}
In particular, the equality in the second step holds if and only if ``each buyer $i \in [n]$ is allocated with the same probability $x_{i}(v_{i}) = x_{i}(\tilde{v}_{i})$, for any two values $v_{i}$ and $\tilde{v}_{i}$ with the same ironed virtual value $\bar{\varphi}_{i}(v_{i}) = \bar{\varphi}_{i}(\tilde{v}_{i})$.''
\end{flushleft}
\end{proposition}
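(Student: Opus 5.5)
We plan to prove \Cref{prop:revenue_equivalence} by the standard Myerson argument. We carry it out buyer by buyer in quantile space, where both $\varphi_{i}$ and $\bar{\varphi}_{i}$ are slopes of the curves $R_{i}$ and $\bar{R}_{i}$.

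First, fix a buyer $i$ and the others' values $\bv_{-i}$. Let $x_{i}(v_{i})$ be the allocation and $\pi_{i}(v_{i})$ the payment. Truthfulness (with individual rationality and zero payment at the lowest type) gives two facts by the usual envelope argument: $x_{i}$ is nondecreasing in $v_{i}$, and
\[
\pi_{i}(v_{i}) ~=~ v_{i}\, x_{i}(v_{i}) - \int_{0}^{v_{i}} x_{i}(t)\, dt .
\]
Next, change variables to quantiles $q = 1 - F_{i}(v_{i})$ and write $x_{i}(q)$, which is nonincreasing in $q$. Taking expectation over $v_{i}$ and exchanging integrals, equivalently integrating by parts against $R_{i}(q) = q\, F_{i}^{-1}(1-q)$, yields
\[
\E_{v_{i}}[\pi_{i}(v_{i})] ~=~ \int_{0}^{1} x_{i}(q)\, R'_{i}(q)\, dq ~=~ \E_{v_{i}}\big[\varphi_{i}(v_{i})\, x_{i}(v_{i})\big].
\]
This uses \Cref{eq:value_to_curve,eq:curve_to_virtual_value}. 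A possible atom at $q=0$, corresponding to $R_{i}(0)>0$ and an infinite value, is handled by the boundary term $x_{i}(0)R_{i}(0)$. This term is identical in the ironed version, so it can be interpreted via the $H\to\infty$ limit. Summing over $i$ and taking expectation over $\bv_{-i}$ gives the first equality.

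For the inequality, apply the same computation with $\bar{R}_{i} = \Conv(R_{i})$ and integrate the difference by parts:
\[
\int_{0}^{1} x_{i}(q)\,\big(R'_{i}(q) - \bar{R}'_{i}(q)\big)\, dq
~=~ \Big[x_{i}(q)\big(R_{i}(q)-\bar{R}_{i}(q)\big)\Big]_{0}^{1} - \int_{0}^{1} \big(R_{i}(q)-\bar{R}_{i}(q)\big)\, dx_{i}(q).
\]
The boundary term vanishes because the concave envelope agrees with $R_{i}$ at the endpoints $q=0,1$. In the integral, $R_{i}-\bar{R}_{i} \le 0$ and $dx_{i} \le 0$, since $x_{i}$ is nonincreasing in $q$. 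So the integrand $(R_{i}-\bar{R}_{i})\,dx_{i}$ is nonnegative, and the whole expression is $\le 0$. This gives $\E[\varphi_{i} x_{i}] \le \E[\bar{\varphi}_{i} x_{i}]$ pointwise in $\bv_{-i}$.

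Equality holds iff $dx_{i}$ puts no mass where $R_{i} < \bar{R}_{i}$. This means $x_{i}$ is constant on each ironing interval, which is exactly a maximal interval on which $\bar{\varphi}_{i}$ is constant because $\bar{R}_{i}$ is linear there. In particular, if $x_{i}$ is constant on every level set of $\bar{\varphi}_{i}$, equality holds. Conversely, equality forces constancy on the open ironed regions, and hence on the level sets up to measure-zero boundary points where $R_{i}=\bar{R}_{i}$.

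We expect the main obstacle to be the measure-theoretic bookkeeping rather than any conceptual step. Three points need care:
\begin{itemize}
\item justifying the integration by parts when $x_{i}$ is merely monotone, which we handle as a Lebesgue--Stieltjes integral;
\item handling atoms of $F_{i}$, where the quantile map is not injective and $R_{i}$ is linear on the corresponding interval;
\item handling the infinite-value convention $R_{i}(0)>0$.
\end{itemize}
We would treat all three by working entirely with the curves $R_{i},\bar{R}_{i}$ and monotone $x_{i}(q)$, where these issues become boundary terms that cancel.
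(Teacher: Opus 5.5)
The paper gives no proof of this proposition; it is quoted from Myerson, so your argument has to stand on its own. The first equality and the inequality follow the standard quantile-space argument and are fine, and so is the ``if'' half of the equality condition.

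The gap is in the converse. Your integration by parts shows that equality holds iff $dx_i$ puts no mass on $\{q : R_i(q) < \bar{R}_i(q)\}$, i.e.\ iff $x_i$ is constant on each ironed interval. You then assert that an ironed interval is ``exactly a maximal interval on which $\bar{\varphi}_i$ is constant.'' That is false. A level set of $\bar{\varphi}_i$ is a maximal interval on which $\bar{R}_i$ is linear. It can contain a positive-measure stretch on which $R_i$ is itself linear and equal to $\bar{R}_i$, and there $dx_i$ may carry mass without changing the integral. So ``hence on the level sets up to measure-zero boundary points'' does not follow.

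In fact the ``only if'' half of the statement is false as written. Take the paper's own linear instance $R(q) = \frac{1}{n} + q$, or an equal-revenue distribution. Then $\varphi \equiv \bar{\varphi}$, so equality holds for every truthful mechanism. Yet a posted price inside the support allocates differently to two finite values with the same $\bar{\varphi}$. The correct characterization is the one your computation actually gives: constancy on the intervals where $R_i < \bar{R}_i$. The level-set condition is only sufficient, and sufficiency is all the paper ever uses.

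A smaller point: your claim that all the bookkeeping reduces to cancelling boundary terms fails when $\supp(F_i)$ has gaps. The payment identity $\pi_i(v_i) = v_i x_i(v_i) - \int_0^{v_i} x_i(t)\,dt$ integrates $x_i$ over the gap, which $\int_0^1 x_i(q) R_i'(q)\,dq$ cannot see. For example, let $v_i$ equal $1$ with probability $\frac{1}{2}$ and be uniform on $[2,3]$ otherwise. Posted prices $1.5$ and $2$ allocate identically on the support but earn $0.75$ and $1$. So the first equality needs Myerson's interval-support assumption, or payments normalized to the lowest winning support value.
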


In the unit-demand single-buyer setting (\Cref{def:IP}), in addition to {\UIVVItemPricing}, we also study the {\DualityRelaxationBenchmark} \cite{CMS15,CDW21}, whose revenue formula is given by
\begin{align*}
    \DRB(\bF)
    & ~\eqdef~ \E_{\bv \sim \bF} \Big[\, \max_{i \in [n]} \big\{ \bar{\varphi}_{i}(v_{i}) \cdot \indicator(i = \argmax(\bv)) + v_{i} \cdot \indicator(i \neq \argmax(\bv)) \big\} \,\Big] \\
    & ~\;\equiv~ \E_{\bv \sim \bF} \Big[\, \max(\bar{\varphi}_{(1)}(v_{(1)}), \bv_{-(1)}) \,\Big] \\
    & ~\;\equiv~ \E_{\bv \sim \bF} \Big[\, \max(\bar{\varphi}_{(1)}(v_{(1)}), v_{(2)}) \,\Big]
\end{align*}
(When there are at least two items $n \geq 2$, every outcome of this benchmark is at least the second-highest value and thus is nonnegative. The last two lines, without ambiguity, simply write $(1) \in [n]$ for the highest bidder and $(2) \in [n]$ for the second-highest bidder.)
This revenue formula is a ``mixture'' of the {\MyersonAuction} revenue and the {\SecondPriceAuction} revenue. I.e., in every possible outcome $\bv \sim \bF$ we always have $\max(\MA(\bv),\, \SPA(\bv)) \leq \DRB(\bv) \leq \MA(\bv) + \SPA(\bv)$.

Also, it was proved in \cite{CMS15,CDW21} that {\DualityRelaxationBenchmark} revenue-surpasses the revenue-optimal mechanisms {\OptimalLotteryPricing} and {\OptimalItemPricing}.

\begin{proposition}[{Unit-Demand Single-Buyer Revenue Maximization \cite{CMS15,CDW21}}]
\begin{flushleft}
\, \\
Given the same instance $\bF$,
{\DualityRelaxationBenchmark} revenue-surpasses {\OptimalLotteryPricing}, which then revenue-surpasses {\OptimalItemPricing} $\DRB(\bF) \geq \OLP(\bF) \geq \OIP(\bF)$.
\end{flushleft}
\end{proposition}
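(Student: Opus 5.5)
The plan is to prove the two inequalities separately. The right one, $\OLP(\bF) \geq \OIP(\bF)$, is immediate. Any deterministic item pricing $\bp$ is a special lottery menu: for each $i \in [n]$ it offers the lottery $(\mathbf{e}_{i},\, p_{i})$. The buyer's utility-maximizing choice in this menu reproduces \Cref{def:IP}, with the tie-breaking towards the highest price absorbed as in \cite{CDPSY18}. Taking the optimal $\bp$ then gives $\OIP(\bF) \leq \OLP(\bF)$.

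For $\DRB(\bF) \geq \OLP(\bF)$, I would follow the Lagrangian-duality route of \cite{CDW21}. First assume finite supports; the general case follows by a standard discretization and limiting argument. Write an arbitrary IC and IR lottery menu as a direct mechanism $(\bx(\bv),\, p(\bv))$ with $\sum_{i} x_{i}(\bv) \leq 1$. Dualize every IC constraint ``type $\bv$ does not prefer to report $\bv'$'' with a multiplier $\lambda(\bv,\, \bv') \geq 0$. Weak duality says that for any flow $\lambda$ satisfying flow conservation (inflow minus outflow at $\bv$ equals $f(\bv)$, with excess sent to a sink at the IR constraint), the revenue is at most $\E_{\bv}\big[\sum_{i} x_{i}(\bv)\, \Phi_{i}(\bv)\big]$. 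Here $\Phi_{i}(\bv) = v_{i} - \frac{1}{f(\bv)} \sum_{\bv'} \lambda(\bv',\, \bv)\, (v'_{i} - v_{i})$ is the induced virtual value.

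Next I choose the flow. Partition the type space into regions $\calR_{j} = \{\bv : j = \argmax(\bv)\}$, using a fixed tie-breaking rule. Inside $\calR_{j}$, flow only moves downward along coordinate $j$, in the single-dimensional Myerson manner. For types $\bv \in \calR_{j}$ this yields:
\begin{itemize}
\item $\Phi_{i}(\bv) = v_{i}$ for every $i \neq j$, since no flow is sent along those coordinates;
\item $\Phi_{j}(\bv) \leq \varphi_{j}(v_{j})$, the one-dimensional Myerson virtual value of $F_{j}$.
\end{itemize}
The inequality in the second bullet is the delicate part. When $v_{j}$ is decreased, the flow may exit $\calR_{j}$, and one must check that the flow arriving at $\bv$ from above is at most $(1 - F_{j}(v_{j})) \cdot \prod_{i \neq j} f_{i}(v_{i})$. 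This holds because some higher types in the $j$-th coordinate may lie in $\calR_{j}$ while others do not, and only those in $\calR_{j}$ contribute. Given both bullets and $\sum_{i} x_{i} \leq 1$, we get $\sum_{i} x_{i}(\bv)\, \Phi_{i}(\bv) \leq \max\big(\max_{i \neq (1)} v_{i},\, \varphi_{(1)}(v_{(1)}),\, 0\big)$ pointwise. Taking expectations gives the $\DRB$ formula with unironed virtual values; the $0$ is harmless since $v_{(2)} \geq 0$.

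The main obstacle is replacing $\varphi_{j}$ by the ironed $\bar{\varphi}_{j}$. Without this, the benchmark could be larger than $\DRB$ only if $\varphi_{j} > \bar{\varphi}_{j}$ somewhere, so the bound is otherwise correct, but the statement requires ironed values. I would handle it as in \cite{CDW21}: on each ironing interval of $F_{j}$, add cyclic flow between consecutive values of coordinate $j$ within $\calR_{j}$. Such cycles keep conservation intact, and one can choose them so that the virtual value on that interval is flattened to its average, namely $\bar{\varphi}_{j}$. The subtle point is the interaction with the region boundary: the added cycles must stay inside $\calR_{j}$. Near the boundary I expect to argue that truncating the cycles at the boundary only helps, because of the same monotonicity in $v_{j}$ used for the exit issue above. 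If this gets messy, the fallback is the copies-setting argument of \cite{CMS15}, which reduces $\OLP$ to a single-dimensional mechanism for the ``copies'' environment. Ironing then comes for free from Myerson's theory, and the favorite-item restriction is what produces the $\max(\bar{\varphi}_{(1)}(v_{(1)}),\, v_{(2)})$ form.
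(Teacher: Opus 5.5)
The paper gives no proof of this proposition; it imports it from \cite{CMS15,CDW21}. Your plan for $\DRB \geq \OLP$ is the \cite{CDW21} duality argument, and your argument for $\OLP \geq \OIP$ is fine, so the architecture is right. However, both steps you single out as delicate are mis-argued.

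\textbf{The ``delicate part'' runs the wrong way.} Let $\bv^{+}$ denote $\bv$ with coordinate $j$ raised to the next support point. Your own formula gives $\Phi_{j}(\bv) = v_{j} - \lambda(\bv^{+},\bv)\,(v^{+}_{j} - v_{j})/f(\bv)$. So an inflow that is \emph{at most} the tail mass yields $\Phi_{j}(\bv) \geq \varphi_{j}(v_{j})$, which is the useless direction for an upper bound: more inflow lowers virtual values, not less. The premise is also false. With a fixed tie-breaking rule, $\mathcal{R}_{j}$ is closed under raising $v_{j}$, so every type above $\bv$ in coordinate $j$ lies in $\mathcal{R}_{j}$ and sends its mass through $\bv$. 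Hence the inflow is exactly $\Pr_{v'_{j} \sim F_{j}}[v'_{j} > v_{j}] \cdot \prod_{i \neq j} f_{i}(v_{i})$, and $\Phi_{j}(\bv) = \varphi_{j}(v_{j})$ with equality. Flow leaves $\mathcal{R}_{j}$ only at the bottom of each line $\{(s, \bv_{-j})\}$, and there you should send it to the sink (the IR constraint). Since $\Phi$ depends only on inflows, this changes no virtual value.

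\textbf{The boundary ironing is left as an expectation.} The reason you offer (the monotonicity from the step above) is not what makes it work. The actual argument is short:
\begin{itemize}
\item A cycle of weight $w$ on adjacent types $s < s^{+}$ of a line lowers $f\Phi_{j}$ at $s$ and raises it at $s^{+}$ by $w\,(s^{+} - s)$.
\item On the line through $\bv_{-j}$, take the usual full-line ironing cycles of $F_{j}$, with nonnegative weights read off from $\bar{R}_{j} - R_{j}$. Together with the Myerson flow, these give $\Phi_{j} = \bar{\varphi}_{j}$ at every type.
\item Delete only the one cycle straddling the region boundary $s = \max \bv_{-j}$.
\item Every type of $\mathcal{R}_{j}$ on the line other than the lowest keeps its Myerson inflow and all its cycles, so it still has $\Phi_{j} = \bar{\varphi}_{j}$.
\item The lowest such type only loses an incoming transfer, so $\Phi_{j} \leq \bar{\varphi}_{j}$ there.
\end{itemize}
This pointwise inequality is exactly what your bound $\sum_{i} x_{i}\Phi_{i} \leq \max(\bar{\varphi}_{(1)}(v_{(1)}), v_{(2)})$ needs, so no fallback to \cite{CMS15} is required.

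You cannot hope for equality on a straddled ironing interval. Cycles inside the region preserve the total $\sum f\Phi_{j}$ over the region's part of the line. Up to the factor $\prod_{i \neq j} f_{i}(v_{i})$, that total equals $R_{j}(q)$ at the boundary quantile $q$, and $R_{j}(q) \leq \bar{R}_{j}(q)$. So your claim that each ironing interval is ``flattened to $\bar{\varphi}_{j}$'' holds only for intervals lying entirely inside $\mathcal{R}_{j}$.
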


\noindent
{\bf Stochastic Dominance and Revenue Monotonicity.}
We say a distribution $F_{i}$ is stochastically dominated by another distribution $\tilde{F}_{i}$, namely $F_{i} \preceq \tilde{F}_{i}$, when their value CDF's satisfy $F_{i}(v) \geq \tilde{F}_{i}(v)$ for $v \geq 0$, or equivalently, when their revenue-quantile curves satisfy $R_{i}(q) \leq \tilde{R}_{i}(q)$ for $q \in [0,\, 1]$.
Likewise, when $(F_{i} \preceq \tilde{F}_{i},\ \forall i \in [n]) \iff (R_{i} \preceq \tilde{R}_{i},\ \forall i \in [n])$, we can generalize stochastic dominance to instances $\bF \preceq \tilde{\bF} \iff \bR \preceq \tilde{\bR}$.


\begin{remark}[Stochastic Dominance]
\label{rem:revenue_monotonicity}
\begin{flushleft}
Given two regular instances $\bF$ and $\tilde{\bF}$, if their well-defined virtual value CDF's $\bD$ and $\tilde{\bD}$ satisfy stochastic dominance $\bD \preceq \tilde{\bD}$, then their value CDF's and revenue-quantile curves must also satisfy stochastic dominance $\bD \preceq \tilde{\bD} \implies \bF \preceq \tilde{\bF} \iff \bR \preceq \tilde{\bR}$.
(However, the opposite in general does not hold $\bF \preceq \tilde{\bF} \not\implies \bD \preceq \tilde{\bD}$.)
\end{flushleft}
\end{remark}

In regard to stochastic dominance of instances, single-item mechanisms often satisfy the following property called {\em revenue monotonicity} (see, e.g., \cite{DHP16,JLX23}).

\begin{proposition}[Revenue Monotonicity {\cite{DHP16,JLX23}}]
\label{prop:revenue_monotonicity}
Given any two instances $\bF \preceq \tilde{\bF}$: \\
(i)~{\MyersonAuction} satisfies revenue monotonicity $\MA(\bF) \leq \MA(\tilde{\bF})$. \\
(ii)~{\SecondPriceAuction} satisfies revenue monotonicity $\SPA(\bF) \leq \SPA(\tilde{\bF})$.
\end{proposition}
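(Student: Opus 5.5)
Both parts will be proved with the same quantile coupling. For every buyer $i \in [n]$ draw an independent uniform quantile $q_{i} \sim U[0,\, 1]$ and set $v_{i} \eqdef F_{i}^{-1}(1 - q_{i})$ and $\tilde{v}_{i} \eqdef \tilde{F}_{i}^{-1}(1 - q_{i})$. Then $\bv \sim \bF$ and $\tilde{\bv} \sim \tilde{\bF}$. Since $F_{i} \preceq \tilde{F}_{i}$ means $F_{i}(v) \geq \tilde{F}_{i}(v)$ for all $v$, the inverse CDFs are pointwise ordered. Hence $v_{i} \leq \tilde{v}_{i}$ for every $i$ and every realization. Equivalently, $R_{i}(q) \leq \tilde{R}_{i}(q)$, so this also covers the ``infinite value'' case $R_{i}(0) > 0$.

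\textbf{Part (ii).} This follows at once. The {\SecondPriceAuction} revenue on a profile is the second-highest value, which is coordinatewise nondecreasing in the profile. So $\SPA(\bv) \leq \SPA(\tilde{\bv})$ pointwise under the coupling, and taking expectations gives $\SPA(\bF) \leq \SPA(\tilde{\bF})$.

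\textbf{Part (i).} Here pointwise comparison fails, because ironed virtual values are not monotone in the distribution. I would therefore use a simulation argument combined with the optimality of {\MyersonAuction} for $\tilde{\bF}$.

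Define a mechanism $\calM$ for buyers with values $\tilde{\bv} \sim \tilde{\bF}$ in three steps.
\begin{itemize}
    \item Map each reported $\tilde{v}_{i}$ to $g_{i}(\tilde{v}_{i}) \eqdef F_{i}^{-1}(\tilde{F}_{i}(\tilde{v}_{i}))$. If $\tilde{F}_{i}$ has atoms, draw the quantile uniformly inside the atom so that the image is exactly distributed as $F_{i}$.
    \item Run {\MyersonAuction} for $\bF$ on the mapped profile $g(\tilde{\bv})$.
    \item Charge the winner the threshold payment in the $\tilde{v}$-space.
\end{itemize}
Each $g_{i}$ is nondecreasing and satisfies $g_{i}(\tilde{v}) \leq \tilde{v}$, by the dominance just established. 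The allocation of $\calM$ is the composition of Myerson's monotone allocation rule with the monotone maps $g_{i}$. So it is monotone in each buyer's own report, and with threshold (Myerson) payments $\calM$ is truthful.

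Now compare payments. Fix $\tilde{\bv}_{-i}$, and let $\theta_{i}$ be buyer $i$'s threshold in Myerson's auction for $\bF$ against the mapped profile $g(\tilde{\bv}_{-i})$. Buyer $i$'s threshold in $\calM$ is $\tilde{\theta}_{i} = \inf\{\tilde{v} : g_{i}(\tilde{v}) \geq \theta_{i}\}$. Since $g_{i}(\tilde{v}) \leq \tilde{v}$, we get $\tilde{\theta}_{i} \geq \theta_{i}$. Moreover, the event that buyer $i$ wins in $\calM$ coincides with the event that $i$ wins in Myerson's auction for $\bF$ at the coupled profile. Hence the revenue of $\calM$ under $\tilde{\bF}$ is at least the {\MyersonAuction} revenue under $\bF$, which is $\MA(\bF)$. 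Finally, {\MyersonAuction} is revenue-optimal among truthful mechanisms for $\tilde{\bF}$, which gives $\MA(\tilde{\bF}) \geq \mathrm{Rev}(\calM) \geq \MA(\bF)$.

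\textbf{Main obstacle.} I expect the main difficulty to be the technical bookkeeping rather than the idea. When $\tilde{F}_{i}$ has atoms, the map $g_{i}$ must be randomized, and $\calM$ becomes a randomized truthful mechanism. Its allocation is then monotone in expectation, and its payment is defined by the Myerson payment formula. I would need to verify that this expected payment still dominates $\theta_{i}$ times the winning probability. The cleanest way is to write both revenues via \Cref{prop:revenue_equivalence} in quantile space, namely as $\int x_{i}(q)\, \mathrm{d}(-R_{i}(q))$-type integrals. After integrating by parts, the comparison reduces to $R_{i} \leq \tilde{R}_{i}$ for a common quantile-space allocation. Infinite values, i.e., the case $R_{i}(0) > 0$, would be handled by the truncation-at-$H$, $H \to \infty$, limit suggested in the overview.
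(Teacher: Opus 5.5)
Your proof is correct. The paper gives no proof of this proposition; it is imported from \cite{DHP16,JLX23}. Your argument is essentially the standard one for this fact: a pointwise quantile coupling for (ii), and for (i) a simulation mechanism. That mechanism remaps each $\tilde{v}_{i}$ to its quantile-equivalent $g_{i}(\tilde{v}_{i}) \leq \tilde{v}_{i}$, runs {\MyersonAuction} for $\bF$, and charges weakly higher thresholds in $\tilde{v}$-space. You then invoke the optimality of {\MyersonAuction} for $\tilde{\bF}$.

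The atom issue you flag does not need the quantile-space integration by parts. Draw seeds $w_{i} \sim U[0,\, 1]$ independently before reading the reports. Set $u_{i} \eqdef (1 - w_{i}) \cdot \tilde{F}_{i}(\tilde{v}_{i}) + w_{i} \cdot \Pr_{x \sim \tilde{F}_{i}}[x \leq \tilde{v}_{i}]$ and $g_{i}(\tilde{v}_{i}) \eqdef F_{i}^{-1}(u_{i})$. Then:
\begin{itemize}
\item for every fixed seed, this map is deterministic and nondecreasing in $\tilde{v}_{i}$;
\item $u_{i} \sim U[0,\, 1]$, so $g_{i}(\tilde{v}_{i}) \sim F_{i}$;
\item $u_{i} \leq \Pr_{x \sim \tilde{F}_{i}}[x \leq \tilde{v}_{i}] \leq \Pr_{x \sim F_{i}}[x \leq \tilde{v}_{i}]$, which gives $g_{i}(\tilde{v}_{i}) \leq \tilde{v}_{i}$.
\end{itemize}
Hence $\calM$ is a mixture of deterministic truthful mechanisms, and your comparison $\tilde{\theta}_{i} \geq \theta_{i}$ applies seed by seed.

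For comparison, the paper's own machinery gives a shorter proof of (i) that needs neither the simulation nor Myerson's optimality. Couple one coordinate at a time as in the proof of \Cref{lem:revenue_monotonicity}. Let $m \eqdef \max(0,\, \max_{j \neq 1} \bar{\varphi}_{j}(v_{j}))$, so that $\MA(\bv) = \max(\bar{R}'_{1}(q_{1}),\, m)$. Conditioned on $\bv_{-1}$, its expectation over $q_{1}$ is the height at $q = 1$ of the slope-$m$ tangent of the concave curve $\bar{R}_{1}$. That height is $\max_{q \in [0,\, 1]} \{\bar{R}_{1}(q) + m \cdot (1 - q)\}$; this is the Case~3 tangent-line observation, with $\bar{R}_{1}(0)$ counted as virtual-value mass at $q = 0$. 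This quantity is monotone in $\bar{R}_{1}$, and $R_{1} \leq \tilde{R}_{1}$ implies $\bar{R}_{1} \leq \bar{\tilde{R}}_{1}$. No case split is needed, because $\MA$ has a single regime.

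That route is more elementary and mirrors the benchmark's monotonicity proof. Yours, in contrast, never touches ironed curves or the $R_{i}(0) > 0$ convention, and relies only on truthfulness and optimality.
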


In general, multi-item (revenue-optimal) mechanisms violate revenue monotonicity \cite{HR15}.\footnote{\cite[Example~2]{HR15} gave a revenue-nonmonotone instance of a single {\em additive} buyer with independent values.}
Nonetheless, we can establish (\Cref{lem:revenue_monotonicity}) this property for {\DualityRelaxationBenchmark} in the considered setting of a single unit-demand buyer with independent values.

\begin{remark}[Revenue (Non-)Monotonicity]
It is interesting to determine how general such revenue monotonicity could be. Unfortunately, in \Cref{sec:non-monotonicity}, we show that this property fails even for a single \textit{additive} buyer.
\end{remark}

\begin{lemma}[Revenue Monotonicity]
\label{lem:revenue_monotonicity}
Given any two instances $\bF \preceq \tilde{\bF}$,
{\DualityRelaxationBenchmark} satisfies revenue monotonicity $\DRB(\bF) \leq \DRB(\tilde{\bF})$.
\end{lemma}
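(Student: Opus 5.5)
The plan is to prove \Cref{lem:revenue_monotonicity} one coordinate at a time. Take the hybrid instances $\bF^{(k)} \eqdef (\tilde F_{1},\dots,\tilde F_{k},F_{k+1},\dots,F_{n})$. By independence, it suffices to show the following single-coordinate claim: if $F_{i}\preceq \tilde F_{i}$, then for every fixed realization $\bv_{-i}$ of the other values (whose distributions, and hence ironed virtual value functions, are held fixed), $\E_{v_{i}\sim F_{i}}[\DRB(v_{i},\bv_{-i})]\le \E_{v_{i}\sim \tilde F_{i}}[\DRB(v_{i},\bv_{-i})]$. Here $\bar\varphi_{i}$ is recomputed from $F_{i}$, respectively $\tilde F_{i}$. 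The difficulty is exactly this recomputation, which is why the dominance is not pointwise.

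\textbf{Step 1: the conditional integrand.} Fix $\bv_{-i}$. Let $m$ be the highest value among the other coordinates, attained by $j$, and let $s$ be the second highest among them. Set $c \eqdef \max(\bar\varphi_{j}(m), s)$; since $\bar\varphi_{j}(m)\le m$, we have $c\le m$. By the formula $\max(\bar\varphi_{(1)}(v_{(1)}),v_{(2)})$:
\begin{itemize}
\item for $v_{i}<m$, the outcome is $\max(c,v_{i})$, which depends only on $v_{i}$ and not on $F_{i}$;
\item for $v_{i}\ge m$, the outcome is $\max(\bar\varphi_{i}(v_{i}),m)$. Ties at $v_i=m$ are handled by the fixed tie-breaking and do not matter for the argument.
\end{itemize}

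\textbf{Step 2: the part above $m$ in quantile space.} Let $q_{m}=\Pr[v_{i}\ge m]$. Then the upper part equals $\int_{0}^{q_{m}}\max(\bar R_{i}'(q),m)\,dq$, and I interpret $R_{i}(0)>0$ as a point mass at infinity. Because $\bar R_{i}'$ is decreasing, this integral equals $\max_{q\in[0,q_{m}]}\{\bar R_{i}(q)+m(q_{m}-q)\}$. The maximum of the linear-shifted function $\bar R_{i}(q)-mq$ over $[0,q_{m}]$ coincides with that of $R_{i}(q)-mq$. The reason is that $R_{i}$ and its concave envelope agree at the maximizers of any linear tilt. On the restricted interval one also needs that the envelope on $[0,q_m]$ is generated by points of $R_i$ in $[0,q_m]$. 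This holds because $q_m$ is an actual quantile of $F_i$, so $R_i(q_m)=q_m\,m'$ for the corresponding value $m'\ge m$, and I would check this endpoint carefully. Translating quantile $q$ back to a price $p\ge m$ gives
\[
\E_{F_{i}}[\DRB(v_{i},\bv_{-i})] ~=~ \sup_{p\ge m}\ \E_{v_{i}\sim F_{i}}[\psi_{p}(v_{i})],
\qquad
\psi_{p}(v) \eqdef \begin{cases}\max(c,v) & v<m,\\ m & m\le v<p,\\ p & v\ge p.\end{cases}
\]
Here $p=\infty$ is allowed in the supremum and corresponds to $q=0$.

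\textbf{Step 3: conclude.} Since $c\le m\le p$, each $\psi_{p}$ is nondecreasing in $v$, and it does not depend on $F_{i}$. Hence $F_{i}\preceq\tilde F_{i}$ gives $\E_{F_{i}}[\psi_{p}]\le\E_{\tilde F_{i}}[\psi_{p}]$ for every $p\ge m$. Taking the supremum over $p$ and then the expectation over $\bv_{-i}$ proves the single-coordinate claim. Chaining along the hybrids $\bF=\bF^{(0)},\bF^{(1)},\dots,\bF^{(n)}=\tilde\bF$ proves the lemma.

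The main obstacle is Step 2: justifying that the ironed term equals a supremum over unironed posted prices restricted to $p\ge m$. This requires care with ironing intervals that straddle $q_{m}$, with atoms at $m$, and with $R_{i}(0)>0$. I would handle these through the concave-envelope characterization $\max_{q}\{\bar R_{i}(q)-mq\}=\max_{q}\{R_{i}(q)-mq\}$, applied on the interval $[0,q_{m}]$.
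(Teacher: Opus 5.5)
Your argument works and takes a genuinely different route from the paper, but the justification you sketch for Step~2 relies on a false claim and needs the repair given in the second paragraph.

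\textbf{Comparison with the paper.} The paper also goes one coordinate at a time. It couples $v_1=F_1^{-1}(1-q_1)\le\tilde v_1=\tilde F_1^{-1}(1-q_1)$ through a common quantile and splits into three cases according to whether $v_1$ and $\tilde v_1$ beat the top competing value $m$. If neither does, the comparison is pointwise. If only $\tilde v_1$ does, then $\DRB(v_1,\bv_{-1})\le m\le \DRB(\tilde v_1,\bv_{-1})$, using $\bar\varphi\le v$. If both do, it integrates $\max(\bar R_1',m)$ and $\max(\bar{\tilde R}_1',m)$ over the \emph{same} interval $[0,q_1^*]$, which is defined through the dominated $F_1$. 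It reads each integral as a tilted maximum of the ironed curve and concludes from $\bar R_1\le\bar{\tilde R}_1$. So the paper never un-irons: the shift of the threshold quantile is absorbed by the middle case.

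You instead absorb that shift into the variational formula $\sup_{p\ge m}\E[\psi_p(v_i)]$, with nondecreasing test functions that do not depend on $F_i$. This makes the comparison immediate. It also gives a nice interpretation: given $\bv_{-i}$, the benchmark is $\E[\min(\max(c,v_i),m)]$ plus the best surplus $(p-m)\Pr[v_i\ge p]$ over prices $p\ge m$. The price you pay is the un-ironing step.

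\textbf{Repairing Step~2.} It is not true that $\bar R_i$ on $[0,q_m]$ is generated by points of $R_i$ in $[0,q_m]$, and $q_m$ being an actual quantile does not help. An ironing interval can straddle $q_m$. For example, take values $10$, $5.5$, $4$ with probabilities $0.1$, $0.01$, $0.89$, and $m=5$. Then the envelope is the chord from $(0.1,1)$ to $(1,4)$ over $q_m=0.11$. So $\bar R_i(q_m)\approx 1.03$, which exceeds $0.605$, the best revenue of any price selling with probability $q_m$.

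What you actually need is only a statement about the maximum of the tilt, and it follows from a sandwich. For $q>q_m$ the value at quantile $q$ is at most $m$, so $R_i(q)-mq\le 0\le R_i(0)$. Hence
\[
\sup_{[0,q_m]}(R_i-mq)\;\le\;\max_{[0,q_m]}(\bar R_i-mq)\;\le\;\max_{[0,1]}(\bar R_i-mq)\;=\;\sup_{[0,1]}(R_i-mq)\;=\;\sup_{[0,q_m]}(R_i-mq).
\]
To pass back to prices, fix $q\le q_m$ and take $p$ to be the largest price with $\Pr[v_i\ge p]\ge q$. Then $p\ge m$ and $(p-m)\Pr[v_i\ge p]\ge R_i(q)-mq$. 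This also takes care of atoms and of the endpoint $q_m$ without further case work. The term $R_i(0)>0$ is reached as the limit $p\to\infty$. With this fix, Step~3 and the hybrid argument go through as you wrote them.
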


\begin{proof}
Any outcome of the random values $\bv \sim \bF$ (in decreasing order $v_{(1)} \ge v_{(2)} \ge \dots \ge v_{(n)}$), as we consider an arbitrary but fixed tie-breaking rule, admits a {\em strict total order} $v_{(1)} \succ v_{(2)} \succ \dots \succ v_{(n)}$.

It suffices to show $\DRB(\bF) \leq \DRB(\tilde{F}_{1} \otimes \bF_{-1})$; assuming the correctness of this equation, we can apply such arguments to the other $(n-1)$ distributions, and \Cref{lem:revenue_monotonicity} follows by induction.

We establish $\DRB(\bF) \leq \DRB(\tilde{F}_{1} \otimes \bF_{-1})$ using a coupling argument.
Given an outcome of the {\em invariant} random values $\bv_{-1} \sim \bF_{-1}$ and a uniform random quantile $q_{1} \sim \Unif[0, 1]$. The {\em interim} random values $v_{1} = F_{1}^{-1}(1 - q_{1}) \leq \tilde{v}_{1} = \tilde{F}_{1}^{-1}(1 - q_{1})$ must follow distributions $F_{1} \preceq \tilde{F}_{1}$, respectively.
Regarding the rank of $v_{1}$ in $(v_{1}, \bv_{-1})$ and the rank of $\tilde{v}_{1}$ in $(\tilde{v}_{1}, \bv_{-1})$, there are three cases -- we prove revenue monotonicity in each case:

\vspace{.1in}
\noindent
{\bf Case~1: $v_{(2)} \succ \tilde{v}_{1} \geq v_{1}$.}
Neither $v_{1}$ nor $\tilde{v}_{1}$ is the highest value in the respective value profiles.
Thus for any choice of $q_{1}$ in {\bf Case~1}, we always have
\begin{align*}
    \DRB(v_{1}, \bv_{-1})
    & ~=~ \max\Big(\bar{\varphi}_{(2)}(v_{(2)}), v_{1}, v_{(3)}, \dots, v_{(n)}\Big) \\
    & ~\leq~ \max\Big(\bar{\varphi}_{(2)}(v_{(2)}), \tilde{v}_{1}, v_{(3)}, \dots, v_{(n)}\Big)
    ~=~ \DRB(\tilde{v}_{1}, \bv_{-1})
\end{align*}

\noindent
{\bf Case~2: $\tilde{v}_{1} \succ v_{(2)} \succ v_{1}$.}
I.e., $v_{1}$ is not the highest value in $(v_{1}, \bv_{-1})$ but $\tilde{v}_{1}$ is the highest value in $(\tilde{v}_{1}, \bv_{-1})$.
As mentioned in \Cref{subsec:distribution}, an ironed virtual value is always upper bounded by the corresponding value $\bar{\varphi}_{(2)}(v_{(2)}) \leq v_{(2)}$. Thus for any choice of $q_{1}$ in {\bf Case~2}, we always have
\begin{align*}
    \DRB(v_{1}, \bv_{-1})
    & ~=~ \max\Big(\bar{\varphi}_{(2)}(v_{(2)}), v_{1}, v_{(3)}, \dots, v_{(n)}\Big) \\
    & ~\leq~ v_{(2)}
    ~\leq~ \max\Big(\bar{\tilde{\varphi}}_{1}(\tilde{v}_{1}), v_{(2)}, v_{(3)}, \dots, v_{(n)}\Big)
    ~=~ \DRB(\tilde{v}_{1}, \bv_{-1})
\end{align*}

\begin{figure}[t]
    \centering
    \includegraphics[width = .8\textwidth]
    {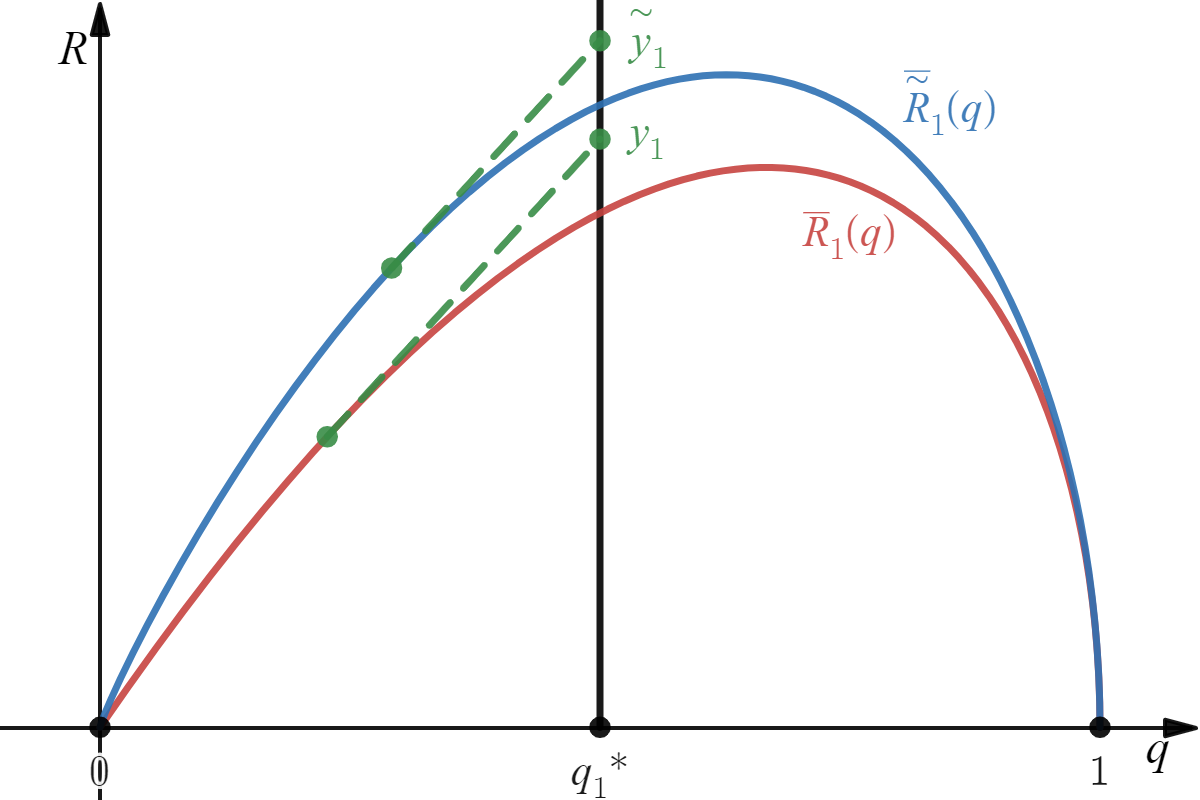}
    \caption{Diagram of {\bf Case~3} in the proof of \Cref{lem:revenue_monotonicity}, in terms of revenue-quantile curves.
    \label{fig:revenue_monotonicity}}
\end{figure}

\noindent
{\bf Case~3: $\tilde{v}_{1} \geq v_{1} \succ v_{(2)}$.}
Both $v_{1}$ and $\tilde{v}_{1}$ are the highest values in the respective value profiles. A lower quantile $q_{1}$ results in higher (or equal) interim values $v_{1} = F_{1}^{-1}(1 - q_{1})$ and $\tilde{v}_{1} = \tilde{F}_{1}^{-1}(1 - q_{1})$, so {\bf Case~3} occurs if and only if $0 \leq q_{1} \leq q_{1}^{*}$,\footnote{The boundary case $\{q_{1} = q_{1}^{*}\}$ can either be included or excluded, since it occurs with probability zero.} where the threshold quantile
\begin{align*}
    q_{1}^{*}
    ~=~ q_{1}^{*}(\bv_{-1})
    ~\eqdef~ \sup \{q \in [0, 1] \,|\, \tilde{F}_{1}^{-1}(1 - q) \succ v_{(2)} \land F_{1}^{-1}(1 - q) \succ v_{(2)}\}.
\end{align*}
Note that $q_{1}^{*} = \sup \{q \in [0, 1] \,|\, F_{1}^{-1}(1 - q) \succ v_{(2)}\}$, since $F_{1} \preceq \tilde{F}_{1}$.

As mentioned in \Cref{subsec:distribution}, $\bar{\varphi}_{1}(v_{1}) = \bar{R}'_{1}(q_{1})$ and $\bar{\tilde{\varphi}}_{1}(v_{1}) = \bar{\tilde{R}}'_{1}(q_{1})$, i.e., an ironed virtual value is precisely the derivative of the ironed revenue-quantile curve at the considered quantile.
Thus, we can deduce that
\begin{align*}
    \E_{q_{1} \sim \Unif[0, 1]} \Big[\, \DRB(v_{1}, \bv_{-1}) \cdot \indicator(\text{\bf Case~3}) \,\Bigmid\, \bv_{-1} \,\Big]
    & ~=~ \int_{0}^{q_{1}^{*}} \max\Big(\bar{R}'_{1}(q), v_{(2)}\Big) \cdot \d q \\
    & ~\leq~ \int_{0}^{q_{1}^{*}} \max\Big(\bar{\tilde{R}}'_{1}(q), v_{(2)}\Big) \cdot \d q \\
    & ~=~ \E_{q_{1} \sim \Unif[0, 1]} \big[ \DRB(\tilde{v}_{1}, \bv_{-1}) \cdot \indicator(\text{\bf Case~3}) \,\bigmid\, \bv_{-1} \big],
\end{align*}
where the inequality uses a simple geometric observation; see \Cref{fig:revenue_monotonicity} for a visual demonstration: \\
Since $\bar{R}_{1} = \Conv(R_{1})$ is a concave function over $q \in [0,\, 1]$, the integral $\int_{0}^{q_{1}^{*}} \max(\bar{R}'_{1}(q), v_{(2)}) \cdot \d q$ is equal to the $y$-coordinate (denoted by $y_{1}$) of the intersection of ``the slope-$v_{(2)}$ tangent of $\bar{R}_{1}$'' and ``the $x = q_{1}^{*}$ vertical line''. Similarly for $\bar{\tilde{R}}_{1}$; the integral $\int_{0}^{q_{1}^{*}} \max(\bar{\tilde{R}}'_{1}(q), v_{(2)}) \cdot \d q$ is equal to the counterpart $\tilde{y}_{1}$. Given the stochastic dominance $F_{1} \preceq \tilde{F}_{1} \iff R_{1} \preceq \tilde{R}_{1} \implies \bar{R}_{1} \preceq \bar{\tilde{R}}_{1}$, it is easy to infer that $y_{1} \leq \tilde{y}_{1}$.

\vspace{.1in}
\noindent
As a consequence, we can derive $\DRB(\bF) \leq \DRB(\bF_{-1}, \tilde{F}_{1})$ as follows:
\begin{align*}
 \DRB(\bF)
 & ~=~ \E_{\bv_{-1} \sim \bF_{-1}} \Big[\, \sum_{j \in \{1, 2, 3\}} \E_{q_{1} \sim \Unif[0, 1]} \Big[\, \DRB(v_{1}, \bv_{-1}) \cdot \indicator(\text{\bf Case~$j$}) \,\Bigmid\, \bv_{-1} \,\Big] \,\Big] \\
 & ~\leq~ \E_{\bv_{-1} \sim \bF_{-1}} \Big[\, \sum_{j \in \{1, 2, 3\}} \E_{q_{1} \sim \Unif[0, 1]} \Big[\, \DRB(\tilde{v}_{1}, \bv_{-1}) \cdot \indicator(\text{\bf Case~$j$}) \,\Bigmid\, \bv_{-1} \,\Big] \,\Big] \\
 & ~=~ \DRB(\bF_{-1}, \tilde{F}_{1}).
\end{align*}
As mentioned, by induction we can conclude with $\DRB(\bF) \leq \DRB(\tilde{\bF})$. This finishes the proof.
\end{proof}

\noindent
{\bf Revenue Gaps.}
Given two mechanisms $\calM_{1}$ and $\calM_{2}$ that the first revenue-surpasses the second, their revenue gap $\calC_{\calM_{1} / \calM_{2}} \geq 1$ is defined as, over all possible instances $\bF = \{F_{i}\}_{i \in [n]}$, the supremum ratio of the two revenues $\calC_{\calM_{1} / \calM_{2}} \eqdef \sup_{\bF} \big\{ \frac{\calM_{1}(\bF)}{\calM_{2}(\bF)} \big\} \geq 1$.

\section{A Benchmark-Based Prophet Inequality}
\label{sec:DTM}

In \Cref{sec:DTM}, we prove \Cref{thm:UIVVIP} (Item~(i)) and \Cref{thm:DRB_SPP}, the tight revenue guarantees of {\UIVVItemPricing} and {\UIVVSequentialPostedPricing} against {\DualityRelaxationBenchmark}.
For ease of reference, both theorems are restated below.


\begin{restate}[{\Cref{thm:UIVVIP}}]
\begin{flushleft}
Against {\DualityRelaxationBenchmark}: \\
(i)~{\UniformIronedVirtualValueItemPricing} with a {\UIVV}-threshold $T = \frac{1}{3} \cdot \DRB$ of one third of the {\DualityRelaxationBenchmark} revenue achieves a tight $\calC_{\DRB / \UIVVIP} = 3$ approximation.
\end{flushleft}
\end{restate}

\begin{restate}[{\Cref{thm:DRB_SPP}}]
\begin{flushleft}
Against {\DualityRelaxationBenchmark}: \\
(i)~{\UniformIronedVirtualValueSequentialPostedPricing} with a {\UIVV}-threshold $T = \frac{1}{3} \cdot \DRB$ of one third of the {\DualityRelaxationBenchmark} revenue achieves a $\calC_{\DRB / \UIVVSPP} = 3$ approximation. \\
(ii)~This ratio is optimal for \textbf{deterministic ironed-virtual-value-based} and/or \textbf{(arbitrary) uniform-ironed-virtual-value} stopping rules in the order-oblivious model.\textsuperscript{\ref{footnote:tie}}
\end{flushleft}
\end{restate}

The bulk of \Cref{sec:DTM} devotes to the upper-bound part of \Cref{thm:DRB_SPP}. Throughout \Cref{subsec:iron,subsec:truncate,subsec:extend,subsec:scale,subsec:perturb} we set the {\UIVV}-threshold $T = \alpha \cdot \DRB$ for a generic parameter $\alpha \in (0,\, 1)$,\footnote{\label{footnote:DTM_alpha}When $\alpha = 0$, no bounded revenue guarantee is possible: Suppose some item $i \in [n]$ has an deterministic infinitesimal value of $v_{i} \to 0^{+}$, then the adversary can choose any arrival order $\pi \in \Pi_{n}$ with $\pi(1) = i$ such that the gambler always accepts this item $i$ and gets a value of $v_{i} \to 0^{+}$. The other regime $\alpha \geq 1$ is similar.}
and step-by-step characterize the worst-case instances.
Only until \Cref{subsec:upper_bounds} we will choose $\alpha = \frac{1}{3}$ to optimize the bound $\calC_{\DRB / \UIVVSPP} = 3$. Then the upper-bound part of \Cref{thm:UIVVIP} follows as an implication of \Cref{prop:representative}.
Finally in \Cref{subsec:SPP_LB}, we show the lower-bound parts of both theorems.

\afterpage{
\begin{figure}[t]
    \centering
    \begin{mdframed}
    Reduction $\iron(\bR^{(0)})$

    \begin{flushleft}
    {\bf Input:} A (generic) instance $\bR^{(0)} = \{R_{i}^{(0)}\}_{i \in [n]}$.
    
    \vspace{.05in}
    {\bf Output:} A {\em regular} instance $\bR^{(1)} = \{R_{i}^{(1)}\}_{i \in [n]}$.

    \begin{enumerate}
        \item\label{alg:iron}
        {\bf Return} $\bR^{(1)} = \{R_{i}^{(1)}\}_{i \in [n]}$; each curve $R_{i}^{(1)} \eqdef \Conv(R_{i}^{(0)})$ is the concave envelope of $R_{i}^{(0)}$.
    \end{enumerate}
    \end{flushleft}
    \end{mdframed}
    \caption{The {\iron} reduction.
    \label{fig:alg:iron}}
\end{figure}
\begin{figure}[t]
    \centering
    \subfloat[\label{fig:iron_curve_old}
    input {\em non-concave} revenue-quantile curves]{
    \includegraphics[width = .48\textwidth]
    {figure/iron_curve_old}}
    \hfill
    \subfloat[\label{fig:iron_curve_new}
    output {\em concave} revenue-quantile curves]{
    \includegraphics[width = .48\textwidth]
    {figure/iron_curve_new}} \\
    \subfloat[\label{fig:iron_CDF_old}
    input {\em badly-defined} virtual value CDF's]{
    \includegraphics[width = .48\textwidth]
    {figure/iron_CDF_old}}
    \hfill
    \subfloat[\label{fig:iron_CDF_new}
    output {\em well-defined} virtual value CDF's]{
    \includegraphics[width = .48\textwidth]
    {figure/iron_CDF_new}}
    \caption{Diagrams of {\iron} with revenue-quantile curves and virtual value CDF's.
    \label{fig:iron}}
\end{figure}
\clearpage}

\subsection{{\iron}: Reducing generic items to regular items}
\label{subsec:iron}

This section presents the {\iron} reduction (see \Cref{fig:alg:iron,fig:iron} for a description and a visual aid), based on {\bf revenue-quantile curves}, which transforms a generic instance $\bR = \bR^{(0)}$ into a {\em regular} instance $\bR^{(1)}$ (\Cref{def:regular}).
Here and later in \Cref{subsec:truncate,subsec:extend,subsec:scale}, we always use the abbreviation $T^{(0)} = \alpha \cdot \DRB(\bR^{(0)})$ to denote the {\em original} instance $\bR = \bR^{(0)}$'s {\UIVV}-threshold.

\Cref{lem:iron} summarizes the performance guarantees of the {\iron} reduction.
(Notice that this lemma uses the {\em original} instance's {\UIVV}-threshold $T^{(0)}$ for the new instance $\bR^{(1)}$.)

\begin{lemma}[{\iron}]
\label{lem:iron}
\begin{flushleft}
The reduction $\bR^{(1)} \gets \iron(\bR^{(0)})$ outputs a regular instance $\bR^{(1)}$ such that $\DRB(\bR^{(1)}) \geq \DRB(\bR^{(0)})$ and $\UIVVSPP(\bR^{(1)},\, T^{(0)}) = \UIVVSPP(\bR^{(0)},\, T^{(0)})$.
\end{flushleft}
\end{lemma}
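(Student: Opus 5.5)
Three things need checking for Lemma~\ref{lem:iron}: that $\bR^{(1)}$ is regular, that $\DRB$ weakly increases, and that the {\UIVVSPP} revenue at the fixed threshold $T^{(0)}$ is unchanged.

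\emph{Regularity.} This holds by construction. Each $R_{i}^{(1)} = \Conv(R_{i}^{(0)})$ is concave, which is exactly Definition~\ref{def:regular}.

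\emph{Benchmark monotonicity.} The concave envelope dominates pointwise, so $R_{i}^{(0)} \leq R_{i}^{(1)}$ for every $i \in [n]$. Hence $\bR^{(0)} \preceq \bR^{(1)}$ in the sense of stochastic dominance defined before Remark~\ref{rem:revenue_monotonicity}, and Lemma~\ref{lem:revenue_monotonicity} gives $\DRB(\bR^{(1)}) \geq \DRB(\bR^{(0)})$ directly. A cleaner route is also available. The formula for $\DRB$ involves only the ironed virtual values (which are identical for $\bR^{(0)}$ and $\bR^{(1)}$, since $\Conv(\Conv(R)) = \Conv(R)$) and the values themselves (which weakly increase quantile by quantile). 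One could therefore rerun the quantile coupling from the proof of Lemma~\ref{lem:revenue_monotonicity}. Invoking the lemma is shorter, so I will do that.

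\emph{Invariance of {\UIVVSPP}.} Since $\Conv(\Conv(R_{i}^{(0)})) = \Conv(R_{i}^{(0)})$, both instances have the same ironed revenue curve $\bar{R}_{i}$, and so the same ironed virtual value as a function of quantile, $\bar{R}'_{i}(q)$. The {\UIVV} price for threshold $T^{(0)}$ (Definition~\ref{def:UIVV_prices}) is therefore determined in quantile space by the same cutoff quantile
\[
q_{i}^{*} = \sup\{q : \bar{R}'_{i}(q) \geq T^{(0)}\}.
\]
Two sub-cases arise.
\begin{itemize}
\item If $q_{i}^{*}$ lies outside an ironing interval of $R_{i}^{(0)}$, then $R_{i}^{(0)}(q_{i}^{*}) = R_{i}^{(1)}(q_{i}^{*})$. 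The two prices coincide, $p_{i} = R_{i}(q_{i}^{*})/q_{i}^{*}$, and so do the sale probabilities $q_{i}^{*}$.
\item If $T^{(0)}$ equals the constant slope on an ironing interval $[a,b]$, Definition~\ref{def:UIVV_prices} picks the endpoint value. Both curves agree at the endpoints $a$ and $b$ of that interval, so again price and sale probability coincide.
\end{itemize}

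\emph{Conclusion.} Posted-price revenue in any arrival order depends on each buyer only through the pair (price, acceptance probability). Hence $\SPP(\cdot,\bp,\pi)$ is identical for every $\pi$, and the order-oblivious minimum $\UIVVSPP(\cdot, T^{(0)})$ is equal for the two instances.

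\emph{Main obstacle.} The delicate step is the endpoint bookkeeping in the invariance argument. I need to show carefully that the $\inf\{v : \bar{\varphi}_{i}(v) \geq T\}$ convention always lands on a quantile where $R^{(0)}_{i} = \Conv(R^{(0)}_{i})$. This uses the fact that the ironed virtual value is constant on ironing intervals, so the infimum or supremum is attained at an interval endpoint. I also need to handle the $q=0$ convention with $R_{i}(0) > 0$, and atoms of $F_{i}$ (flat pieces in value space). I would resolve all of these by working entirely in quantile space and treating the price as the value at the boundary quantile.
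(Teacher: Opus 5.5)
Your proof is correct. The regularity step and the bound $\DRB(\bR^{(1)}) \geq \DRB(\bR^{(0)})$ match the paper: concave envelope, pointwise dominance, then \Cref{lem:revenue_monotonicity}. The invariance of {\UIVVSPP}, however, takes a genuinely different route.

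The paper never looks at prices. It notes that ironing leaves the ironed virtual value distributions unchanged, $\bar{\bD}^{(1)} = \bar{\bD}^{(0)}$. Under the {\UIVV} threshold, the allocation in any fixed order is a function of ironed virtual values alone. The equality case of revenue equivalence (\Cref{prop:revenue_equivalence}) then turns each order's revenue into an ironed virtual welfare that is manifestly the same for both instances.

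You instead show that the two posted-price mechanisms are literally identical. The cutoff quantile $q_i^{*}$ cannot lie in the interior of an ironing interval, because $\bar{R}_i'$ is constant there. So each buyer faces the same price with the same acceptance probability, and sequential posted-pricing revenue in every order depends only on these pairs. This is the style the paper itself uses for \Cref{lem:extend}.

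Both arguments rest on the same geometric fact. The paper hides it inside the assertion that {\UIVV} prices satisfy the equality condition of revenue equivalence; you make it explicit. The paper's version is shorter and is the template it reuses for \Cref{lem:truncate}. Yours is more elementary and needs no payment identity, at the cost of endpoint bookkeeping.

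One refinement on that bookkeeping, which you flagged as the main obstacle. The paper's conventions are a left-continuous $F_i$ and $R_i(q) = q \cdot F_i^{-1}(1-q)$. Under them, at a quantile corresponding to a gap in $\supp(F_i)$, the curve $R_i^{(0)}$ takes the lower one-sided value. So $R_i^{(0)}(q_i^{*}) < \Conv(R_i^{(0)})(q_i^{*})$ can occur at an ironing endpoint, and your claim that both curves agree at the endpoints fails literally there.

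For example, take $v_i$ uniform on $[1,2]$ with probability $\frac{1}{2}$ and uniform on $[3,4]$ otherwise, with $T = 1$. Then $q_i^{*} = \frac{1}{2}$, $R_i^{(0)}(\frac{1}{2}) = 1$, but $\Conv(R_i^{(0)})(\frac{1}{2}) = \frac{3}{2}$. What agrees with the ironed curve is the left limit $\lim_{q \uparrow q_i^{*}} R_i^{(0)}(q)$. This is exactly the price selected by the $\inf\{v \mid \bar{\varphi}_i(v) \geq T\}$ form of \Cref{def:UIVV_prices}, here $3$ rather than $2$. So define the price as $\lim_{q \uparrow q_i^{*}} R_i^{(0)}(q) / q_i^{*}$, not $F_i^{-1}(1 - q_i^{*})$, and your argument goes through.
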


\begin{proof}
Every output curve $R_{i}^{(1)} = \Conv(R_{i}^{(0)})$ is the concave envelope of its input counterpart $R_{i}$ (Line~\ref{alg:iron}) and thus satisfies {\bf regularity} (\Cref{def:regular}). Below, we investigate the revenues from {\DualityRelaxationBenchmark} and {\UIVVSequentialPostedPricing}.
\begin{itemize}
    \item $\DRB(\bR^{(1)}) \geq \DRB(\bR^{(0)})$.
    The {\DualityRelaxationBenchmark} revenue increases by the revenue monotonicity (\Cref{lem:revenue_monotonicity}) and that the output instance stochastically dominates the input instance $\bR^{(1)} \succeq \bR^{(0)}$ by construction (Line~\ref{alg:iron}).
    
    \item $\UIVVSPP(\bR^{(1)},\, T^{(0)}) = \UIVVSPP(\bR^{(0)},\, T^{(0)})$.
    The ironed virtual value CDF's keep the same $\bar{\bD}^{(1)} = \bar{\bD}^{(0)}$ under the reduction. Hence, under the {\UIVV}-threshold $T^{(0)}$ from \Cref{def:UIVV_prices}, the revenue equivalence (\Cref{prop:revenue_equivalence}) is applicable to both instances $\bR^{(1)}$ and $\bR^{(0)}$.
    Then an arbitrary but the same arrival order $\pi^{(1)} = \pi^{(0)} \in \Pi_{n}$ induces the same allocation rule $\bx^{(1)} = \bx^{(0)}$ and the same ironed virtual welfare
    \[
        \E_{\bphi \,\sim\, \bar{\bD}^{(1)}}\, [\, \bphi \cdot \bx^{(1)}(\bphi) \,]
        ~=~ \E_{\bphi \,\sim\, \bar{\bD}^{(1)}}\, [\, \bphi \cdot \bx^{(0)}(\bphi) \,]
        ~=~ \E_{\bphi \,\sim\, \bar{\bD}^{(0)}}\, [\, \bphi \cdot \bx^{(0)}(\bphi) \,].
    \]
    Specifically, the same worst-case arrival order $\pi^{(1)*} = \pi^{(0)*} \in \Pi_{n}$ induces the same worst-case virtual welfare, i.e., the same {\UIVVSequentialPostedPricing} revenue in the order-oblivious model
    $\UIVVSPP(\bR^{(1)},\, T^{(0)}) = \UIVVSPP(\bR^{(0)},\, T^{(0)})$. This finishes the proof.\qedhere
\end{itemize}
\end{proof}

\begin{remark}[Generalizations]
The first part of \Cref{lem:iron}, that $\DRB(\bR^{(1)}) \geq \DRB(\bR^{(0)})$, only uses the revenue monotonicity of {\DualityRelaxationBenchmark}. Therefore, when we consider a different benchmark that also satisfies the revenue monotonicity, \Cref{lem:iron} can be generalized seamlessly. This is also the case later in \Cref{lem:truncate,lem:extend}.
\end{remark}

The new instance $\bR^{(1)}$ slightly differs from our target: It would have a worse revenue gap than the {\em original} instance as desired, $\calC_{\DRB / \UIVVSPP}(\bR^{(1)},\, T^{(0)}) \geq \calC_{\DRB / \UIVVSPP}(\bR^{(0)},\, T^{(0)})$, but only if we could keep using the {\em original} instance's {\UIVV}-threshold $T^{(0)}$ for it.
The subsequent reductions in \Cref{subsec:truncate,subsec:extend} also have this shortcoming, but we will settle it in \Cref{subsec:scale} through the {\scale} reduction.\footnote{We can settle the issue directly for $\bR^{(1)}$ through a more sophisticated version of {\scale}. But the subsequent reductions in \Cref{subsec:truncate,subsec:extend} will output instances with better structures than $\bR^{(1)}$, for which the current version of {\scale} is sufficient. To ease the presentation, we defer {\scale} to \Cref{subsec:scale} and use the current version.}

Given the {\iron} reduction and \Cref{lem:iron}, we can concentrate on {\em regular} instances $\bF$, so the {\em concave} revenue-quantile curves $\bR$ and the {\em well-defined} virtual value CDF's $\bD$ are identical to the ironed counterparts $\bar{\bR}$ and $\bar{\bD}$.
Without ambiguity, later we will not distinguish those ironed counterparts from $\bR$ and $\bD$.

\afterpage{
\begin{figure}[t]
    \centering
    \begin{mdframed}
    Reduction $\truncate(\bD^{(1)},\, T^{(0)})$

    \begin{flushleft}
    {\bf Input:} The {\em regular} instance $\bD^{(1)} = \{D_{i}^{(1)}\}_{i \in [n]}$ (from \Cref{subsec:iron}) and the threshold $T^{(0)} > 0$ for the {\em original} instance $\bR = \bR^{(0)}$ (from \Cref{subsec:iron}).
    
    \vspace{.05in}
    {\bf Output:} A {\em truncated} instance $\bD^{(2)} = \{D_{i}^{(2)}\}_{i \in [n]}$.

    \begin{enumerate}
        \item\label{alg:truncate}
        {\bf Return} $\bD^{(2)} = \{D_{i}^{(2)}\}_{i \in [n]}$; each CDF $D_{i}^{(2)}(\phi) \eqdef D_{i}^{(1)}(\phi) \cdot \indicator(\phi \geq T^{(0)+})$ for $\phi \in \R$.
    \end{enumerate}
    \end{flushleft}
    \end{mdframed}
    \caption{The {\truncate} reduction.
    \label{fig:alg:truncate}}
\end{figure}
\begin{figure}[t]
    \centering
    \subfloat[\label{fig:truncate_CDF_old}
    input {\em untruncated} virtual value CDF's]{
    \includegraphics[width = .48\textwidth]
    {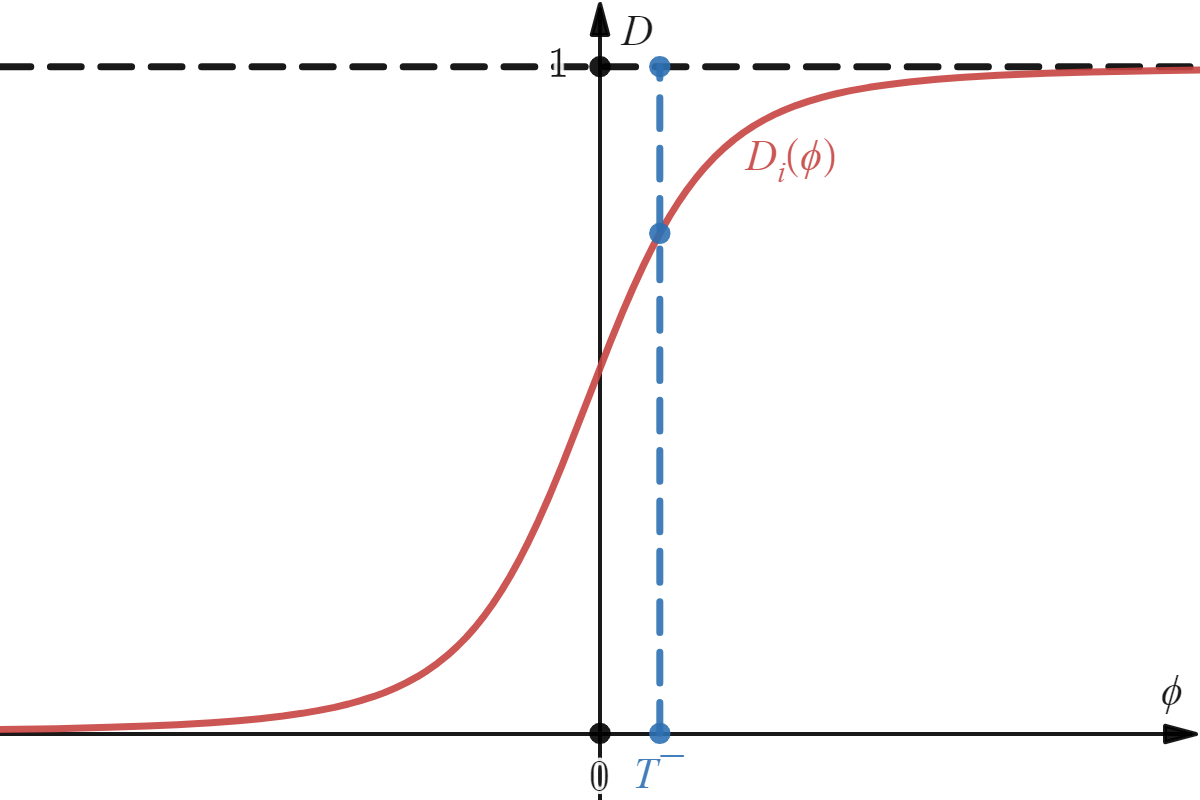}}
    \hfill
    \subfloat[\label{fig:truncate_CDF_new}
    output {\em truncated} virtual value CDF's]{
    \includegraphics[width = .48\textwidth]
    {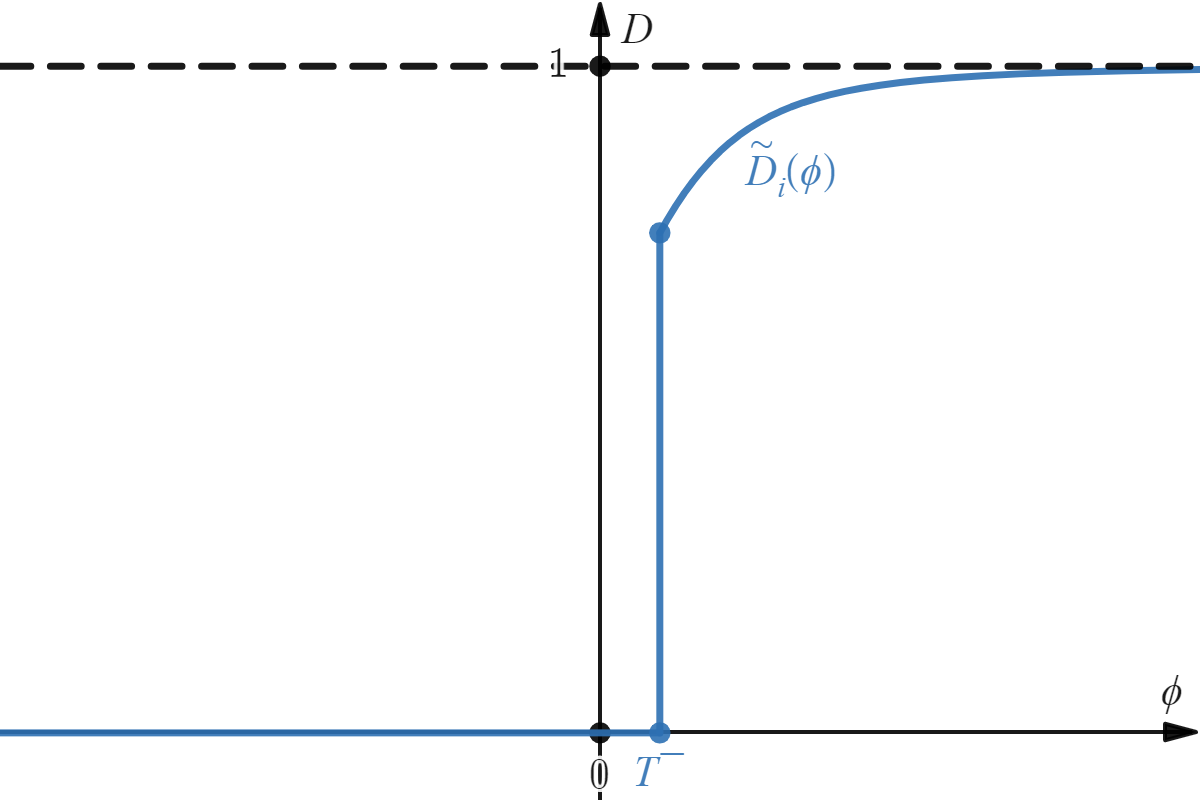}} \\
    \subfloat[\label{fig:truncate_curve_old}
    input {\em untruncated} revenue-quantile curves]{
    \includegraphics[width = .48\textwidth]
    {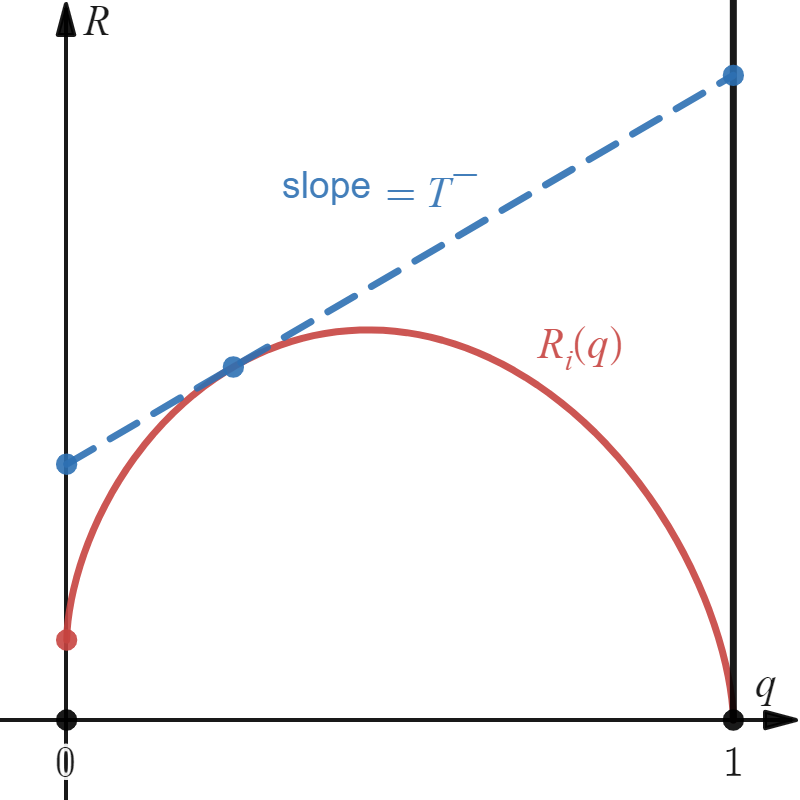}}
    \hfill
    \subfloat[\label{fig:truncate_curve_new}
    output {\em truncated} revenue-quantile curves]{
    \includegraphics[width = .48\textwidth]
    {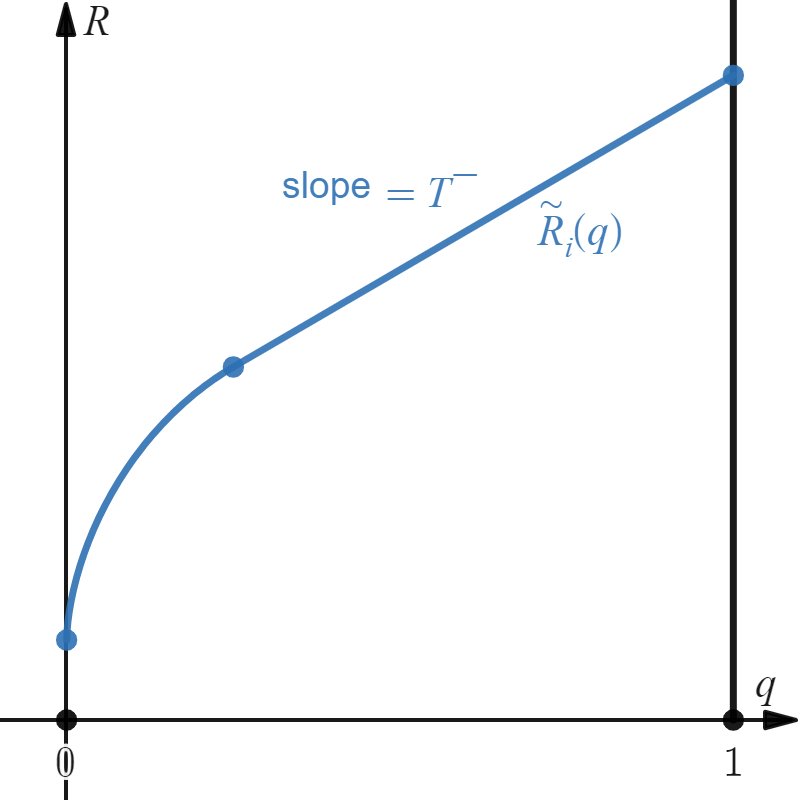}}
    \caption{Diagrams of {\truncate} with virtual value CDF's and revenue-quantile curves.
    \label{fig:truncate}}
\end{figure}
\clearpage}

\subsection{{\truncate}: Characterizing the below-threshold parts of virtual values}
\label{subsec:truncate}

This section gives the {\truncate} reduction (see \Cref{fig:alg:truncate,fig:truncate} for a description and a visual aid), based on {\bf virtual value CDF's}, which transforms a {\em regular} instance $\bD^{(1)}$ from \Cref{subsec:iron} into a {\em truncated} instance $\bD^{(2)}$ (\Cref{def:truncate}).

\Cref{lem:truncate} summarizes the performance guarantees of the {\truncate} reduction.

\begin{definition}[Truncated Instances]
\label{def:truncate}
\begin{flushleft}
A {\em regular} instance $\bF$ is further called {\em truncated} when its {\em well-defined} virtual value CDF's $\bD = \{D_{i}\}_{i \in [n]}$ all satisfy {\bf truncatedness}:
there exists a positive truncation point $\tau > 0$ such that $D_{i}(\phi) = 0$ for $\phi < \tau^{-}$ and $i \in [n]$, i.e., each $D_{i}$ is supported within $\supp(D_{i}) \subseteq \{\tau^{-}\} \cup [\tau^{+},\, +\infty]$.
Define the item-wise {\em division points} $\bkappa = \{\kappa_{i}\}_{i \in [n]}$ as the probabilities $\kappa_{i} \eqdef \Pr_{\phi_{i} \,\sim\, D_{i}}\, [\phi_{i} \geq \tau^{+}] = 1 - \Pr_{\phi_{i} \,\sim\, D_{i}}\, [\phi_{i} = \tau^{-}] \in [0,\, 1]$.

Following the definition of virtual values (\Cref{subsec:distribution}), an equivalent definition of {\bf truncatedness} is that all {\em concave} revenue-quantile curves $\bR = \{R_{i}\}_{i \in [n]}$ always have the derivatives $R'_{i}(q) \geq \tau^{-}$ for $q \in [0,\, 1]$. Thus, the {\em division points} are given by $\kappa_{i} \eqdef \sup\, \{ q \in [0,\, 1] \mid R'_{i}(q) \geq \tau^{+} \}$ for $i \in [n]$.
\end{flushleft}
\end{definition}

\begin{lemma}[{\truncate}]
\label{lem:truncate}
The reduction $\bD^{(2)} \gets \truncate(\bD^{(1)})$ outputs a truncated instance $\bD^{(2)}$ such that $\DRB(\bD^{(2)}) \geq \DRB(\bD^{(1)})$ and $\UIVVSPP(\bD^{(2)},\, T^{(0)}) = \UIVVSPP(\bD^{(1)},\, T^{(0)})$.
\end{lemma}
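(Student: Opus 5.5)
The plan mirrors the proof of \Cref{lem:iron}: verify the structural property, then handle the two revenues separately. \emph{Truncatedness} is immediate from Line~\ref{alg:truncate}. The CDF $D_{i}^{(2)}(\phi) = D_{i}^{(1)}(\phi)\cdot\indicator(\phi \geq T^{(0)+})$ vanishes below $T^{(0)}$. So all virtual-value mass of $D_{i}^{(1)}$ lying at or below the threshold is moved to the single point $T^{(0)-}$ (just below $T^{(0)}$), while the mass strictly above $T^{(0)}$ is unchanged. Hence $\bD^{(2)}$ is still well defined, i.e.\ regular. It is truncated with $\tau = T^{(0)} > 0$ and division points $\kappa_{i} = \Pr_{\phi_{i}\sim D^{(1)}_{i}}[\phi_{i} > T^{(0)}]$. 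In revenue-quantile terms, $R^{(2)}_{i}$ coincides with $R^{(1)}_{i}$ on $[0,\kappa_{i}]$ and is linear with slope $T^{(0)-}$ on $[\kappa_{i},1]$.

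For $\DRB(\bD^{(2)}) \geq \DRB(\bD^{(1)})$, I will invoke \Cref{rem:revenue_monotonicity} together with \Cref{lem:revenue_monotonicity}. Truncation only raises virtual values pointwise in quantile: for $q \geq \kappa_{i}$ we have $R'^{(1)}_{i}(q) \leq T^{(0)}$, which is replaced by $T^{(0)-}$, and for $q<\kappa_{i}$ nothing changes. Thus $D^{(2)}_{i}(\phi) \leq D^{(1)}_{i}(\phi)$ for all $\phi$, i.e.\ $\bD^{(1)} \preceq \bD^{(2)}$. Equivalently, with $R^{(2)}_{i}(0)=R^{(1)}_{i}(0)$, integrating the larger derivative gives $R^{(1)}_{i} \leq R^{(2)}_{i}$ pointwise. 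Then $\bF^{(1)} \preceq \bF^{(2)}$, and \Cref{lem:revenue_monotonicity} yields the inequality.

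For $\UIVVSPP(\bD^{(2)},T^{(0)}) = \UIVVSPP(\bD^{(1)},T^{(0)})$, I argue in virtual-value space via revenue equivalence (\Cref{prop:revenue_equivalence}). Under threshold $T^{(0)}$, the \UIVV{} prices of \Cref{def:UIVV_prices} accept buyer $i$ exactly when his virtual value is at least $T^{(0)}$, i.e.\ on quantiles $[0,\kappa_{i}]$, in both instances. The acceptance event has the same probability $\kappa_{i}$ and the same conditional virtual-value distribution above $T^{(0)}$. The rejected mass never contributes to virtual welfare, since it is never allocated. So for every arrival order $\pi$, the allocation as a function of quantiles is identical. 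Consequently the expected virtual welfare $\E[\bphi\cdot\bx(\bphi)]$ is identical, hence so is the revenue, and minimizing over $\pi$ gives equality of the order-oblivious revenues.

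The main obstacle is the boundary mass exactly at $\phi = T^{(0)}$: an atom of $D^{(1)}_{i}$ there, i.e.\ a linear piece of $R^{(1)}_{i}$ with slope $T^{(0)}$. By \Cref{def:UIVV_prices}, with $p_{i} = \inf\{v : \bar\varphi_{i}(v)\geq T^{(0)}\}$, such mass is accepted, so I must make sure the truncation convention ($T^{(0)+}$ versus $T^{(0)-}$) places it on the same side in both instances. I will handle this with the paper's $\pm$ infinitesimal convention. Mass at exactly $T^{(0)}$ is kept above the threshold as $T^{(0)+}$ (included in $\kappa_{i}$), or equivalently the set is defined by $\phi\geq T^{(0)}$. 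The sentinel $T^{(0)-}$ then lies strictly below the threshold and is rejected. With this, the acceptance sets coincide, and the virtual-welfare contribution of the moved mass is zero in both instances because it is never allocated.
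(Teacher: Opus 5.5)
Your proposal is correct and follows the paper's proof essentially step by step: truncatedness from Line~\ref{alg:truncate}, the benchmark inequality via \Cref{rem:revenue_monotonicity} and \Cref{lem:revenue_monotonicity}, and the \UIVV{} revenue equality via identical acceptance events, hence identical allocation and virtual welfare under every arrival order, plus revenue equivalence. Your explicit treatment of an atom exactly at $T^{(0)}$ is slightly more careful than the paper, which writes $T^{(0)+}$ in the reduction and $T^{(0)-}$ in the proof without comment. Keeping that atom on the accepted side is consistent with $p_i = \inf\{v \mid \bar\varphi_i(v) \ge T\}$, but you should then replace your initial definition $\kappa_i = \Pr[\phi_i > T^{(0)}]$ with $\Pr[\phi_i \ge T^{(0)}]$ throughout.
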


\begin{proof}
Each output CDF $D_{i}^{(2)}(\phi) \eqdef D_{i}^{(1)}(\phi) \cdot \indicator(\phi \geq T^{(0)-})$ for $i \in [n]$ truncates the $\leq T^{(0)-}$ part of its input counterpart $D_{i}^{(1)}$ (Line~\ref{alg:truncate}), where the truncation point $T^{(0)} > 0$ is the threshold for the {\em original} instance $\bR^{(0)}$ from \Cref{subsec:iron}.
This output instance $\bD^{(2)}$ satisfies both {\bf regularity} (since the virtual value CDF's $\{D_{i}^{(2)}\}_{i \in [n]}$ are {\em well-defined}; \Cref{subsec:distribution}) and {\bf truncatedness} (since $\tau^{(2)} = T^{(0)}$ is a valid truncation point).
Below, we examine the revenues from {\DualityRelaxationBenchmark} and {\UIVVSequentialPostedPricing}.
\begin{itemize}
    \item $\DRB(\bD^{(2)}) \geq \DRB(\bD^{(1)})$.
    The {\DualityRelaxationBenchmark} revenue increases by the revenue monotonicity (\Cref{lem:revenue_monotonicity}) and that output CDF's $D_{i}^{(2)}(\phi) \eqdef D_{i}^{(1)}(\phi) \cdot \indicator(\phi \geq T^{(0)+})$ stochastically dominate the input counterparts $D_{i}^{(1)}$ (Line~\ref{alg:truncate}); as mentioned in \Cref{rem:revenue_monotonicity}, this implies the stochastic dominance between the instances $\bD^{(2)} \succeq \bD^{(1)} \Rightarrow \bR^{(2)} \succeq \bR^{(1)} \iff \bF^{(2)} \succeq \bF^{(1)}$.
    
    \item $\UIVVSPP(\bD^{(2)},\, T^{(0)}) = \UIVVSPP(\bD^{(1)},\, T^{(0)})$.
    The reduction preserves the $\geq T^{(0)+}$ parts of the input CDF's $\bD^{(1)} = \{D_{i}^{(1)}\}_{i \in [n]}$.
    Therefore, using the same threshold $T^{(0)} > 0$ and an arbitrary but the same arrival order $\pi^{(2)} = \pi^{(1)} \in \Pi_{n}$ for both instances $\bD^{(2)}$ and $\bD^{(1)}$ induces the same allocation rule $\bx^{(2)} = \bx^{(1)}$ and the same virtual welfare
    $\E_{\bphi \,\sim\, \bD^{(2)}}\, [\, \bphi \cdot \bx^{(2)}(\bphi) \,]
    = \E_{\bphi \,\sim\, \bD^{(1)}}\, [\, \bphi \cdot \bx^{(1)}(\bphi) \,]$.
    Specifically, the same worst-case arrival order $\pi^{(2)*} = \pi^{(1)*}$ induces the same worst-case virtual welfare, i.e., (the revenue equivalence; \Cref{prop:revenue_equivalence}) the same {\UIVVSequentialPostedPricing} revenue in the order-oblivious model $\UIVVSPP(\bD^{(2)},\, T^{(0)}) = \UIVVSPP(\bD^{(1)},\, T^{(0)})$. This finishes the proof.\qedhere
\end{itemize}
\end{proof}

\begin{remark}[Generalizations]
Again, if we replace {\DualityRelaxationBenchmark} with another benchmark that also satisfies revenue monotonicity, \Cref{lem:truncate} can be generalized seamlessly.
\end{remark}

Given the {\truncate} reduction and \Cref{lem:truncate}, we can concentrate on {\em truncated} instances $\bF$.

\subsection{{\extend}: Characterizing the above-threshold parts of virtual values}
\label{subsec:extend}

This section shows the {\extend} reduction (see \Cref{fig:alg:extend,fig:extend} for a description and a visual aid), based on {\bf revenue-quantile curves}, which transforms a {\em truncated} instance $\bR^{(2)}$ from \Cref{subsec:truncate} into a {\em semi-linear} instance $\bR^{(3)}$ (\Cref{def:extend}).

\Cref{lem:extend} summarizes  the performance guarantees of the {\extend} reduction.

\begin{definition}[Semi-Linear Instances]
\label{def:extend}
\begin{flushleft}
A {\em truncated} instance $\bF$ is further called {\em semi-linear} when its {\em concave} revenue-quantile curves $\bR = \{R_{i}\}_{i \in [n]}$ all satisfy {\bf semi-linearity}: there exist a positive truncation point $\tau > 0$ and item-wise division points $\bkappa = \{\kappa_{i}\}_{i \in [n]} \in [0,\, 1]^{n}$ such that, for each $i \in [n]$, the derivative $R'_{i}(q) = \tau^{+}$ for all $q \in [0,\, \kappa_{i})$ and $R'_{i}(q) = \tau^{-}$ for all $q \in (\kappa_{i},\, 1]$.\footnote{\label{footnote:extend}The division point $\{q = \kappa_{i}\}$ refers to a zero measure and thus can be included to either interval $[0,\, \kappa_{i})$ or $(\kappa_{i},\, 1]$. A {\em semi-linear} curve $R_{i}$'s virtual value CDF $\phi_{i} \sim D_{i}$ is better interpreted as (\Cref{subsec:distribution}):
For an infinitesimal probability $\eps_{i} \to 0^{+}$, $\Pr [\phi_{i} = \tau^{-}] = 1 - \kappa_{i}$, $\Pr [\phi_{i} = \tau^{+}] = \kappa_{i} - \eps_{i}$, and $\Pr [\phi_{i} = R_{i}(0) / \eps_{i} + \tau] = \eps_{i}$. (If $\kappa_{i} = 0$, we instead consider $\Pr [\phi_{i} = \tau^{-}] = 1 - \eps_{i}$ and $\Pr [\phi_{i} = R_{i}(0) / \eps_{i} + \tau] = \eps_{i}$; then all later proofs are symmetric.)}
Notably, this condition itself implies {\bf regularity} and {\bf truncatedness} (\Cref{def:regular,def:truncate}).
\end{flushleft}
\end{definition}

\afterpage{
\begin{figure}[t]
    \centering
    \begin{mdframed}
    Reduction $\extend(\bR^{(2)},\, T^{(0)})$

    \begin{flushleft}
    {\bf Input:}
    The {\em truncated} instance $\bR^{(2)} = \{R_{i}^{(2)}\}_{i \in [n]}$ (from \Cref{subsec:truncate}) and the threshold $T^{(0)} > 0$ for the {\em original} instance $\bR = \bR^{(0)}$ (from \Cref{subsec:iron}).
    
    \vspace{.05in}
    {\bf Output:} A {\em semi-linear} instance $\bR^{(3)} = \{R_{i}^{(3)}\}_{i \in [n]}$.

    \begin{enumerate}
        \item\label{alg:extend}
        {\bf Return} $\bR^{(3)} = \{R_{i}^{(3)}\}_{i \in [n]}$; each curve $R_{i}^{(3)}(q) \eqdef T^{(0)+} \cdot (q - \kappa_{i}^{(2)}) + R_{i}^{(2)}(\kappa_{i}^{(2)})$ for $q \in [0,\, \kappa_{i}^{(2)}]$ and $R_{i}^{(3)}(q) \eqdef R_{i}^{(2)}(q) = T^{(0)-} \cdot (q - \kappa_{i}^{(2)}) + R_{i}^{(2)}(\kappa_{i}^{(2)})$ for $q \in (\kappa_{i}^{(2)},\, 1]$.
    \end{enumerate}
    \end{flushleft}
    \end{mdframed}
    \caption{The {\extend} reduction.
    \label{fig:alg:extend}}
\end{figure}
\begin{figure}[t]
    \subfloat[\label{fig:extend_curve_old}
    input {\em truncated} revenue-quantile curves]{
    \includegraphics[width = .48\textwidth]
    {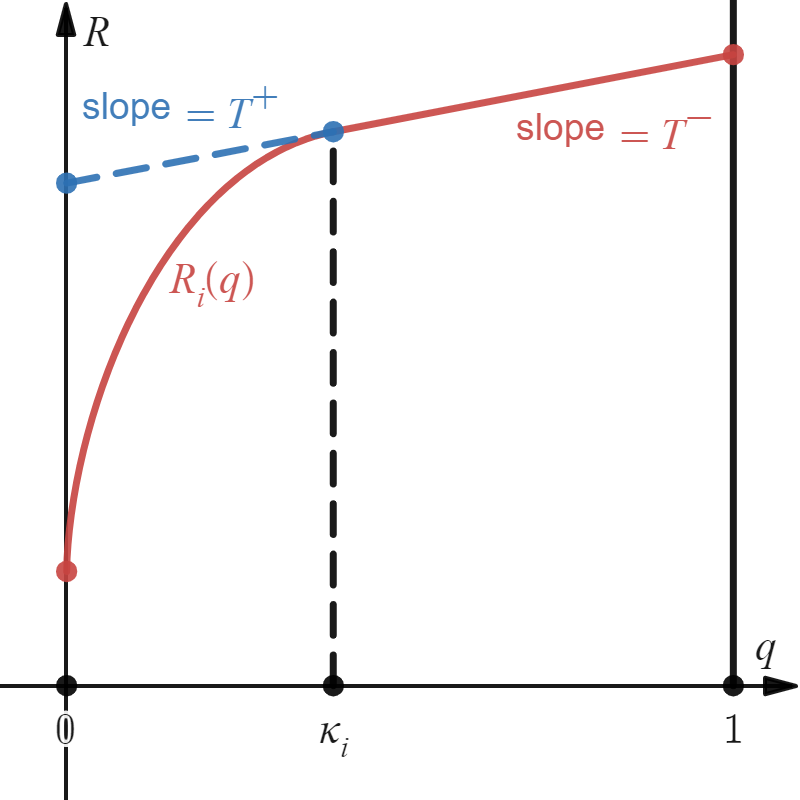}}
    \hfill
    \subfloat[\label{fig:extend_curve_new}
    output {\em semi-linear} revenue-quantile curves]{
    \includegraphics[width = .48\textwidth]
    {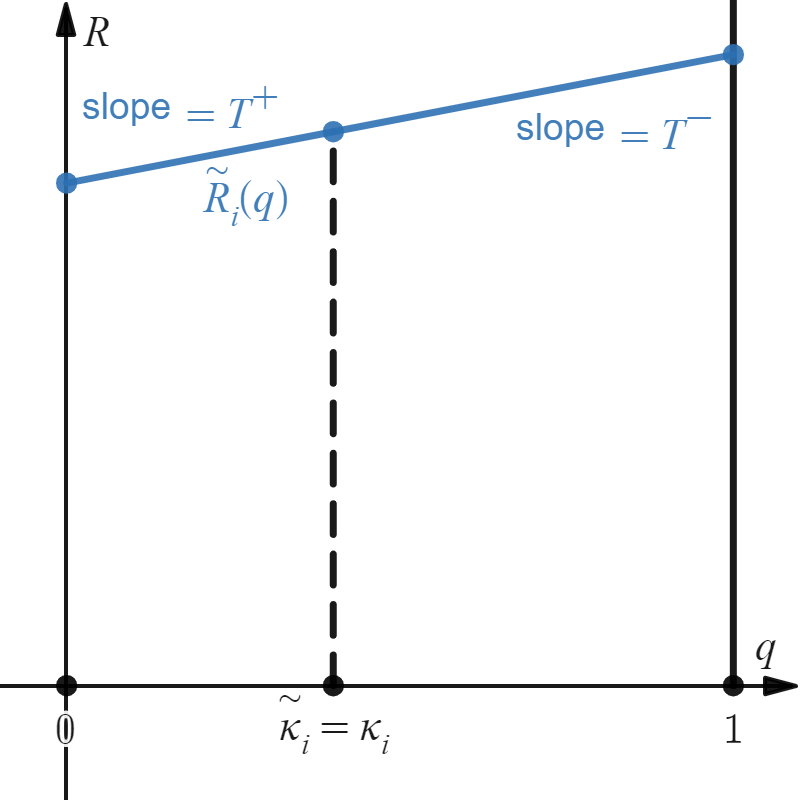}} \\
    \centering
    \subfloat[\label{fig:extend_CDF_old}
    input {\em truncated} virtual value CDF's]{
    \includegraphics[width = .48\textwidth]
    {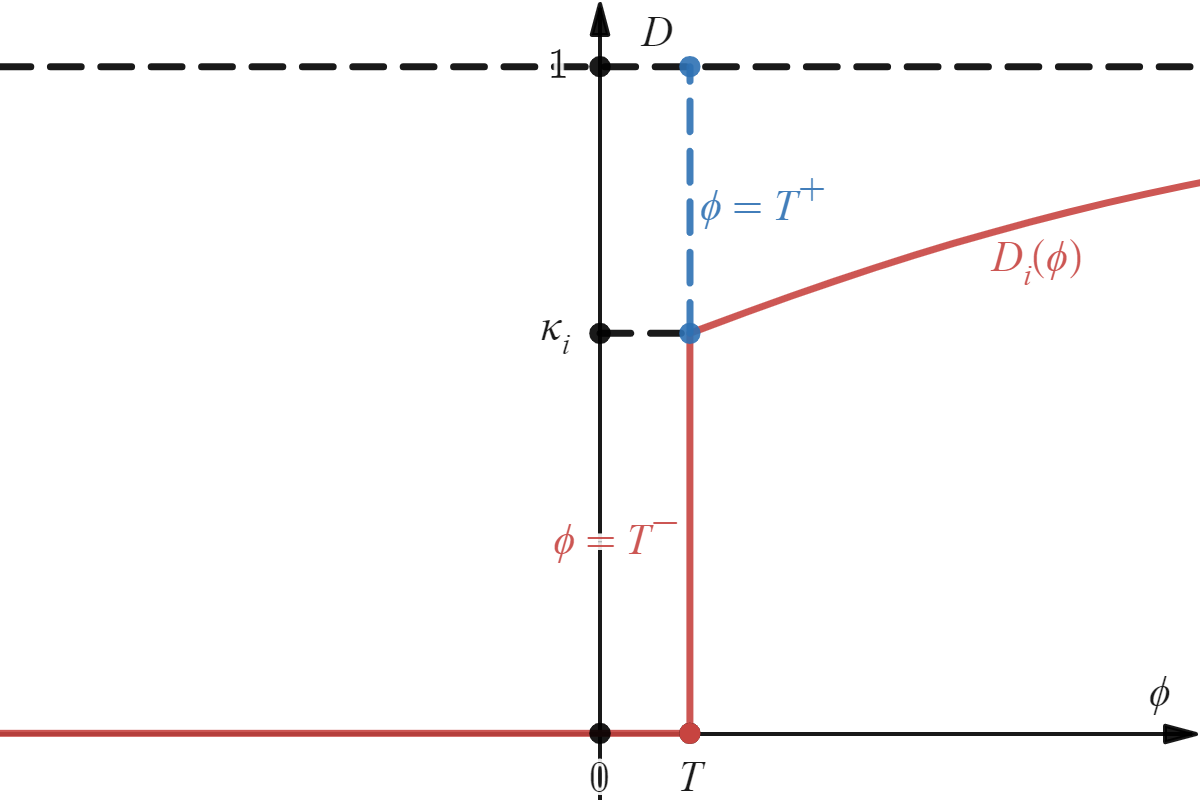}}
    \hfill
    \subfloat[\label{fig:extend_CDF_new}
    output {\em semi-linear} virtual value CDF's]{
    \includegraphics[width = .48\textwidth]
    {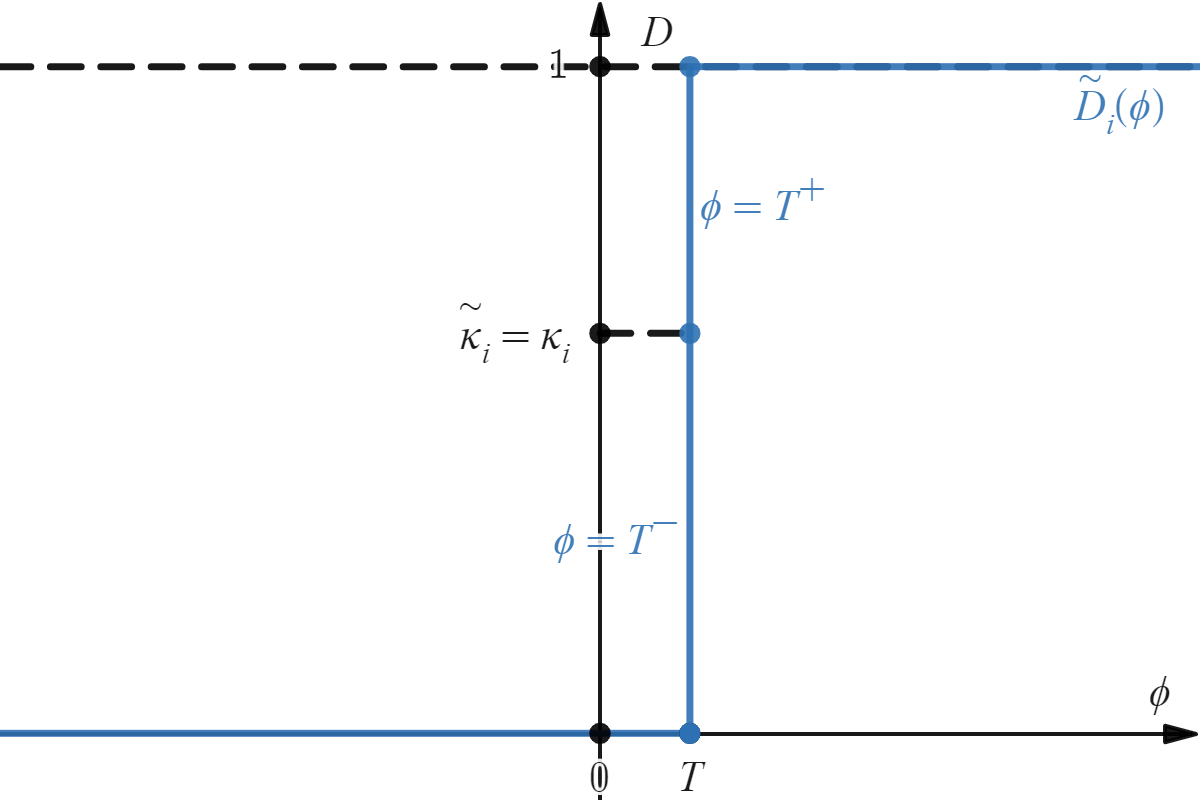}}
    \caption{Diagrams of {\extend} with virtual value CDF's and revenue-quantile curves.
    \label{fig:extend}}
\end{figure}
\clearpage}

\begin{lemma}[{\extend}]
\label{lem:extend}
\begin{flushleft}
The reduction $\bR^{(3)} \gets \extend(\bR^{(2)})$ outputs a semi-linear instance $\bR^{(3)}$ such that $\DRB(\bR^{(3)}) \geq \DRB(\bR^{(2)})$ and $\UIVVSPP(\bR^{(3)},\, T^{(0)}) = \UIVVSPP(\bR^{(2)},\, T^{(0)})$.
\end{flushleft}
\end{lemma}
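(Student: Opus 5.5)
We follow the template of \Cref{lem:iron,lem:truncate}: check that the output is semi-linear, show $\DRB$ weakly increases via stochastic dominance and \Cref{lem:revenue_monotonicity}, and show the order-oblivious {\UIVVSPP} revenue is unchanged through revenue equivalence (\Cref{prop:revenue_equivalence}).

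\textbf{Semi-linearity.} The input $\bR^{(2)}$ is truncated with truncation point $\tau^{(2)} = T^{(0)}$. Hence on $(\kappa_{i}^{(2)},\, 1]$ the derivative is already $R_{i}^{(2)\prime}(q) = T^{(0)-}$, so $R_{i}^{(2)}$ is linear there, and Line~\ref{alg:extend} leaves it unchanged. On $[0,\, \kappa_{i}^{(2)}]$ the new curve is the line of slope $T^{(0)+}$ through the point $(\kappa_{i}^{(2)},\, R_{i}^{(2)}(\kappa_{i}^{(2)}))$. Thus $R_{i}^{(3)}$ is continuous and concave, with slopes $\tau^{+}$ and then $\tau^{-}$ for $\tau = T^{(0)}$, and the division points are unchanged: $\kappa_{i}^{(3)} = \kappa_{i}^{(2)}$. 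By \Cref{def:extend}, the instance is semi-linear; in particular it is regular and truncated.

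\textbf{DRB.} We show the pointwise dominance $R_{i}^{(3)} \geq R_{i}^{(2)}$ on $[0,\, 1]$. Equality holds on $(\kappa_{i}^{(2)},\, 1]$. On $[0,\, \kappa_{i}^{(2)}]$, the derivative $R_{i}^{(2)\prime}(q) \geq T^{(0)+}$ there, since this is the above-threshold part. Integrating backwards from $\kappa_{i}^{(2)}$ gives, for $q \le \kappa_{i}^{(2)}$,
\[
R_{i}^{(2)}(q) ~=~ R_{i}^{(2)}(\kappa_{i}^{(2)}) - \int_{q}^{\kappa_{i}^{(2)}} R_{i}^{(2)\prime} ~\leq~ R_{i}^{(2)}(\kappa_{i}^{(2)}) - T^{(0)+} (\kappa_{i}^{(2)} - q) ~=~ R_{i}^{(3)}(q).
\]
Equivalently, a concave curve lies below its chord-extension from the right. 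So $\bR^{(3)} \succeq \bR^{(2)}$, and \Cref{lem:revenue_monotonicity} gives $\DRB(\bR^{(3)}) \geq \DRB(\bR^{(2)})$. The extra mass of $R_{i}^{(3)}(0) > 0$ corresponds to the infinite-value atom, interpreted as in \Cref{footnote:extend}.

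\textbf{UIVVSPP.} This is the step needing the most care.

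\emph{Step 1: the threshold gives the same accept/reject event.} Under threshold $T^{(0)}$, the price of item $i$ is the value at quantile $\kappa_{i}$ in both instances. Buyer $i$ accepts exactly when his quantile lies in $[0,\, \kappa_{i})$, i.e., when his virtual value is $\geq T^{(0)+}$. This event has the same probability $\kappa_{i}$ in both instances, and it is coupled through the same uniform quantile.

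\emph{Step 2: allocations coincide.} For any fixed arrival order $\pi$, the allocation as a function of the quantile profile is therefore identical in the two instances.

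\emph{Step 3: revenues coincide.} By revenue equivalence, the revenue equals the expected virtual welfare. Buyer $i$ is allocated with probability independent of where in $[0,\, \kappa_{i})$ his quantile lies, given the other buyers. His contribution is therefore $\Pr[\text{others do not take}] \cdot \int_{0}^{\kappa_{i}} R_{i}'(q)\, \d q = \Pr[\cdot] \cdot (R_{i}(\kappa_{i}) - R_{i}(0))$. Since $R_{i}(0)$ corresponds to a zero-quantile atom of infinite virtual value that is always accepted, the correct accounting is the total $R_{i}(\kappa_{i})$, i.e., the probability $\kappa_{i}$ times the price $p_{i} = R_{i}(\kappa_{i})/\kappa_{i}$. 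Because $R_{i}^{(3)}(\kappa_{i}) = R_{i}^{(2)}(\kappa_{i})$, every buyer's expected payment under every order $\pi$ is the same. Equivalently and more directly, the revenue is $\sum_i \Pr[\text{$i$ reached and accepts}] \cdot p_{i}$, and both the acceptance probabilities and the prices are preserved.

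\emph{Step 4: conclude.} Hence $\SPP(\cdot,\, \bp,\, \pi)$ agrees for all $\pi$, the worst-case order is the same, and $\UIVVSPP(\bR^{(3)},\, T^{(0)}) = \UIVVSPP(\bR^{(2)},\, T^{(0)})$.

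The main subtlety is making sure the {\UIVV} price, i.e., the tie-breaking at virtual value exactly $T^{(0)}$ under \Cref{def:UIVV_prices}, still selects quantile $\kappa_{i}$ after extension, where the slope is $T^{(0)+}$ on the whole segment. This is why we keep the $\tau^{\pm}$ infinitesimal convention: it makes the sale event unambiguous.
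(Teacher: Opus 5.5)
Your proof is correct and follows the paper's argument: semi-linearity holds by construction, the $\DRB$ inequality follows from the pointwise dominance $\bR^{(3)} \succeq \bR^{(2)}$ plus \Cref{lem:revenue_monotonicity}, and the {\UIVVSPP} revenue is unchanged because the prices $R_{i}(\kappa_{i})/\kappa_{i}$ and sale probabilities $\kappa_{i}$ are preserved under every arrival order. The revenue-equivalence detour in your Step~3 is unnecessary, since the paper uses only the direct payment count you give at the end, but it is consistent once the atom $R_{i}(0)$ is accounted for.
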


\begin{proof}
Each output curve $R_{i}^{(3)}$ essentially is the slope-$T^{(0)}$ tangent of its input counterpart $R_{i}^{(2)}$; see Line~\ref{alg:extend} for the detailed construction.
It is easy to check {\bf semi-linearity}, under the same truncation point $\tau^{(3)} = \tau^{(2)} = T^{(0)} > 0$ and the same division points $\bkappa^{(3)} = \bkappa^{(2)}$. Below, we investigate the revenues from {\DualityRelaxationBenchmark} and {\UIVVSequentialPostedPricing}.
\begin{itemize}
    \item $\DRB(\bR^{(3)}) \geq \DRB(\bR^{(2)})$.
    The {\DualityRelaxationBenchmark} revenue increases by the revenue monotonicity (\Cref{lem:revenue_monotonicity}) and that the output instance stochastically dominates the input instance $\bR^{(3)} \succeq \bR^{(2)}$ by construction (Line~\ref{alg:iron}).

    \item $\UIVVSPP(\bR^{(3)},\, T^{(0)}) = \UIVVSPP(\bR^{(2)},\, T^{(0)})$. Both instances have exactly the same posted prices $R_{i}^{(3)}(\kappa_{i}^{(3)}) / \kappa_{i}^{(3)} = R_{i}^{(2)}(\kappa_{i}^{(2)}) / \kappa_{i}^{(2)}$, $\forall i \in [n]$ and sale probabilities $\kappa_{i}^{(3)} = \kappa_{i}^{(2)}$, $\forall i \in [n]$. By the definition of {\SequentialPostedPricing}, they also have exactly the same revenue.
    This finishes the proof.
    \qedhere
\end{itemize}
\end{proof}

Given the {\extend} reduction and \Cref{lem:extend}, we can concentrate on {\em semi-linear} instances $\bF$.
In terms of {\bf revenue-quantile curves}, for such an instance $\bR = \{R_{i}\}_{i \in [n]}$ with a truncation point $\tau > 0$ and item-wise division points $\bkappa = \{\kappa_{i}\}_{i \in [n]} \in [0,\, 1]$, \Cref{lem:revenue_formula} relates the revenues from ${\DualityRelaxationBenchmark} \succeq {\MyersonAuction} \succeq {\SecondPriceAuction}$.

\begin{lemma}[Revenues of Semi-Linear Instances]
\label{lem:revenue_formula}
For a semi-linear instance $\bR = \{R_{i}\}_{i \in [n]}$: \\
(i)~$\DRB(\bR) = (\sum_{i \in [n]} R_{i}(0)) + \SPA(\bR) \leq (\sum_{i \in [n]} R_{i}(0)) + \MA(\bR)$, and \\
(ii)~$\MA(\bR) = (\sum_{i \in [n]} R_{i}(0)) + \tau$.
\end{lemma}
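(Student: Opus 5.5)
Since each semi-linear curve $R_i$ is piecewise linear with slope $\tau^{+}$ on $[0,\kappa_i)$ and $\tau^{-}$ on $(\kappa_i,1]$, its virtual value distribution is the one in \Cref{footnote:extend}: an atom of mass $\eps_i \to 0^{+}$ at the huge virtual value $R_i(0)/\eps_i + \tau$, and all remaining mass at virtual value $\tau^{\pm}$. We work in the limit $\eps_i \to 0^{+}$ and treat the tiny-probability atoms to first order. The probability that two or more items sit in their atoms is $O(\eps_i\eps_j)$, so it contributes nothing in the limit.

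For part~(ii), apply the revenue formula $\MA = \E[\max(\bar{\bvarphi}(\bv),0)]$. Condition on which item (if any) is in its atom. If item $i$ is in its atom, the maximal ironed virtual value is $R_i(0)/\eps_i+\tau$. This has probability $\eps_i$, so it contributes $R_i(0)+\tau\eps_i$. Otherwise every virtual value equals $\tau^{\pm}$, so the maximum is $\tau$ up to infinitesimals. Summing gives $\sum_i R_i(0)+\tau$ in the limit.

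For part~(i), I would use the formula $\DRB = \E[\max(\bar{\varphi}_{(1)}(v_{(1)}),v_{(2)})]$. Again split into two events.
\begin{itemize}
\item \emph{Some item $i$ is in its atom.} Its value $R_i(0)/\eps_i$ is then the highest value, its virtual value $R_i(0)/\eps_i+\tau$ dominates $v_{(2)}$, and the contribution is $\sum_i R_i(0)$.
\item \emph{No item is in its atom.} The highest bidder's virtual value is $\tau^{\pm}$. Since $\varphi\le v$ for every bidder, I expect $v_{(2)}\ge\tau$ here, so that $\max(\tau,v_{(2)})=v_{(2)}$. On this event the contribution equals $\SPA$'s contribution, and on the complementary event $\SPA$ contributes only $o(1)$, since there the second-highest value is an ordinary finite value times an $\eps$ probability. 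This yields $\DRB=\sum_i R_i(0)+\SPA$.
\end{itemize}
The inequality $\SPA\le\MA$ then finishes (i).

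The one genuine obstacle is the claim $v_{(2)}\ge\tau$, i.e., that every value is at least $\tau$. This should follow because slopes $\ge\tau^{-}$ on a concave curve through $R_i(q)=q\,v$ force $R_i(q)/q\ge \tau$. More concretely, the lowest value is $R_i(1)$, and $R_i(1)\ge R_i(0)+\tau^{-}\cdot 1\ge\tau$ up to infinitesimals. One caveat: if only $v_{(2)}\ge\tau^{-}$ holds, the gap is an infinitesimal and vanishes in the limit.

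$\SPA\le\MA$ is the optimality of {\MyersonAuction} among truthful single-item mechanisms. It can also be checked directly on semi-linear instances, since every nonatomic value $v$ has virtual value $\tau\le v$, by the revenue-equivalence bound of \Cref{prop:revenue_equivalence}.
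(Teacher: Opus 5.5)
Your proposal is correct and is essentially the paper's argument: condition on whether some item realizes its vanishing top atom (the paper fixes a common truncation height $h$ and sets $\eps_i$ by $R_i(\eps_i)/\eps_i=h$, which also makes your ``the atom is the highest value'' claim exact), observe that otherwise the top bidder's virtual value $\tau$ lies below $v_{(2)}\ge R_j(0)+\tau$, and compute $\MA$ by the same split. The only nit is that on the atom event the conditional mean of $v_{(2)}$ grows like $\ln(1/\eps)$ rather than staying bounded; since $\eps\ln(1/\eps)\to 0$, your $o(1)$ claim (and the paper's) still holds.
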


\begin{proof}
Up to pointwise infinitesimal errors $\leq |\tau^{+} - \tau^{-}| \to 0^{+}$, (\Cref{def:extend}) each {\em semi-linear} curve $R_{i}$ takes the form of $R_{i}(q) = R_{i}(0) + \tau \cdot q$ for $q \in [0,\, 1]$ and (elementary algebra) its value CDF $F_{i}$ takes the form of $F_{i}(v) = 1 - \frac{R_{i}(0)}{v - \tau}$ for $v \geq R_{i}(0) + \tau$.

Let $h \to +\infty$ (without loss of generality, we assume $h > \tau$) and $\eps_{i} = \eps_{i}(h) \to 0^{+}$ be such that $R_{i}(\eps_{i}) / \eps_{i} = h$.
As mentioned in \Cref{subsec:distribution,footnote:extend}, $F_{i}$ is better interpreted as a truncated CDF $F_{i}^{(h)}(v) \eqdef 1 - (1 - F_{i}(v)) \cdot \indicator(v \le h)$, and the virtual value CDF $D_{i}$ as $\Pr_{\phi_{i} \,\sim\, D_{i}^{(h)}}\, [\phi_{i} = h] = \eps_{i}$ and $\Pr_{\phi_{i} \,\sim\, D_{i}^{(h)}}\, [\phi_{i} = \tau] = 1 - \eps_{i}$.
Accordingly, a ``supremum'' highest value $v_{i^{*}} = \max(\bv) = h$ yields a ``supremum'' virtual value $\varphi_{i^{*}}(v_{i^{*}}) = h$, which is always {\em above} the other values $\bv_{-i^{*}} \preceq h^{\otimes (n - 1)}$.
In contrast, a ``non-supremum'' highest value $v_{i^{*}} = \max(\bv) < h$ yields an ``infimum'' virtual value $\varphi_{i^{*}}(v_{i^{*}}) = \tau$, which is always {\em below} the other values $\bv_{-i^{*}} \succeq (R_{i}(0) + \tau)_{i \ne i^{*}}$.

By reindexing the random values $\bv = (v_{i})_{i \in [n]} \sim \bF$ in decreasing order $v_{(1)} \geq v_{(2)} \geq \dots \geq v_{(n)}$, we can formulate the {\DualityRelaxationBenchmark} revenue (up to an infinitesimal error):
\begin{align*}
    \DRB(\bR)
    & ~=~ \lim_{h \to +\infty} \E_{\bv \,\sim\, \bF^{(h)}} \Big[\, \max_{i \in [n]} \big\{ \varphi_{i}(v_{i}) \cdot \indicator(i = \argmax(\bv)) + v_{i} \cdot \indicator(i \neq \argmax(\bv)) \big\} \,\Big] \\
    & ~=~ \lim_{h \to +\infty} \E_{\bv \,\sim\, \bF^{(h)}} \Big[\, h \cdot \indicator(v_{(1)} = h) \,\Big] + \lim_{h \to +\infty} \E_{\bv \,\sim\, \bF^{(h)}} \Big[\, v_{(2)} \cdot \indicator(v_{(1)} < h) \,\Big] \\
    & ~=~ \lim_{h \to +\infty} h \cdot \Big(1 - \prod_{i \in [n]} (1 - \eps_{i})\Big) + \SPA(\bR) \\
    & ~=~ \Big(\sum_{i \in [n]} R_{i}(0)\Big) + \SPA(\bR) \\
    & ~\leq~ \Big(\sum_{i \in [n]} R_{i}(0)\Big) + \MA(\bR).
\end{align*}
The first step restates the {\DualityRelaxationBenchmark} revenue. \\
The second step uses the linearity of expectation and the above arguments. A ``supremum'' highest value $v_{(1)} = h$ yields $\varphi_{(1)}(v_{(1)}) = h \ge v_{(2)}$, so the outcome revenue $= h$. Instead, a ``non-supremum'' highest value $v_{(1)} < h$ yields $\varphi_{(1)}(v_{(1)}) = \tau \le v_{(2)}$, so the outcome revenue $= v_{(2)}$. \\
The third step: The first term uses $\Pr_{v_{i} \,\sim\, F_{i}^{(h)}}[v_{i} = h] = \eps_{i}$. And the second term uses $\SPA(\bF^{(h)}) = \E_{\bv \,\sim\, \bF^{(h)}} [v_{(2)}] = \E_{\bv \,\sim\, \bF^{(h)}} [v_{(2)} \cdot \indicator(v_{(1)} \le h)]$ (given that $F_{i}^{(h)}$'s are truncated at $h$) as well as\\
$\E_{\bv \,\sim\, \bF^{(h)}} [v_{(2)} \cdot \indicator(v_{(1)} = h)] \le \sum_{i \in [n]} \Pr_{v_{i} \,\sim\, F_{i}^{(h)}}[v_{i} = h] \cdot \E_{\bv \,\sim\, \bF^{(h)}} [v_{(2)}] = (\sum_{i \in [n]} \eps_{i}) \cdot \SPA(\bF^{(h)})$, which is infinitesimally small when $h \to +\infty \implies \eps_{i} \to 0^{+}$.

Likewise, we can formulate the {\MyersonAuction} revenue (up to an infinitesimal error):
\begin{align*}
    \MA(\bR)
    & ~=~ \lim_{h \to +\infty} \E_{\bv \,\sim\, \bF^{(h)}} \big[ \max(\bvarphi(\bv),\, 0) \big] \\
    & ~=~ \lim_{h \to +\infty} \E_{\bv \,\sim\, \bF^{(h)}} \Big[\, h \cdot \indicator(v_{(1)} = h) \,\Big]
    + \lim_{h \to +\infty} \E_{\bv \,\sim\, \bF^{(h)}} \Big[\, \tau \cdot \indicator(v_{(1)} < h) \,\Big] \\
    & ~=~ \lim_{h \to +\infty} h \cdot \Big(1 - \prod_{i \in [n]} (1 - \eps_{i})\Big)
    + \lim_{h \to +\infty} \tau \cdot \prod_{i \in [n]} (1 - \eps_{i}) \\
    & ~=~ \Big(\sum_{i \in [n]} R_{i}(0)\Big) + \tau.
\end{align*}
The first step restates the {\MyersonAuction} revenue.\\
The second step uses the linearity of expectation and the above arguments; a ``supremum'' $v_{(1)} = h$ yields a (nonnegative) virtual value of $\varphi_{(1)}(v_{(1)}) = h$, and a ``non-supremum'' highest value $v_{(1)} < h$ yields a (nonnegative) virtual value of $\varphi_{(1)}(v_{(1)}) = \tau$. \\
The third step uses $\Pr_{v_{i} \,\sim\, F_{i}^{(h)}}[v_{i} = h] = \eps_{i}$.

This finishes the proof.
\end{proof}

\begin{remark}[Generalizations]
Again, if we replace {\DualityRelaxationBenchmark} with another benchmark that also satisfies revenue monotonicity, \Cref{lem:extend} can be generalized seamlessly.
\end{remark}

\afterpage{
\begin{figure}[t]
    \centering
    \begin{mdframed}
    Reduction $\scale(\bR^{(3)},\, \bR^{(0)})$

    \begin{flushleft}
    {\bf Input:} The {\em semi-linear} instance $\bR^{(3)} = \{R_{i}^{(3)}\}_{i \in [n]}$ (from \Cref{subsec:extend}) and the {\em original} instance $\bR^{(0)} = \{R_{i}^{(0)}\}_{i \in [n]}$ (from \Cref{subsec:iron}).
    
    \vspace{.05in}
    {\bf Output:} A {\em semi-linear} instance $\bR^{(4)} = \{R_{i}^{(4)}\}_{i \in [n]}$. 

    \begin{enumerate}
        \item\label{alg:scale:construct}
        Define a family of instances $\bR^{[\gamma]} = \{R_{i}^{[\gamma]}\}_{i \in [n]}$, parameterized by $\gamma \in [0,\, 1]$, as $R_{i}^{[\gamma]}(q) = R_{i}^{(3)}(q) - \gamma \cdot R_{i}^{(3)}(0)$ for $q \in [0,\, 1]$ and $i \in [n]$.
        
        \item\label{alg:scale:output}
        {\bf Return} $\bR^{(4)} = \bR^{[\gamma^{*}]}$; the target parameter $\gamma^{*} \in [0,\, 1]$ is chosen arbitrarily from the set $\{\gamma \in [0,\, 1] \mid \DRB(\bR^{[\gamma]}) = \DRB(\bR^{(0)})\}$.
    \end{enumerate}
    \end{flushleft}
    \end{mdframed}
    \caption{The {\scale} reduction.
    \label{fig:alg:scale}}
\end{figure}
\begin{figure}[t]
    \centering
    \subfloat[\label{fig:scale_curve_old}
    input {\em semi-linear} revenue-quantile curves]{
    \includegraphics[width = .48\textwidth]
    {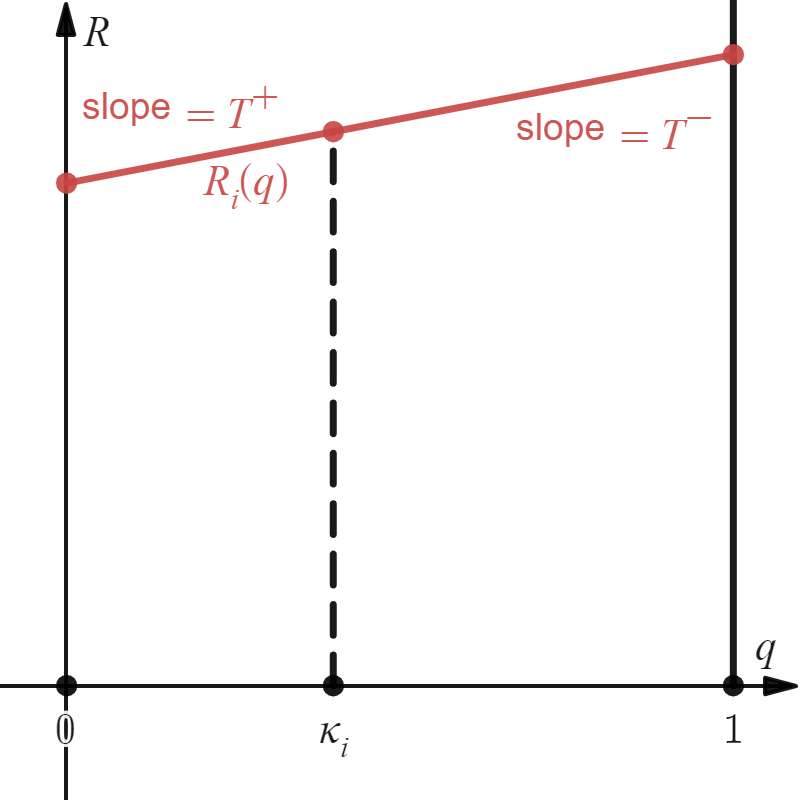}}
    \hfill
    \subfloat[\label{fig:scale_curve_new}
    output {\em semi-linear} revenue-quantile curves]{
    \includegraphics[width = .48\textwidth]
    {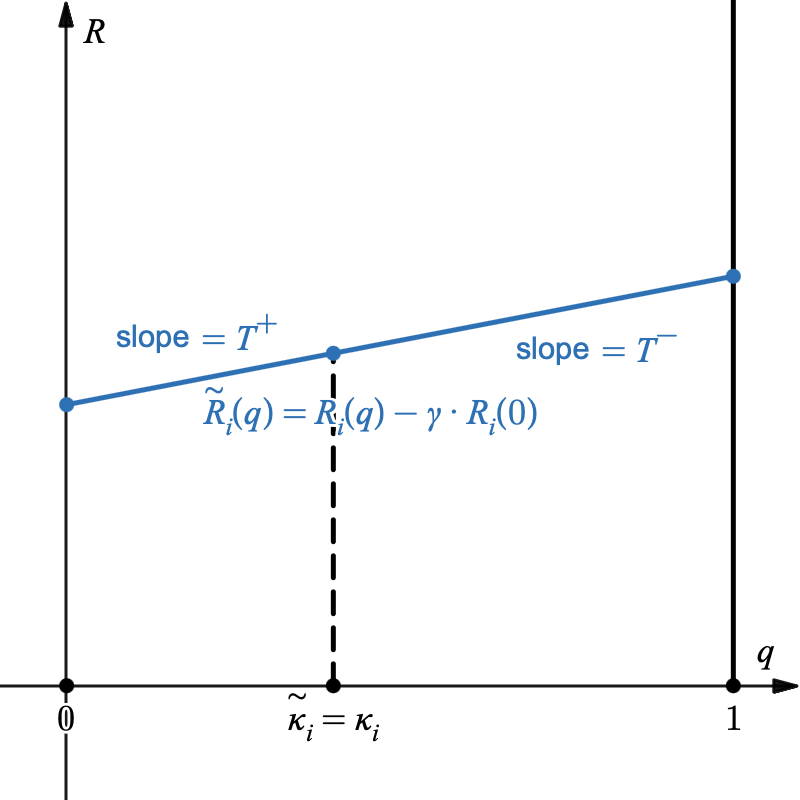}}
    \caption{Diagrams of {\scale} with revenue-quantile curves.
    \label{fig:scale}}
\end{figure}
\clearpage}

\subsection{{\scale}: Reducing generic items to semi-linear items}
\label{subsec:scale}

This section shows the {\scale} reduction (see \Cref{fig:alg:scale,fig:scale} for a description and a visual aid), based on {\bf revenue-quantile curves}, which converts the interim {\em semi-linear} instance $\bR^{(3)}$ from \Cref{subsec:extend} into another {\em semi-linear} instance $\bR^{(4)}$ that, in comparison with the {\em original} instance $\bR = \bR^{(0)}$ from \Cref{subsec:iron}, generates an equal {\DualityRelaxationBenchmark} revenue and a smaller {\UIVVSequentialPostedPricing} revenue, as desired.

\Cref{lem:scale} summarizes the performance guarantees of the {\scale} reduction.

\begin{lemma}[{\scale}]
\label{lem:scale}
\begin{flushleft}
The reduction $\bR^{(4)} \gets \scale(\bR^{(3)},\, \bR^{(0)})$ outputs a semi-linear instance $\bR^{(4)}$ such that $\DRB(\bR^{(4)}) = \DRB(\bR^{(0)})$ and $\UIVVSPP(\bR^{(4)},\, T^{(4)}) \leq \UIVVSPP(\bR^{(0)},\, T^{(0)})$.
\end{flushleft}
\end{lemma}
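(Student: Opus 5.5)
The plan has three parts: show that a good $\gamma^{*}$ exists, check that the scaled instances stay semi-linear, and compare the {\UIVVSPP} revenues.

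\emph{Existence of $\gamma^{*}$.} I would use the intermediate value theorem on $g(\gamma) \eqdef \DRB(\bR^{[\gamma]})$ over $\gamma \in [0,\, 1]$. Each curve $R_{i}^{[\gamma]}(q) = R_{i}^{(3)}(q) - \gamma R_{i}^{(3)}(0)$ is a vertical downward shift, so it stays semi-linear with the same truncation point $\tau = T^{(0)}$ and the same division points $\bkappa^{(3)}$. Hence \Cref{lem:revenue_formula} applies and gives
\[
g(\gamma) = (1 - \gamma) \textstyle\sum_{i} R_{i}^{(3)}(0) + \SPA(\bR^{[\gamma]}).
\]
Shifting the curves down is a monotone operation, since $\bR^{[\gamma]}$ decreases in $\gamma$ under stochastic dominance. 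So $g$ is nonincreasing by \Cref{lem:revenue_monotonicity}. It is also continuous: the value CDFs $F_{i}(v) = 1 - \frac{(1-\gamma) R_{i}(0)}{v - \tau}$ depend continuously on $\gamma$, and $\SPA$ is continuous under such perturbations by bounded convergence or an explicit integral formula. At the endpoints, $g(0) = \DRB(\bR^{(3)}) \geq \DRB(\bR^{(0)})$ by chaining \Cref{lem:iron,lem:truncate,lem:extend}. At $\gamma = 1$ every $R_{i}^{[1]}(0) = 0$, so each item is a point mass at value $\tau = T^{(0)} = \alpha \cdot \DRB(\bR^{(0)})$. Then $g(1) = \SPA = \tau = \alpha \DRB(\bR^{(0)}) \le \DRB(\bR^{(0)})$, using $\alpha < 1$. (With $\kappa_{i}$'s general, the values are $\tau$ up to infinitesimals.) The intermediate value theorem therefore yields $\gamma^{*}$, and this gives $\DRB(\bR^{(4)}) = \DRB(\bR^{(0)})$ and $T^{(4)} = \alpha \DRB(\bR^{(4)}) = T^{(0)}$.

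\emph{Comparing {\UIVVSPP} revenues.} Since $T^{(4)} = T^{(0)}$, the prices in $\bR^{(4)}$ are defined relative to the same threshold $\tau$. The virtual values are still $\tau^{+}$ on $[0, \kappa_{i})$ and $\tau^{-}$ on $(\kappa_{i}, 1]$, plus an infinitesimal top atom of mass with virtual value $R^{[\gamma^{*}]}_{i}(0)/\eps_{i}+\tau$. So the sale probabilities $\kappa_{i}$ are unchanged, and the allocation rule under any arrival order is the same as for $\bR^{(3)}$. By revenue equivalence (\Cref{prop:revenue_equivalence}), for each fixed order $\pi$ the revenue equals the virtual welfare. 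I would write this as
\[
\sum_{i} \Pr[\text{reach } i] \cdot \bigl( \kappa_{i}\tau + R_{i}^{[\gamma]}(0) \bigr),
\]
where the contribution $R_{i}(0)$ from the infinite-value atom enters linearly. Equivalently, the price times sale probability is $R_{i}^{[\gamma]}(\kappa_{i}) = R_{i}^{(3)}(\kappa_{i}) - \gamma R_{i}^{(3)}(0)$. In either form the revenue is pointwise in $\pi$ no larger than for $\bR^{(3)}$, because the reach probabilities depend only on the $\kappa$'s. Taking the minimum over $\pi$ gives $\UIVVSPP(\bR^{(4)}, T^{(4)}) \le \UIVVSPP(\bR^{(3)}, T^{(0)})$. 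Chaining with \Cref{lem:iron,lem:truncate,lem:extend} then gives $\UIVVSPP(\bR^{(3)}, T^{(0)}) = \UIVVSPP(\bR^{(0)}, T^{(0)})$, which finishes the comparison.

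\emph{Main obstacle.} The delicate step is the continuity and endpoint analysis of $g$, given the infinitesimal conventions ($\tau^{\pm}$, the infinite-value atoms). I would handle it concretely through the explicit formula from \Cref{lem:revenue_formula}. The $\SPA$ term is an expectation of the second-highest of independent variables with CDFs $1 - (1-\gamma)R_{i}(0)/(v-\tau)$, clipped at the atom where $\kappa_i$ is small. This expectation is continuous in $\gamma$, because $\SPA \le \sum_i \E[v_i \wedge \cdot]$ is dominated by integrable tails, which justifies dominated convergence.
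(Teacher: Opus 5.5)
Your proposal is correct and follows essentially the paper's proof. It applies the intermediate value theorem to $\DRB(\bR^{[\gamma]}) = \sum_{i} R_{i}^{[\gamma]}(0) + \SPA(\bR^{[\gamma]})$ between $\gamma = 0$ and $\gamma = 1$, where the benchmark falls to $T^{(0)} = \alpha \cdot \DRB(\bR^{(0)})$. It then uses $T^{(4)} = T^{(0)}$ and the unchanged division points to make the sequential-posted-pricing revenue pointwise smaller for every arrival order, in particular for the worst order of $\bR^{(3)}$.

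One small fix to your continuity remark: each single $\E[v_{i}]$ is infinite for these equal-revenue-type tails, so it cannot serve as the dominating function. Use instead the $\gamma = 0$ tail of the second-highest value, $\Pr[v_{(2)} \geq x] = O(1/x^{2})$, which is integrable.
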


\begin{proof}
The {\scale} reduction first constructs a family of instances $\bR^{[\gamma]}$ parameterized by $\gamma \in [0,\, 1]$.
Each interim curve $R_{i}^{[\gamma]}(q) = R_{i}^{(3)}(q) - \gamma \cdot R_{i}^{(3)}(0)$, where $\gamma \in [0,\, 1]$ and $i \in [n]$, simply pointwise translates its input counterpart $R_{i}^{(3)}$ by a distance of $-\gamma \cdot R_{i}^{(3)}(0)$.
Obviously, each interim instance $\bR^{[\gamma]}$ for $\gamma \in [0,\, 1]$ preserves {\bf semi-linearity}. Below, we check the existence of the target parameter $\gamma^{*} \in [0,\, 1]$ (Line~\ref{alg:scale:output}) and, simultaneously, examine the revenues from {\DualityRelaxationBenchmark} and {\UIVVSequentialPostedPricing}.
\begin{itemize}
    \item $\DRB(\bR^{(4)}) = \DRB(\bR^{[\gamma^{*}]}) = \DRB(\bR^{(0)})$.
    By \Cref{lem:revenue_formula}, each interim instance $\bR^{[\gamma]}$ yields a {\DualityRelaxationBenchmark} revenue of $\DRB(\bR^{[\gamma]}) = (\sum_{i \in [n]} R_{i}^{[\gamma]}(0)) + \SPA(\bR^{[\gamma]})$.
    It is easy to see that $\DRB(\bR^{[\gamma]})$ and $\SPA(\bR^{[\gamma]})$, i.e., the {\SecondPriceAuction} revenue, as two functions in $\gamma \in [0,\, 1]$ are continuous and decreasing.
    In case of $\gamma_{\min} = 0$, we know from \Cref{lem:iron,lem:truncate,lem:extend} that $\DRB(\bR^{[\gamma_{\min}]}) = \DRB(\bR^{(3)}) \geq \DRB(\bR^{(0)})$.
    And in case of $\gamma_{\max} = 1$, we have $\DRB(\bR^{[\gamma_{\max}]}) \leq 2 \cdot (\sum_{i \in [n]} R_{i}^{[\gamma_{\max}]}(0)) + T^{(0)} = T^{(0)} = \alpha \cdot \DRB(\bR^{(0)}) < \DRB(\bR^{(0)})$, following \Cref{lem:revenue_formula} and that $\alpha \in (0,\, 1)$.
    Given these, the intermediate value theorem guarantees the existence of a target parameter $\gamma^{*} \in [0,\, 1]$ such that $\DRB(\bR^{[\gamma^{*}]}) = \DRB(\bR^{(0)})$.
    
    \item $\UIVVSPP(\bR^{(0)},\, T^{(0)}) \ge \UIVVSPP(\bR^{(4)},\, T^{(4)})$.
    To see this, we first note that
    \begin{align*}
        \UIVVSPP(\bR^{(0)},\, T^{(0)})
        & ~=~ \UIVVSPP(\bR^{(3)},\, T^{(0)}), \\
        \UIVVSPP(\bR^{(4)},\, T^{(4)})
        & ~=~ \UIVVSPP(\bR^{[\gamma^{*}]},\, T^{(0)}).
    \end{align*}
    The first equation is a combination of \Cref{lem:iron,lem:truncate,lem:extend}.
    The second equation holds because $\bR^{(4)} \equiv \bR^{[\gamma^{*}]}$ has the same {\UIVV}-threshold as the {\em original} instance $\bR = \bR^{(0)}$, namely $T^{(4)} = \alpha \cdot \DRB(\bR^{(4)}) = \alpha \cdot \DRB(\bR^{(0)}) = T^{(0)}$.
    
    Regarding $\UIVVSPP(\bR^{(3)},\, T^{(0)})$, we denote by $\pi^{(3)} \in \Pi_{n}$ the corresponding worst-case arrival order. Using the shorthand $\bar{i} = \pi^{(3)}(i)$ for $i \in [n]$, (\Cref{fig:scale_curve_old}) we can deduce that
    \begin{align*}
        \UIVVSPP(\bR^{(3)},\, T^{(0)})
        & ~=~ \sum_{i \in [n]} R_{\bar{i}}^{(3)}(\kappa_{\bar{i}}^{(3)}) \cdot \prod_{j \in [i - 1]} (1 - \kappa_{\bar{j}}^{(3)}) \\
        & ~\ge~ \sum_{i \in [n]} R_{\bar{i}}^{[\gamma^{*}]}(\kappa_{\bar{i}}^{[\gamma^{*}]}) \cdot \prod_{j \in [i - 1]} (1 - \kappa_{\bar{j}}^{[\gamma^{*}]}) \\
        & ~\ge~ \min_{\pi \in \Pi_{n}} \sum_{i \in [n]} R_{\pi(i)}^{[\gamma^{*}]}(\kappa_{\pi(i)}^{[\gamma^{*}]}) \cdot \prod_{j \in [i - 1]} (1 - \kappa_{\pi(j)}^{[\gamma^{*}]}) \\
        & ~=~ \UIVVSPP(\bR^{[\gamma^{*}]},\, T^{(0)}).
    \end{align*}
    The first step simulates the online decisions; each $i$-th arrival buyer $\bar{i}$ takes the item at price $R_{\bar{i}}^{(3)}(\kappa_{\bar{i}}^{(3)}) / \kappa_{\bar{i}}^{(3)}$ with probability $\kappa_{\bar{i}}^{(3)}$, provided that no earlier buyer had already taken it. The second step holds since (\Cref{fig:scale_curve_old}) {\scale} guarantees that $\bR^{(3)} \succeq \bR^{[\gamma^{*}]}$ and $\bkappa^{(3)} \succeq \bkappa^{[\gamma^{*}]}$. The third step simply replaces $\pi^{(3)}$ by the worst-case order regarding $\UIVVSPP(\bR^{[\gamma^{*}]},\, T^{(0)})$. The last step reuses the arguments for the first step.
    
    Combining everything finishes the proof.\qedhere
\end{itemize}
\end{proof}

Given the {\scale} reduction and \Cref{lem:scale}, we can concentrate on {\em semi-linear} instances $\bF$.

\subsection{{\perturb}: Reducing semi-linear items to linear items}
\label{subsec:perturb}

This section gives the {\perturb} reduction (see \Cref{fig:alg:perturb,fig:perturb} for a description and a visual aid), based on {\bf revenue-quantile curves}, which transforms a {\em semi-linear} instance $\bR$ from \Cref{subsec:scale} into an {\em linear} instance $\tilde{\bR}$ (\Cref{def:perturb}).

\Cref{lem:perturb} concludes the performance guarantees of the {\perturb} reduction.

\begin{definition}[Linear Instances]
\label{def:perturb}
A {\em semi-linear} instance $\bF$ (with a truncation point $\tau > 0$ and item-wise division points $\bkappa = \{\kappa_{i}\}_{i \in [n]} \in [0,\, 1]^{n}$) is further called {\em linear} when its revenue-quantile curves $\bR = \{R_{i}\}_{i \in [n]}$ all satisfy {\bf linearity}:
either $R'_{i}(q) = \tau^{+}$ for $q \in [0,\, 1]$ and $i \in [n]$ (i.e., $\bkappa = \ones$) or $R'_{i}(q) = \tau^{-}$ for $q \in [0,\, 1]$ and $i \in [n]$ (i.e., $\bkappa = \zeros$).
\end{definition}

\afterpage{
\begin{figure}[t]
    \centering
    \begin{mdframed}
    Reduction $\perturb(\bR)$

    \begin{flushleft}
    {\bf Input:} The {\em semi-linear} instance $\bR = \{R_{i}\}_{i \in [n]}$ (from \Cref{subsec:scale}) with the threshold $T > 0$ and the division points $\bkappa = \{\kappa_{i}\}_{i \in [n]}$.
    \hfill \OliveGreen{$\triangleright$ $R_{i}(q) = T^{+} \cdot (q - \kappa_{i}) + R_{i}(\kappa_{i})$ for $q \in [0,\, \kappa_{i}]$.}
    
    \hfill \OliveGreen{$\triangleright$ $R_{i}(q) = T^{-} \cdot (q - \kappa_{i}) + R_{i}(\kappa_{i})$ for $q \in (\kappa_{i},\, 1]$.}
    
    \vspace{.05in}
    {\bf Output:} A {\em linear} instance $\tilde{\bR} = \{\tilde{R}_{i}\}_{i \in [n]}$.

    \begin{enumerate}
        \item\label{alg:perturb_positive}
        Let $\bR^{+} = \{R_{i}^{+}\}_{i \in [n]}$ be $R_{i}^{+}(q) \eqdef T^{+} \cdot (q - \kappa_{i}) + R_{i}(\kappa_{i})$ for $q \in [0,\, 1]$ and $i \in [n]$.
        
        \item\label{alg:perturb_negative}
        Let $\bR^{-} = \{R_{i}^{-}\}_{i \in [n]}$ be $R_{i}^{-}(q) \eqdef T^{-} \cdot (q - \kappa_{i}) + R_{i}(\kappa_{i})$ for $q \in [0,\, 1]$ and $i \in [n]$.
        
        \item\label{alg:perturb_output}
        {\bf Return} $\argmin\, \{\, \UIVVSPP(\tilde{\bR},\, T) \,\mid\, \tilde{\bR} \in \{ \bR^{+},\, \bR^{-} \} \,\}$.
    \end{enumerate}
    \end{flushleft}
    \end{mdframed}
    \caption{The {\perturb} reduction.
    \label{fig:alg:perturb}}
\end{figure}
\begin{figure}[t]
    \centering
    \subfloat[\label{fig:perturb_curve_old}
    input {\em semi-linear} revenue-quantile curves]{
    \includegraphics[width = .4\textwidth]
    {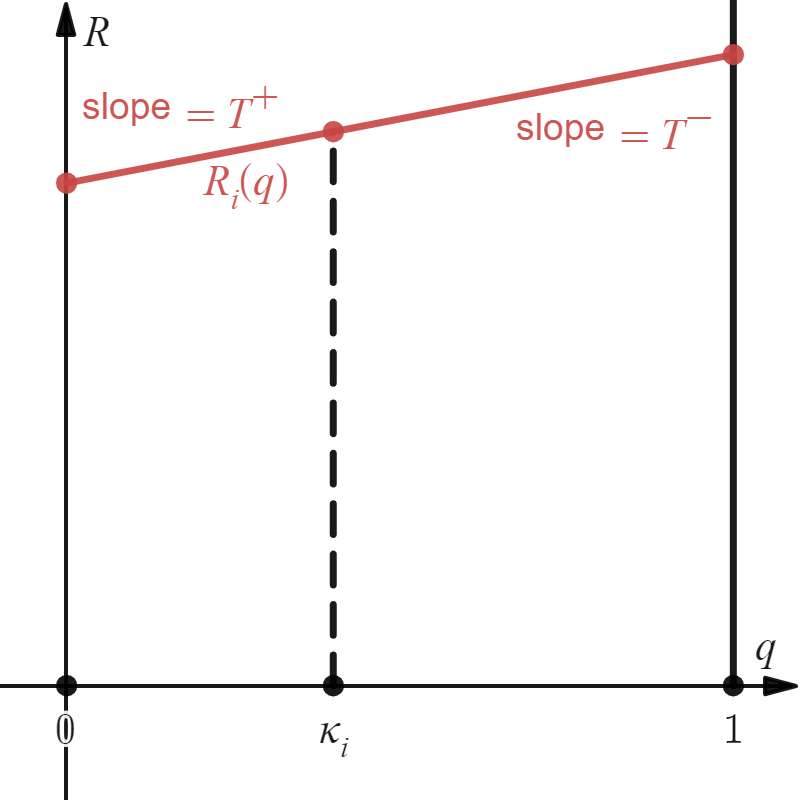}} \\
    \subfloat[\label{fig:perturb_curve_positive}
    ``positive'' {\em linear} revenue-quantile curves]{
    \includegraphics[width = .4\textwidth]
    {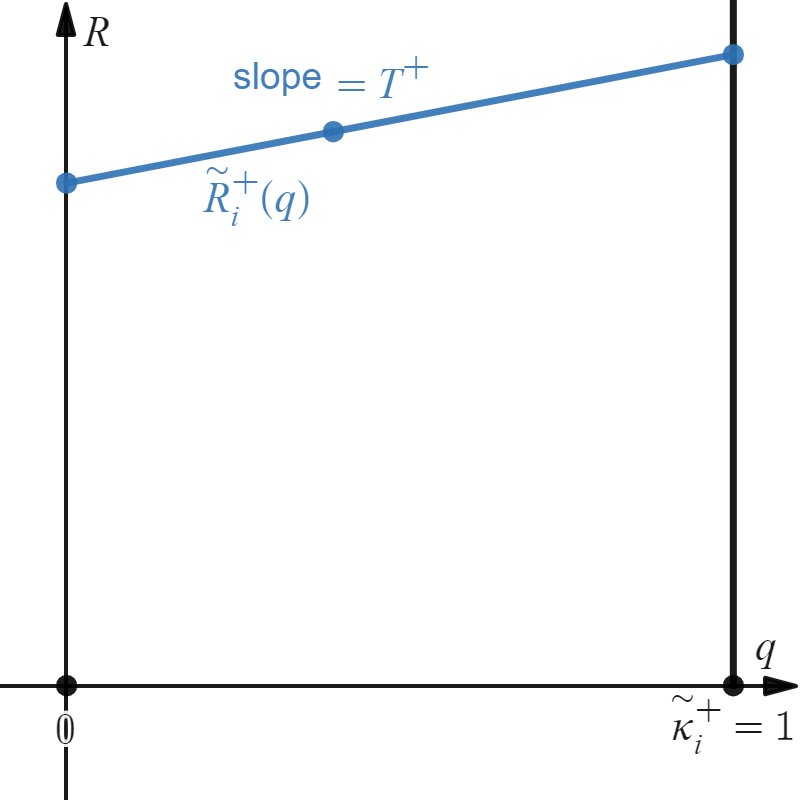}}
    \hfill
    \subfloat[\label{fig:perturb_curve_negative}
    ``negative'' {\em linear} revenue-quantile curves]{
    \includegraphics[width = .4\textwidth]
    {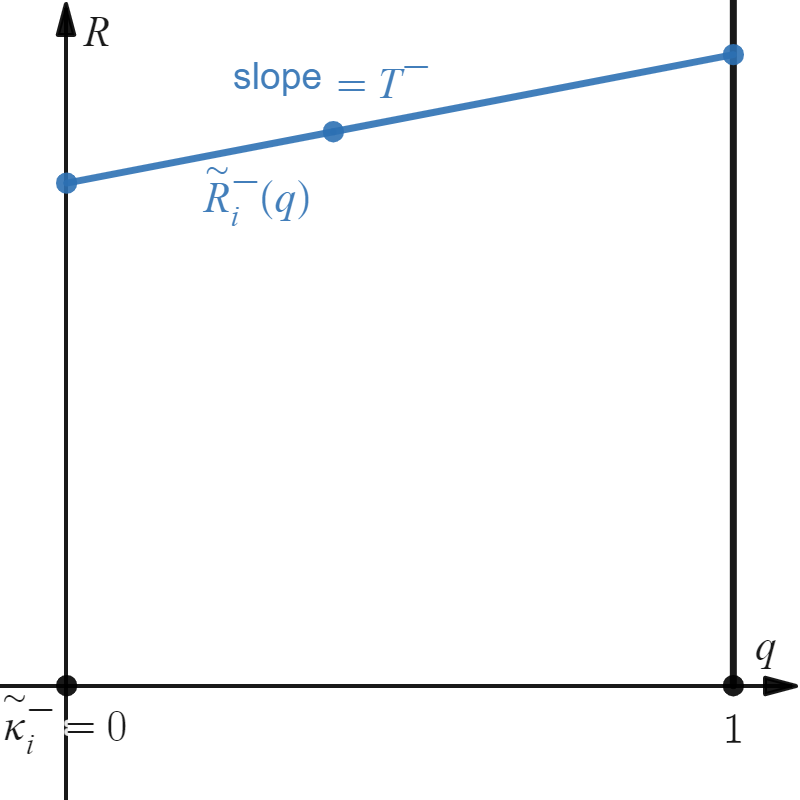}}
    \caption{Diagrams of the {\perturb} reduction with revenue-quantile curves.
    \label{fig:perturb}}
\end{figure}
\clearpage}

\begin{lemma}[{\perturb}]
\label{lem:perturb}
\begin{flushleft}
The reduction $\tilde{\bR} \gets \perturb(\bR)$ outputs a linear instance $\tilde{\bR}$ such that $\DRB(\tilde{\bR}) = \DRB(\bR)$ and $\UIVVSPP(\tilde{\bR},\, \tilde{T}) \leq \UIVVSPP(\bR,\, T)$.
\end{flushleft}
\end{lemma}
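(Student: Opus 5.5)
The plan is to observe that the semi-linear instance $\bR$ and the two candidate outputs $\bR^{+}$ and $\bR^{-}$ lie on essentially the same lines, differing only by infinitesimals. Hence the {\DualityRelaxationBenchmark} cannot distinguish them. For {\UIVVSequentialPostedPricing}, the revenue of $\bR$ will be shown to dominate a convex combination of the revenues of $\bR^{+}$ and $\bR^{-}$.

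\emph{Step 1: the benchmark revenue is unchanged.} By construction (Lines~\ref{alg:perturb_positive} and \ref{alg:perturb_negative}), $R_i^{+}$ coincides with $R_i$ on $[0,\kappa_i]$ and $R_i^{-}$ coincides with $R_i$ on $(\kappa_i,1]$. Elsewhere each differs from $R_i$ by at most $|T^{+}-T^{-}|\to 0^{+}$. Thus all three instances share, up to infinitesimals, the common line
\[
R_i(q)=a_i+T\cdot q, \qquad a_i \eqdef R_i(0).
\]
Lemma~\ref{lem:revenue_formula} gives $\DRB=\sum_i a_i+\SPA$. Here $\SPA$ is continuous in the curves, since the value CDF $1-a_i/(v-T)$ depends continuously on $(a_i,T)$. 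Therefore $\DRB(\tilde{\bR})=\DRB(\bR)$. Consequently $\tilde T=\alpha\cdot\DRB(\tilde\bR)=T$, so the same threshold $T$ is used for $\tilde{\bR}$ (this is exactly why the lemma can be stated with $\tilde T$).

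\emph{Step 2: formula for the posted-pricing revenue.} In a semi-linear instance, buyer $i$ accepts with probability $\kappa_i$ at price $R_i(\kappa_i)/\kappa_i$, contributing $R_i(\kappa_i)=a_i+T\kappa_i$ when reached. This is the same simulation used in the proof of Lemma~\ref{lem:scale}. For an order $\pi$, write $P_{\pi(i)}=\prod_{j<i}(1-\kappa_{\pi(j)})$ for the probability that buyer $\pi(i)$ is reached. Since $\sum_i \kappa_{\pi(i)}P_{\pi(i)}=1-Q$ with $Q\eqdef\prod_{j}(1-\kappa_j)$, we get
\[
\SPP(\bR,\pi)=\sum_i a_{\pi(i)}P_{\pi(i)}+T\,(1-Q).
\]
The linear instances are the special cases $\bkappa=\ones$ and $\bkappa=\zeros$. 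For $\bR^{+}$ the first arrival always buys, so $\UIVVSPP(\bR^{+},T)=\min_i a_i+T$. For $\bR^{-}$ nobody buys except at the infinite value, so $\UIVVSPP(\bR^{-},T)=\sum_i a_i$ regardless of the order.

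\emph{Step 3: the convex-combination bound, which is the key step.} For every order $\pi$ we have $P_{\pi(1)}=1$ and $P_{\pi(i)}\ge Q$ for all $i$. Writing $1=Q+(1-Q)$ for the first buyer, this gives
\[
\sum_i a_{\pi(i)}P_{\pi(i)} \;\ge\; Q\sum_i a_i+(1-Q)\,a_{\pi(1)} \;\ge\; Q\sum_i a_i+(1-Q)\min_i a_i .
\]
Hence
\[
\SPP(\bR,\pi)\ge Q\cdot\UIVVSPP(\bR^{-},T)+(1-Q)\cdot\UIVVSPP(\bR^{+},T)\ge \min\{\UIVVSPP(\bR^{+},T),\UIVVSPP(\bR^{-},T)\}.
\]
Taking the minimum over $\pi$ gives $\UIVVSPP(\bR,T)\ge\UIVVSPP(\tilde\bR,\tilde T)$ for the $\argmin$ output of Line~\ref{alg:perturb_output}.

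The main obstacle I expect is bookkeeping the infinitesimals consistently. This includes the degenerate cases $\kappa_i\in\{0,1\}$ and the convention of Footnote~\ref{footnote:extend} that the infinite value carries mass $\eps_i$. One must check that these affect neither the identity $R_i(\kappa_i)=a_i+T\kappa_i$ nor the continuity of $\SPA$ used in Step~1. I would handle this by passing to the truncated distributions $\bF^{(h)}$ as in Lemma~\ref{lem:revenue_formula} and taking $h\to\infty$ at the end.
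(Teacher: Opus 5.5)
Your proposal is correct and follows the paper's proof essentially step for step. The paper also argues that $\bR^{+}$, $\bR^{-}$ and $\bR$ agree up to infinitesimals, so the benchmark and the threshold are preserved. It then writes the semi-linear revenue as $\sum_i R_{\bar i}(0)\prod_{j<i}(1-\kappa_{\bar j}) + T\kappa$ with $\kappa = 1-Q$. Finally, it bounds this below by the convex combination $\kappa\cdot\UIVVSPP(\bR^{+},T)+(1-\kappa)\cdot\UIVVSPP(\bR^{-},T)$, using exactly your observation $\prod_{j<i}(1-\kappa_{\bar j})\ge 1-\kappa$.

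The only cosmetic difference is in how the order is handled. The paper fixes the worst-case order $\pi^{*}$ of $\bR$ and evaluates both candidates under it. You instead argue for every order $\pi$ and use $\UIVVSPP(\bR^{+},T)=\min_i a_i+T$ directly.
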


\begin{proof}
The output instance $\tilde{\bR}$ is chosen between two candidates (Line~\ref{alg:perturb_output}), the ``positive'' candidate $\bR^{+} = \{R_{i}^{+}\}_{i \in [n]}$ that $R_{i}^{+}(q) \eqdef T^{+} \cdot (q - \kappa_{i}) + R_{i}(\kappa_{i})$ and the ``negative'' candidate $\bR^{-} = \{R_{i}^{-}\}_{i \in [n]}$ that $R_{i}^{-}(q) \eqdef T^{-} \cdot (q - \kappa_{i}) + R_{i}(\kappa_{i})$.
Both candidates are identical to the input instance $\bR$ in revenue-quantile curves up to pointwise infinitesimal errors $\leq |T^{+} - T^{-}| \to 0^{+}$,
(\Cref{lem:revenue_formula}) hence the same {\DualityRelaxationBenchmark} revenue $\DRB(\bR^{+}) = \DRB(\bR^{-}) = \DRB(\bR)$ and the same {\UIVV}-threshold $= T$.
Now we can easily check {\bf linearity} for both candidates and thus the output instance $\tilde{\bR}$, under the same threshold $= T$ and the respective division points $\bkappa^{+} = \ones$ and $\bkappa^{-} = \zeros$. Below, we investigate the revenues from {\DualityRelaxationBenchmark} and {\UIVVSequentialPostedPricing}.
\begin{itemize}
    \item $\DRB(\tilde{\bR}) = \DRB(\bR)$.
    Both candidates $\bR^{+}$ and $\bR^{-}$ preserve the {\DualityRelaxationBenchmark} revenue and the threshold $= T$; so does the output instance $\DRB(\tilde{\bR}) = \DRB(\bR)$ and $\tilde{T} = T$.
    
    \item $\UIVVSPP(\tilde{\bR},\, \tilde{T}) \leq \UIVVSPP(\bR,\, T)$.
    By construction (Line~\ref{alg:perturb_output}) and given that $\tilde{T} = T$, this equation reduces to $\min\, \{\, \UIVVSPP(\bR^{+},\, T),\, \UIVVSPP(\bR^{-},\, T) \,\} \leq \UIVVSPP(\bR,\, T)$.
    For the input instance $\bR$, let $\pi^{*} \in \Pi_{n}$ and $\bx^{*}$ be its worst-case arrival order and its resulting allocation rule.
    Using the shorthand $\bar{i} = \pi^{*}(i)$ for $i \in [n]$, the input instance $\bR$ yields a worst-case virtual welfare
    \begin{align}
        \E_{\bphi \,\sim\, \bD} \big[\, \bphi \cdot \bx^{*}(\bphi) \,\big]
        & ~=~ \sum_{i} R_{\bar{i}}(\kappa_{\bar{i}}) \cdot \prod_{j < i} (1 - \kappa_{\bar{j}})
        \notag \\
        & ~=~ \sum_{i} (R_{\bar{i}}(0) + T^{+} \cdot \kappa_{\bar{i}}) \cdot \prod_{j < i} (1 - \kappa_{\bar{j}})
        \notag \\
        & ~=~ \sum_{i} (R_{\bar{i}}(0) + T \cdot \kappa_{\bar{i}}) \cdot \prod_{j < i} (1 - \kappa_{\bar{j}})
        \notag \\
        & ~=~ \Big(\sum_{i} R_{\bar{i}}(0) \cdot \prod_{j < i} (1 - \kappa_{\bar{j}})\Big)
        + T \cdot \underbrace{\Big(1 - \prod_{i} (1 - \kappa_{\bar{i}})\Big)}_{\kappa}
        \label{eq:OOSPP}
    \end{align}
    The first step simulates the online decisions; each $i$-th arrival buyer $\bar{i}$ takes the item at price $R_{\bar{i}}(\kappa_{\bar{i}}) / \kappa_{\bar{i}}$ with probability $\kappa_{\bar{i}}$, provided that no earlier buyer had already taken it.
    The second step follows from {\bf semi-linearity} (\Cref{def:extend}).
    The third step drops the pointwise infinitesimal errors $|T^{+} - T| \cdot \kappa_{\bar{i}} \to 0^{+}$.
    And the last step is elementary algebra.
    
    We would reuse $\pi^{*}$ and $\bx^{*}$ for both candidates $\bR^{+}$ and $\bR^{-}$ (regardless whether $\pi^{*}$ remains the revenue-worst arrival order). Since $\bkappa^{+} = \ones$ and $\bkappa^{-} = \zeros$ (Lines~\ref{alg:perturb_positive} and \ref{alg:perturb_negative}), the ``positive'' candidate $\bR^{+}$ likewise yields a virtual welfare of
    \begin{align}
        \E_{\bphi \,\sim\, \bD^{+}} \big[\, \bphi \cdot \bx^{*}(\bphi) \,\big]
        & ~=~ \Big(\sum_{i} R_{\bar{i}}(0) \cdot \prod_{j < i} (1 - \kappa_{\bar{j}}^{+})\Big)
        + T \cdot \Big(1 - \prod_{i} (1 - \kappa_{\bar{i}}^{+})\Big)
        \notag \\
        & ~=~ R_{\bar{1}}(0) + T,
        \label{eq:VUSPP_positive:1}
    \end{align}
    and the ``negative'' candidate $\bR^{-}$ likewise yields a virtual welfare of
    \begin{align}
        \E_{\bphi \,\sim\, \bD^{-}} \big[\, \bphi \cdot \bx^{*}(\bphi) \,\big]
        & ~=~ \sum_{i} \Big(R_{\bar{i}}(0) \cdot \prod_{j < i} (1 - \kappa_{\bar{j}}^{-})\Big)
        + T \cdot \Big(1 - \prod_{i} (1 - \kappa_{\bar{i}}^{-})\Big)
        \notag \\
        & ~=~ \sum_{i} R_{\bar{i}}(0).
        \label{eq:VUSPP_negative:1}
    \end{align}
    Let $\kappa \eqdef 1 - \prod_{j \in [n]} (1 - \kappa_{j}) \in [0, 1]$.
    Now we can check the claimed equation as follows:
    \begin{align*}
        & \min\, \Big\{\, \UIVVSPP(\bR^{+},\, T),\, \UIVVSPP(\bR^{-},\, T) \,\Big\} \\
        & ~\leq~ \kappa \cdot \UIVVSPP(\bR^{+},\, T) + (1 - \kappa) \cdot \UIVVSPP(\bR^{-},\, T) \\
        & ~=~ \kappa \cdot \min_{\pi \,\in\, \Pi_{n}}\, \Big\{\, \E_{\bphi \,\sim\, \bD^{+}} \big[\, \bphi \cdot \bx(\bphi) \,\big] \,\Big\} + (1 - \kappa) \cdot \min_{\pi \,\in\, \Pi_{n}}\, \Big\{\, \E_{\bphi \,\sim\, \bD^{-}} \big[\, \bphi \cdot \bx(\bphi) \,\big] \,\Big\} \\
        & ~\leq~ \kappa \cdot \E_{\bphi \,\sim\, \bD^{+}} \big[\, \bphi \cdot \bx^{*}(\bphi) \,\big]
        + (1 - \kappa) \cdot \E_{\bphi \,\sim\, \bD^{-}} \big[\, \bphi \cdot \bx^{*}(\bphi) \,\big] \\
        & ~=~ \kappa \cdot \Big(R_{\bar{1}}(0) + T\Big)
        + (1 - \kappa) \cdot \Big(\sum_{i} R_{\pi(i)}(0)\Big) \\
        & ~=~ R_{\bar{1}}(0)
        + (1 - \kappa) \cdot \Big(\sum_{i > 1} R_{\bar{i}}(0)\Big)
        + \kappa \cdot T \\
        & ~\leq~ \Big(\sum_{i} R_{\bar{i}}(0) \cdot \prod_{j < i} (1 - \kappa_{\bar{j}})\Big)
        + \kappa \cdot T \\
        & ~=~ \E_{\bphi \,\sim\, \bD} \big[\, \bphi \cdot \bx^{*}(\bphi) \,\big]
        ~=~ \UIVVSPP(\bR,\, T).
    \end{align*}
    The second step and the last step both apply the revenue equivalence (\Cref{prop:revenue_equivalence}).
    And the other steps are elementary algebra; for the last second step in particular, we notice that $1 - \kappa = \prod_{j \in [n]} (1 - \kappa_{\bar{j}}) \leq \prod_{j < i} (1 - \kappa_{\bar{j}})$ for each $i > 1$.
    This finishes the proof.\qedhere
\end{itemize}
\end{proof}

Given the {\perturb} reduction and \Cref{lem:perturb}, in the rest of \Cref{sec:DTM} we can concentrate on {\em linear} instances $\tilde{\bR}$.

\subsection{Proof of the upper bounds}
\label{subsec:upper_bounds}

Now we are ready to prove the upper-bound parts of \Cref{thm:UIVVIP} (Item~(i)) and \Cref{thm:DRB_SPP}, both of which are restated below for ease of reference.
(The following proof of \Cref{thm:DRB_SPP} is nonconstructive. As a complement, we characterize the worst-case instances for every $n \geq 3$ in \Cref{sec:worst_case}.)

\begin{restate}[{\Cref{thm:UIVVIP}}]
\begin{flushleft}
Against {\DualityRelaxationBenchmark}: \\
(i)~{\UniformIronedVirtualValueItemPricing} with a {\UIVV}-threshold $T = \frac{1}{3} \cdot \DRB$ of one third of the {\DualityRelaxationBenchmark} revenue achieves a tight $\calC_{\DRB / \UIVVIP} = 3$ approximation.
\end{flushleft}
\end{restate}

\begin{restate}[{\Cref{thm:DRB_SPP}}]
\begin{flushleft}
Against {\DualityRelaxationBenchmark}: \\
(i)~{\UniformIronedVirtualValueSequentialPostedPricing} with a {\UIVV}-threshold $T = \frac{1}{3} \cdot \DRB$ of one third of the {\DualityRelaxationBenchmark} revenue achieves a $\calC_{\DRB / \UIVVSPP} = 3$ approximation. \\
(ii)~This ratio is optimal for \textbf{deterministic ironed-virtual-value-based} and/or \textbf{(arbitrary) uniform-ironed-virtual-value} stopping rules in the order-oblivious model.\textsuperscript{\ref{footnote:tie}}
\end{flushleft}
\end{restate}

\begin{proof}[Proof of \Cref{thm:DRB_SPP} (The Upper-Bound Part)]
Following \Cref{subsec:iron,subsec:truncate,subsec:extend,subsec:scale,subsec:perturb}, it suffices to consider {\em linear} instances $\bR$.
Such an instance has a {\UIVV}-threshold $T = \alpha \cdot \DRB(\bR) $. 
Following \Cref{lem:revenue_formula}, the {\DualityRelaxationBenchmark} revenue is given by
\[
    \DRB(\bR ) \leq 2 \sum_{i\in [n]} R_i(0) + T.
\]
By elementary algebra, we deduce that 
\[
    \sum_{i\in [n]} R_i(0)
    \geq \frac{\DRB(\bR ) - T}{2}
    = \frac{1 - \alpha}{2} \DRB(\bR).
\]

Moreover, for the {\UIVVSequentialPostedPricing} revenue, there are two possibilities.
If $\bR $ is a ``positive'' {\em linear} instance, we know from \Cref{eq:VUSPP_positive:1} that
\[
    \UIVVSPP(\bR,\, T) \geq T = \alpha \cdot \DRB(\bR).
\]
If $\bR $ is a ``negative'' {\em linear} instance, we know from \Cref{eq:VUSPP_negative:1} that
\[
    \UIVVSPP(\bR,\, T) = \sum_{i\in [n]} R_i(0) \geq\frac{1-\alpha}{2}\DRB(\bR ).
\]
Towards the best revenue guarantee, we must have $\UIVVSPP(\bR,\, T) \geq \frac{1}{3}\DRB(\bR)$ in both cases, by choosing $\alpha = \frac{1}{3}$.
This accomplishes the upper-bound part of \Cref{thm:DRB_SPP}.
\end{proof}



\begin{proof}[Proof of \Cref{thm:UIVVIP} (The Upper-Bound Part of Item~(i))]
This follows from \Cref{prop:representative} (i.e., the single-dimensional representative approach) and the upper-bound part of \Cref{thm:DRB_SPP}.
\end{proof}

\subsection{The lower-bound analyses}
\label{subsec:SPP_LB}

This section accomplishes \Cref{thm:UIVVIP} (Item~(i)) and \Cref{thm:DRB_SPP} by showing matching lower-bound instances.
(Unlike the upper-bound parts, the lower-bound parts of both theorems are incomparable. I.e., \Cref{thm:UIVVIP} considers a stronger mechanism, {\ItemPricing}, but a restricted pricing scheme, uniform-ironed-virtual-value posted prices.)

First, we prove a matching lower bound for \Cref{thm:DRB_SPP}, under \textbf{deterministic ironed-virtual-value-based} stopping rules, using even ``i.i.d.'' instances (\Cref{exp:DRB_SPP}).


\begin{example}[{Lower-Bound Instances for $\calC_{\DRB / \UIVVSPP} = \calC_{\DRB / \UIVVIP} = 3$}]
\label{exp:DRB_SPP}
\begin{flushleft}
There are $n \geq 2$ i.i.d.\ items $\bF = \{F_{i}\}_{i \in [n]}$ with a common revenue-quantile curve $R_{i}(q) = R(q) \eqdef \frac{1}{n} + q$ for $q \in [0,\, 1]$ and (elementary algebra) a common value CDF $F_{i}(v) = F(v) \eqdef 1 - \frac{1 / n}{v - 1}$ for $v \in [\frac{n + 1}{n},\, +\infty]$.
\end{flushleft}
\end{example}

\begin{proof}[Proof of \Cref{thm:DRB_SPP} (The Lower-Bound Part)]
It is easy to check that $\bF$ is a {\em linear} instance (\Cref{def:perturb}). Following \Cref{lem:revenue_formula}, it yields a {\DualityRelaxationBenchmark} revenue of
\begin{align*}
    \DRB(\bF) = n R(0) + \SPA(\bF) = 1 + \int_{0}^{+\infty} (1 - S(v)) \cdot \d v = 1 + 2 = 3.
\end{align*}
The second step formulates the {\SecondPriceAuction} revenue using the second-highest value CDF $S(v) \eqdef (F(v))^{n} + n \cdot (F(v))^{n - 1} \cdot (1 - F(v))$.
And the other steps are elementary algebra.

With a \textbf{\em deterministic ironed-virtual-value-based} stopping rule, the seller first chooses the item-wise ironed-virtual-value acceptance sets $\Phi_{i} \subseteq \R$ for $i \in [n]$. Then the adversary chooses a worst-case arrival order $\pi \in \Pi_{n}$ for this stopping rule.
For \Cref{exp:DRB_SPP}, (\Cref{footnote:extend}) each identical item's (ironed) virtual value $\phi_{i} \sim D_{i}$ only has two possibilities: $\phi_{\sf high} = \frac{R(\eps)}{\eps} = \frac{1}{n \eps} + 1$ with an infinitesimal probability $\eps \to 0^{+}$ and $\phi_{\sf low} = 1$ with probability $(1 - \eps) \to 1^{-}$. There are two cases:

\vspace{.05in}
\noindent
{\bf Case~1:}
$\exists i \in [n]$ such that $\phi_{\sf low} \in \Phi_{i}$, some acceptance set includes the low virtual value $\phi_{\sf low} = 1$. \\
The adversary can make such an item $i$ arrive first $\pi(1) = i$.
Then the seller will accept the low virtual value $\phi_{\sf low}$ with probability $\geq 1 - \eps$ (mainly from item $\pi(1) = i$) and the high virtual value $\phi_{\sf high}$ with probability $\leq \eps$, hence an expected virtual welfare $\leq \phi_{\sf low} \cdot 1 + \phi_{\sf high} \cdot \eps = 1 + \frac{1}{n} + \eps$.

\vspace{.05in}
\noindent
{\bf Case~2:}
$\nexists i \in [n]$ such that $\phi_{\sf low} \in \Phi_{i}$, no acceptance set includes the low virtual value $\phi_{\sf low} = 1$. \\
Regardless of the arrival order $\pi \in \Pi_{n}$ chosen by the adversary, the seller obtains an expected virtual welfare $\leq \phi_{\sf high} \cdot n \eps = 1 + n \eps$.

\vspace{.05in}
\noindent
So the considered stopping rule (the revenue equivalence; \Cref{prop:revenue_equivalence}) yields a revenue of at most $\leq \lim_{\eps \to 0^{+}} \max\, \{\, 1 + \frac{1}{n} + \eps,\, 1 + n \cdot \eps \,\} = 1 + \frac{1}{n}$, which can be arbitrarily close to $1^{+}$ for a large enough the number of items $n \gg 2$.
This finishes the proof.
\end{proof}

Second, we prove a matching lower bound for \Cref{thm:DRB_SPP}, under \textbf{(arbitrary) uniform-ironed-virtual-value} stopping rules; the following \Cref{exp:DRB_SPP_uniform} slightly modifies \Cref{exp:DRB_SPP}.

\begin{example}[{Lower-Bound Instances for $\calC_{\DRB / \UIVVSPP} = 3$}]
\label{exp:DRB_SPP_uniform}
\begin{flushleft}
There are $n \geq 2$ items $\bF = \{F_{i}\}_{i \in [n]}$ with revenue-quantile curves $R_{i}(q) \eqdef \frac{1}{n} + (1 + \frac{i}{n^{2}}) \cdot q$ for $q \in [0,\, 1]$ and (elementary algebra) value CDF's $F_{i}(v) \eqdef 1 - \frac{1 / n}{v - (1 + i / n^{2})}$ for $v \in [1 + \frac{1}{n} + \frac{i}{n^{2}},\, +\infty]$.
\end{flushleft}
\end{example}

\begin{proof}[Proof of \Cref{thm:DRB_SPP} (The Lower-Bound Part)]
This instance $\bF$ stochastically dominates the one in \Cref{exp:DRB_SPP}, so we know from \Cref{lem:revenue_monotonicity} that $\DRB(\bF) \geq 3$.

By an \textbf{arbitrary uniform-ironed-virtual-value}, the seller chooses a threshold $T$ (which can be randomized) and posted prices $\bp = (p_{i})_{i \in [n]}$ such that $\bar{\varphi}_{i}(p_{i}) = T$ (which can be randomized even conditioned on $T$). Then, the adversary chooses a worst-case arrival order $\pi^{*} \in \Pi_{n}$ against the distributional information of $\bp$.
Consider the anti-lexicographic order $F_{n}, F_{n - 1}, \dots, F_{1}$ (which is no worse than $\pi^{*}$) and an outcome $(T, \bp)$.
There are two cases about the item $F_{n}$:

\vspace{.05in}
\noindent
{\bf Case~1:}
$F_{n}$ gets accepted with probability $= 1$.
Then, since $F_{n}$'s value is supported on $[1 + \frac{2}{n}, +\infty]$, its outcome posted price is at most $p_{n} \leq 1 + \frac{2}{n}$, and so is the revenue $\leq 1 + \frac{2}{n}$.

\vspace{.05in}
\noindent
{\bf Case~2:}
$F_{n}$ gets accepted with probability $< 1$.
Then, since $F_{n}$'s ironed virtual value is supported on $\{1 + \frac{1}{n}, +\infty\}$, the outcome threshold is at least $T \geq 1 + \frac{1}{n}$.
Every other item $F_{i}$ for $i \in [n - 1]$ has an infinite posted price $p_{i} = +\infty$, since $\bar{\varphi}_{i}(v) = 1 + \frac{i}{n^{2}} < 1 + \frac{1}{n} \leq T$ for $v < +\infty$.
So the revenue $= \sum_{i = 1}^{n} p_{i} (1 - F_{i}(p_{i})) \prod_{j > i} F_{j}(p_{j}) = p_{n} (1 - F_{n}(p_{n})) + \sum_{i = 1}^{n - 1} R_{i}(0) F_{n}(p_{n}) = 1 + \frac{2 (1 - F_{n}(p_{n}))}{n} \leq 1 + \frac{2}{n}$, where the third step uses $F_{n}(p_{n}) = 1 - \frac{1 / n}{p_{n} - (1 + 1 / n)} \implies p_{n} = 1 + \frac{1}{n}(1 + \frac{1}{1 - F_{n}})$.

\vspace{.05in}
\noindent
To summarize, $\UIVVSPP(\bF) \leq 1 + \frac{2}{n}$, which can be arbitrarily close to $1^{+}$ for a large enough the number of items $n \gg 2$.
This finishes the proof.
\end{proof}


Finally, based on \Cref{exp:DRB_SPP}, we establish the lower-bound part of \Cref{thm:UIVVIP} (Item~(i)) that {\UIVVItemPricing} cannot achieve an approximation ratio better than $3$.

\begin{proof}[Proof of \Cref{thm:UIVVIP} (The Lower-Bound Part of Item~(i))]
The {\DualityRelaxationBenchmark} revenue is $\DRB(\bF) = 3$ and each identical item's (ironed) virtual value $\phi_{i} \sim D_{i}$ has two possibilities: $\phi_{\sf high} = \frac{1}{n \eps} + 1$ with probability $\eps \to 0^{+}$ and $\phi_{\sf low} = 1$ with probability $(1 - \eps) \to 1^{-}$.
By \Cref{def:UIVV_prices}, {\UIVVItemPricing} posts on all items $i \in [n]$
either a uniform high price $p_{\sf high} = \frac{R(\eps)}{\eps} = \frac{1}{n \eps} + 1$, hence a revenue of $p_{\sf high} \cdot (1 - (F(p_{\sf high}))^{n}) \leq \frac{p_{\sf high}}{p_{\sf high} - 1} = 1 + n \eps$,
or a uniform low price $p_{\sf low} = \frac{R(1)}{1} = \frac{n + 1}{n}$, hence a revenue of $p_{\sf low} \cdot 1 = \frac{n + 1}{n}$.

In sum, the revenue gap is at least $\geq \lim_{\eps \to 0^{+}} \frac{3}{\max\, \{\, 1 + n \eps,\, 1 + 1 / n \,\}} = \frac{3}{1 + 1 / n}$, which can be arbitrarily close to $3$ for a large enough number of items $n \gg 2$.
This finishes the proof.
\end{proof}

\begin{remark}[Tightness of \Cref{thm:UIVVIP}]
If we insist on \Cref{def:UIVV_prices} and use the {\UIVV} posted prices $p_{i} = \sup \{ v \in \RR \mid \bar{\varphi}_{i}(v) < T \}$ or $p_{i} = \sup \{ v \in \RR \mid \bar{\varphi}_{i}(v) \leq T \}$, then \Cref{exp:DRB_SPP} indicates that the revenue guarantee of $3$ is tight.
However, if we consider instead the following ``best-{\UIVV}'' posted prices $\bp^{*} = \bp^{*}(\bF)$, we cannot find a matching lower-bound instance.\footnote{Since $\bar{\varphi}_{i}$'s are monotonically increasing functions, such ``best-{\UIVV}'' posted prices $\bp^{*} = \bp^{*}(\bF)$ are well-defined.}
\begin{align*}
    \bp^{*}(\bF) \eqdef \argmax_{\bp} \{\IP(\bF,\, \bp) \mid p_{i} \in \bar{\varphi}_{i}^{-1}(T),\ \forall i \in [n]\}.
\end{align*}
E.g., let us rethink \Cref{exp:DRB_SPP}. That instance is symmetric $\bF = \{F\}^{\otimes n}$ and has a common ironed virtual value function $\bar{\varphi}(v) = 1$ for $v \in [\frac{n + 1}{n}, +\infty)$ and $\bar{\varphi}(+\infty) = +\infty$; thus the ``best-{\UIVV}'' posted prices $\bp^{*}$ are exactly the posted prices in {\OptimalItemPricing}.
It is not difficult to see that the $\bp^{*}$ for \Cref{exp:DRB_SPP} takes the form of $p_{i}^{*} = \frac{n + 1}{n}$ for exactly one item $i \in [n]$ and $p_{j}^{*} = +\infty$ for every other item $j \in [n] \setminus \{i\}$ and the resulting revenue $\UIVVIP(\bF,\, \bp^{*}) = \OIP(\bF) = 2$. Namely, the revenue guarantee against {\DualityRelaxationBenchmark} is $\frac{3}{2}$ (rather than $3$).
\end{remark}

\section{More Lower-Bound Instances}
\label{sec:example}

In this section, we prove the lower-bound parts of \Cref{thm:UIVVIP} (Item~(ii) and Item~(iii)) and \Cref{cor:OIP} (Item~(i)), both of which are restated below for ease of reference. Our lower-bound instances are inspired by \Cref{exp:DRB_SPP} from \Cref{subsec:SPP_LB}.

\begin{restate}[{\Cref{thm:UIVVIP}}]
\begin{flushleft}
{\UniformIronedVirtualValueItemPricing} achieves \\
(ii)~no better than a $\calC_{\OLP / \UIVVIP} \geq 2$ approximation to {\OptimalLotteryPricing}, and \\
(iii)~no better than a $\calC_{\OIP / \UIVVIP} \geq 2$ approximation to {\OptimalItemPricing}.
\end{flushleft}
\end{restate}

\begin{restate}[{\Cref{cor:OIP}}]
\begin{flushleft}
{\OptimalItemPricing} achieves \\
(i)~no better than a $\calC_{\DRB / \OIP} \geq 2$ approximation to {\DualityRelaxationBenchmark}.
\end{flushleft}
\end{restate}

First, by reconsidering \Cref{exp:DRB_SPP} from \Cref{subsec:SPP_LB}, we obtain a lower bound of $\calC_{\OLP / \UIVVIP} \geq \calC_{\OIP / \UIVVIP} \geq 2$ on the revenue guarantee of {\UIVVItemPricing} against either {\OptimalLotteryPricing} or {\OptimalItemPricing}.

\begin{restate}[{\Cref{exp:DRB_SPP}}]
\begin{flushleft}
\textnormal{There are $n \geq 2$ i.i.d.\ items $\bF = \{F_{i}\}_{i \in [n]}$ with a common revenue-quantile curve $R_{i}(q) = R(q) \eqdef \frac{1}{n} + q$ for $q \in [0,\, 1]$ and (elementary algebra) a common value CDF $F_{i}(v) = F(v) \eqdef 1 - \frac{1 / n}{v - 1}$ for $v \in [\frac{n + 1}{n},\, +\infty]$.}
\end{flushleft}
\end{restate}

\begin{proof}[Proof of \Cref{thm:UIVVIP} (The Lower-Bound Parts of Item~(ii) and Item~(iii))]
We have shown in \Cref{subsec:SPP_LB} that the {\UIVVItemPricing} revenue is at most $\UIVVIP(\bF) \leq \frac{n + 1}{n}$. Below we further show that the {\OptimalItemPricing} revenue is at least $\OIP(\bF) \geq 2$.

For a large enough $t \to +\infty$, let us consider the item-wise prices $\bp^{[t]} = (p_{i})_{i \in [n]}$ that $p_{i} = t$ for each item $i \in [n - 1]$ and $p_{n} = \frac{n + 1}{n}$ for the index-$n$ item.
In this way, the buyer is always willing to purchase the index-$n$ item $\{v_{n} - \frac{n + 1}{n} \geq 0\}$ but prefers another item $\argmax_{i \in [n - 1]} \{v_{i}\}$ if and only if that item maximizes his utility $\{\max(\bv_{-n}) - t \geq v_{n} - \frac{n + 1}{n}\}$.
By elementary algebra, the resulting {\ItemPricing} revenue with $t \to +\infty$ is equal to
\begin{align*}
    \lim_{t \to +\infty} \IP(\bF,\, \bp^{[t]})
    & ~=~ \lim_{t \to +\infty} \Big(t \cdot \int_{v = \frac{n + 1}{n}}^{+\infty} \Big(1 - (F(v + t - \tfrac{n + 1}{n}))^{n - 1}\Big) \cdot \d F(v) \\
    & \phantom{~=~ \lim_{t \to +\infty} \Big(} + \tfrac{n + 1}{n} \cdot \Big(1 - \int_{v = \frac{n + 1}{n}}^{+\infty} \Big(1 - (F(v + t - \tfrac{n + 1}{n}))^{n - 1}\Big) \cdot \d F(v)\Big)\Big) \\
    & ~=~ \lim_{t \to +\infty} \Big((t - \tfrac{n + 1}{n}) \cdot \int_{v = \frac{n + 1}{n}}^{+\infty} \Big(1 - \Big(1 - \tfrac{1 / n}{v + t - \frac{n + 1}{n} - 1}\Big)^{n - 1}\Big) \cdot \tfrac{1 / n}{(v - 1)^{2}} \cdot \d v\Big) + \tfrac{n + 1}{n} \\
    & ~=~ \lim_{t \to +\infty} \Big((t - \tfrac{n + 1}{n}) \cdot \int_{v = \frac{n + 1}{n}}^{+\infty} (1 \pm o_{t}(1)) \cdot \tfrac{n - 1}{n \cdot (v + t - \frac{n + 1}{n} - 1)} \cdot \tfrac{1 / n}{(v - 1)^{2}} \cdot \d v\Big) + \tfrac{n + 1}{n} \\
    & ~=~ \lim_{t \to +\infty} \Big((1 \pm o_{t}(1)) \cdot t \cdot \tfrac{n - 1}{n \cdot t} \cdot \int_{v = \frac{n + 1}{n}}^{+\infty} \tfrac{1 / n}{(v - 1)^{2}} \cdot \d v\Big) + \tfrac{n + 1}{n} \\
    & ~=~ \tfrac{n - 1}{n} + \tfrac{n + 1}{n}
    ~=~ 2.
\end{align*}
The revenues from {\OptimalLotteryPricing} and {\OptimalItemPricing} clearly are at least this amount $\OLP(\bF) \geq \OIP(\bF) \geq \lim_{t \to +\infty} \IP(\bF,\, \bp^{[t]}) \geq 2$.
Thus, either revenue gap is at least $\geq \frac{2}{1 + 1 / n}$, which can be arbitrarily close to $2$ for a large enough number of items $n \gg 2$.
This finishes the proof.
\end{proof}

Moreover, by considering the next \Cref{exp:DRB_OIP}, we obtain a lower bound of $\calC_{\DRB / \OIP} \geq 2$ on the revenue gap between {\DualityRelaxationBenchmark} and {\OptimalItemPricing}.

\begin{example}[{Lower-Bound Instances for $\calC_{\DRB / \OIP} \geq 2$}]
\label{exp:DRB_OIP}
\begin{flushleft}
There are $n \geq 2$ i.i.d.\ items $\bF = \{F_{i}\}_{i \in [n]}$ with a common revenue-quantile curve $R_{i}(q) = R(q) \eqdef 1$ for $q \in [0,\, 1]$ and (elementary algebra) a common value distribution $F_{i}(v) = F(v) \eqdef 1 - 1 / v$ for $v \in [1,\, +\infty]$.
\end{flushleft}
\end{example}

\begin{proof}[Proof of \Cref{cor:OIP} (The Lower-Bound Part of Item~(i))]
We can easily check that $\bF$ is a {\em linear} instance (\Cref{def:perturb}).
By \Cref{lem:revenue_formula}, the {\MyersonAuction} revenue is $\MA(\bF) = n R(0) = n$, and the {\DualityRelaxationBenchmark} revenue is
\begin{align*}
    \DRB(\bF) = n R(0) + \SPA(\bF) = n + n = 2n.
\end{align*}
The second step formulates the {\SecondPriceAuction} revenue using the second-highest value CDF $S(v) \eqdef (F(v))^{n} + n \cdot (F(v))^{n - 1} \cdot (1 - F(v))$.
We thus have $\OIP(\bF) \leq \MA(\bF) = n$ since, as shown in \cite[Lemma~5]{CHK07}, {\MyersonAuction} revenue-surpasses {\OptimalItemPricing} on the same instance.
(On the other hand, we also have $\OIP(\bF) \geq \lim_{t \to +\infty} \IP(\bF,\, \bp^{[t]}) = n$ by using the item-wise prices $\bp^{[t]} = (t,\, t,\, \dots,\, t)$.)
This finishes the proof.
\end{proof}

\section*{Acknowledgements}
We are grateful to Jason Hartline, Yingkai Li, and Jinzhao Wu for invaluable discussions and to Dongxu Wang for help with mathematical experiments.

\begin{flushleft}
\bibliographystyle{alphaurl}
\bibliography{main}
\end{flushleft}

\appendix

\section{The Single-Dimensional Representative Approach}
\label{sec:randomness}

\begin{restate}[{\Cref{prop:representative}}]
\begin{flushleft}
Given the same instance $\bF$ and the same \textbf{deterministic} prices $\bp$,
{\ItemPricing} yields a higher revenue than {\SequentialPostedPricing} in the order-oblivious model $\IP(\bF,\, \bp) \geq \SPP^{*}(\bF,\, \bp)\ignore{\min_{\pi \in \Pi_{n}} \{\SPP(\bF,\, \bp,\, \pi)\}}$.
\end{flushleft}
\end{restate}

We emphasize that, to validate \Cref{prop:representative} as a black-box reduction, the involved prices $\bp$ must be deterministic.
Namely, randomness of the prices $\bp$ can benefit {\SequentialPostedPricing}\footnote{In contrast, randomness of the prices $\bp$ cannot benefit {\ItemPricing}, because randomized prices $\bp$ cannot yield a better expected revenue than their revenue-optimal outcomes $\argmax_{\tilde{\bp} \in \supp(\bp)} \IP(\bF,\, \tilde{\bp})$.}
and may make it surpass the corresponding {\ItemPricing} (even if the value distributions $\bF$ are i.i.d.\ regular \cite{M81}).
Here is a simple example.

\begin{example}[The Effects of Randomness]
\label{exp:randomness}
\begin{flushleft}
There are two i.i.d.\ random values $v_{i}$ for $i \in \{1,\, 2\}$; each value is equal to $2$ with probability $\frac{1}{2}$ and otherwise is uniformly distributed over $[1,\, 2)$.
It is easy to check that this is a regular instance (cf.\ \Cref{def:regular}).
Now consider two i.i.d.\ random prices $\Pr[p_{i} = 1] = \Pr[p_{i} = 2] = \frac{1}{2}$ for $i \in \{1,\, 2\}$. By elementary algebra, we have
\begin{align*}
    \E_{\bp} [\, \IP(\bF,\, \bp) \,]
    & ~=~ 2 \cdot \Pr\ignore{_{\bv,\, \bp}} [\, \bv = \bp = \mathbf{2} \,]
    + 1 \cdot (1 - \Pr\ignore{_{\bv,\, \bp}} [\, \bv = \bp = \mathbf{2} \,] - \Pr\ignore{_{\bv,\, \bp}} [\, \bv \prec \bp = \mathbf{2} \,]) \\
    & ~=~ 2 \cdot \tfrac{1}{16} + 1 \cdot (1 - \tfrac{1}{16} - \tfrac{1}{16})
    ~=~ 1.
\end{align*}
Since the i.i.d.\ random prices $p_{1},\, p_{2}$ are order-symmetric to the i.i.d.\ random values $v_{1},\, v_{2}$, without loss of generality we can consider the order $\pi = (1,\, 2)$. Then, also by elementary algebra, we have
\begin{align*}
    \E_{\bp} [\, \SPP^{*}(\bF,\, \bp) \,]
    & ~=~ 1 \cdot \Pr [\, p_{1} = 1 \,]
    + 2 \cdot \Pr [\, v_{1} = p_{1} = 2 \,] \\
    & \phantom{~=~} + (1 \cdot \Pr [\, p_{2} = 1 \,]
    + 2 \cdot \Pr [\, v_{2} = p_{2} = 2 \,]) \cdot \Pr[\, v_{1} < p_{1} = 2 \,]
    \hspace{1.1cm} \\
    & ~=~ 1 \cdot \tfrac{1}{2} + 2 \cdot \tfrac{1}{4} + (1 \cdot \tfrac{1}{2} + 2 \cdot \tfrac{1}{4}) \cdot \tfrac{1}{4}
    ~=~ \tfrac{5}{4}.
\end{align*}
In sum, these randomized prices $\bp$ make {\SequentialPostedPricing} revenue-surpass {\ItemPricing}.
\end{flushleft}
\end{example}

Let us rethink \Cref{exp:randomness} carefully: We can apply \Cref{prop:representative} to every possible outcome of the prices $\in \{(1,\, 1),\, (1,\, 2),\, (2,\, 1),\, (2,\, 2)\}$. However, since the arrival order $\pi \in \Pi_{n}$ for {\SequentialPostedPricing} chosen by the adversary relies on the outcome, over the randomness of the prices $\bp$ we cannot conclude with $\E_{\bp} [ \IP(\bF,\, \bp) ] \geq \E_{\bp} [ \SPP^{*}(\bF,\, \bp) ]$ (which indeed is false).

\section{Polynomial-Time Implementation of Simple Mechanisms}
\label{sec:implementation}

To implement {\UniformIronedVirtualValueItemPricing} under the prices $\bp = \{p_{i}\}_{i \in [n]}$ determined in \Cref{def:UIVV_prices,thm:DRB_SPP}, we assume that the value distributions $\bF = \{F_{i}\}_{i \in [n]}$ are discrete, with support sizes $m_{i} \eqdef |\supp(F_{i})|$ for $i \in [n]$.
This is a standard assumption when computation is involved (see, e.g., \cite{DDT14,CDPSY18,CDOPSY22,CMPY18}). Let $m \eqdef \max_{i \in [n]} \{ m_{i} \}$.

Specifically, we use the prices $p_{i} \eqdef \sup \{ v \in \RR \mid \bar{\varphi}_{i}(v) < T \}$ for $i \in [n]$, where the $\bar{\varphi}_{i}$'s are the increasing ironed virtual value functions and the {\UIVV}-threshold $T = \frac{1}{3} \cdot \DRB(\bF)$ is one third of the {\DualityRelaxationBenchmark} revenue. We can compute these prices $\{p_{i}\}_{i \in [n]}$ as follows.
\begin{flushleft}
\begin{itemize}
 \item Compute the revenue-quantile curves $\{R_{i}\}_{i \in [n]}$ in time $O(n m)$.
 
 \item Compute the ironed revenue-quantile curves $\bar{R}_{i} = \Conv(R_{i})$, which are $(\leq m)$-piecewise linear concave functions, in time $O(n m \log m)$ by the Graham scan algorithm \cite{G72}.
 
 \item Compute the ironed virtual value functions $\{\bar{\varphi}_{i}\}_{i \in [n]}$ and the ironed virtual value CDF's $\{\bar{D}_{i}\}_{i \in [n]}$, which are $(\leq m)$-piecewise constant functions, in time $O(n m)$.
 
 \item Compute the auxiliary value CDF's $\{F_{-i}\}_{i \in [n]}$, where each CDF $F_{-i}(v) \eqdef \prod_{j \neq i} F_{j}(v)$ is a $(\leq n m)$-piecewise constant function, in time $O(n^{2} m)$.
 
 \item Compute, based on $\{F_{i}\}_{i \in [n]}$, $\{F_{-i}\}_{i \in [n]}$, and $\{\bar{\varphi}_{i}\}_{i \in [n]}$, the {\DualityRelaxationBenchmark} revenue $\DRB(\bF) = \E_{\bv \sim \bF} \big[ \max_{i \in [n]} \big\{ \bar{\varphi}_{i}(v_{i}) \cdot \indicator(i = \argmax(\bv)) + v_{i} \cdot \indicator(i \neq \argmax(\bv)) \big\} \big]$ \\
 in time $O(\sum_{i \in [n]} |F_{-i}| \cdot |F_{i}|) = O(n^{2} m^{2})$ by enumerating all events $\{\argmax(\bv) = i\} \wedge \{v_{i} \in \supp(F_{i})\} \wedge \{\max(\bv_{-i}) \in \supp(F_{-i})\}$.
 
 \item Compute, based on $\{\bar{\varphi}_{i}\}_{i \in [n]}$ and $T = \frac{1}{3} \cdot \DRB(\bF)$, the prices $\{p_{i}\}_{i \in [n]}$ in time $O(n \log m)$ by binary search.
\end{itemize}
\end{flushleft}
In sum, we can compute the prices $\{p_{i}\}_{i \in [n]}$ in time $O(n^{2} m^{2})$.

\section{Revenue Non-Monotonicity of {\DualityRelaxationBenchmark}}
\label{sec:non-monotonicity}

Below, we show that the {\DualityRelaxationBenchmark} \cite{CDW21} does not satisfy revenue monotonicity in both the \textit{additive single-buyer} case (\Cref{sec:non-monotonicity:additive}) and the \textit{unit-demand multi-buyer} case (\Cref{sec:non-monotonicity:unit-demand-multi-buyer}).

\subsection{The Additive Single-Buyer Case}
\label{sec:non-monotonicity:additive}

Given $v_{i} \sim F_{i}$ for $i \in [n]$, an {\em additive} buyer has a value of $v(S) = \sum_{i \in S} v_{i}$ for every bundle $S \subseteq [n]$. The revenue formula of the {\DualityRelaxationBenchmark} \cite{CDW21} is given by
\begin{align*}
    \DRB(\bF)
    ~\eqdef~ \E_{\bv \sim \bF} \Big[\, \sum_{i \in [n]} \big(\bar{\varphi}_{i}(v_{i}) \cdot \indicator(i = \argmax(\bv)) + v_{i} \cdot \indicator(i \ne \argmax(\bv))\big) \,\Big].
\end{align*}
The following \Cref{lem:additive} shows that, in general, the {\DualityRelaxationBenchmark} does not satisfy the revenue monotonicity in the \textit{additive single-buyer} case.

\begin{lemma}[Revenue Non-Monotonicity]
\label{lem:additive}
\begin{flushleft}
There exist two ``additive single-buyer'' instances $\bF \preceq \tilde{\bF}$ such that {\DualityRelaxationBenchmark} violates revenue monotonicity $\DRB(\bF) > \DRB(\tilde{\bF})$.
\end{flushleft}
\end{lemma}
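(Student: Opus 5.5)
The plan is to exhibit an explicit two-item counterexample. The additive {\DualityRelaxationBenchmark} equals $\E[\sum_{i} v_{i}]$ minus the ``information rent'' $\E[v_{(1)} - \bar{\varphi}_{(1)}(v_{(1)})]$ of the top item. Stochastically increasing an item can therefore lower the benchmark if the new mass makes that item the argmax exactly where its ironed virtual value is very negative. This cannot happen in the unit-demand case, where Case~3 of \Cref{lem:revenue_monotonicity} is controlled by the $\max(\cdot, v_{(2)})$ term. Here the additive sum has no such protection, so I will create an ironing interval and move some mass of item~$1$ from below item~$2$'s value to above it while staying inside that interval.

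\textbf{Construction.} Let item~$2$ be deterministic, $v_{2} \equiv 1$, in both instances, so $\bar{\varphi}_{2} \equiv 1$. In $\bF$, let $v_{1} = 10$ with probability $\frac{1}{10}$ and $v_{1} = 0$ otherwise. In $\tilde{\bF}$, let $v_{1} = 10$ with probability $\frac{1}{10}$, $v_{1} = 2$ with probability $\frac{1}{10}$, and $v_{1} = 0$ with probability $\frac{8}{10}$. Since $\tilde{\bF}$ only moves mass of item~$1$ upward from $0$ to $2$, we have $\bF \preceq \tilde{\bF}$.

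\textbf{Ironed virtual values.} For $F_{1}$, the revenue curve has the points $(0,0)$, $(\frac{1}{10}, 1)$ and $(1, 0)$. Its concave envelope has slope $10$ on $[0, \frac{1}{10}]$ and slope $-\frac{10}{9}$ on $[\frac{1}{10}, 1]$. For $\tilde{F}_{1}$, there is one extra point $(\frac{2}{10}, \frac{4}{10})$. This point lies below the chord from $(\frac{1}{10}, 1)$ to $(1,0)$, whose height at $\frac{2}{10}$ is $\frac{8}{9}$. Hence the concave envelope is unchanged, and value $2$ is ironed together with value $0$ at $\bar{\tilde{\varphi}}_{1} = -\frac{10}{9}$, while $\bar{\tilde{\varphi}}_{1}(10) = 10$. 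This verification of the envelope is the only delicate step.

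\textbf{Evaluating both benchmarks.} No ties occur, so the tie-breaking rule is irrelevant.
\begin{itemize}
\item[] \emph{Instance $\bF$.} When $v_{1} = 10$, item~$1$ is the argmax and the contribution is $10 + 1 = 11$. When $v_{1} = 0$, item~$2$ is the argmax and the contribution is $0 + \bar{\varphi}_{2}(1) = 1$. Hence
\[
\DRB(\bF) = \tfrac{1}{10} \cdot 11 + \tfrac{9}{10} \cdot 1 = 2.
\]
\item[] \emph{Instance $\tilde{\bF}$.} The events $v_{1} = 10$ and $v_{1} = 0$ contribute $\frac{11}{10}$ and $\frac{8}{10}$, as before. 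When $v_{1} = 2$, item~$1$ is now the argmax, and the contribution is $-\frac{10}{9} + 1 = -\frac{1}{9}$. Hence
\[
\DRB(\tilde{\bF}) = \tfrac{19}{10} - \tfrac{1}{90} < 2 = \DRB(\bF).
\]
\end{itemize}
This proves the claimed violation of revenue monotonicity.
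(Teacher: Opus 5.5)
Your counterexample is correct under the additive benchmark formula the paper actually states, which has no positive part. The envelope check holds: $(\frac{2}{10},\frac{4}{10})$ lies below the chord. This gives $\bar{\tilde{\varphi}}_1(2) = -\frac{10}{9}$, and then $\DRB(\tilde{\bF}) = \frac{17}{9} < 2 = \DRB(\bF)$.

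It is a different mechanism from the paper's, though. The paper's construction works as follows:
\begin{itemize}
\item Item~1 is deterministic at value $1$. Item~2 first takes value $1$ with probability $1-\eps$ (losing the tie to item~1) and value $1/\eps$ with probability $\eps$.
\item Item~2 is then replaced by a truncated equal-revenue distribution. Now item~2 is the favorite almost surely and contributes its ironed virtual value $0$ instead of its value $1$.
\item Item~1 contributes $1$ in both instances, so the benchmark drops from $3-\eps$ to $2$.
\end{itemize}
In the paper's example every ironed virtual value is nonnegative. In yours, the entire drop comes from the single negative term $\bar{\tilde{\varphi}}_1(2) = -\frac{10}{9}$.

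Your route buys an exact, finite, tie-free computation. The paper's route buys robustness. Suppose the benchmark is read as the maximum of the relaxed virtual welfare over allocations, i.e.\ with $\max(\bar{\varphi}_i(v_i), 0)$ for the favorite item; this is the natural allocation-independent form of the CDW21 bound, since one never allocates at negative virtual value. Under that reading both of your instances evaluate to exactly $2$, so there is no violation, while the paper's instances still give $3-\eps > 2$. The same robustness explains why the paper's construction carries over almost verbatim to the unit-demand multi-buyer case in the next appendix subsection, where the maximum over matchings automatically discards negative terms.
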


\begin{example}[Revenue Non-Monotonicity]
\label{exp:additive}
There are two items; when they have the same value, we break ties in favor of the first item ``$\{v_{1} = v_{2}\} \implies \{v_{1} \succ v_{2}\}$''. Let $\eps \to 0^{+}$.

In the first instance $\bF$, the first item $F_{1}$ has a deterministic value of $1$, and the second item $F_{2}$ follows the distribution $\Pr_{v_{2} \sim F_{2}}[v_{2} = 1] = 1 - \eps$ and $\Pr_{v_{2} \sim F_{2}}[v_{2} = 1 / \eps] = \eps$. Given $v_{2} \sim F_{2}$, the outcome benchmark $\DRB(\bv)$ evaluates to
\begin{align*}
    \DRB(\bv) ~=~
    \begin{cases}
        1 + v_{2} = 2, & v_{2} = 1 \\
        1 + \bar{\varphi}_{2}(v_{2}) = 1 + 1 / \eps, & v_{2} = 1 / \eps
    \end{cases}.
\end{align*}
In expectation over $v_{2} \sim F_{2}$, the benchmark $\DRB(\bF)$ evaluates to
\begin{align*}
    \DRB(\bF)
    ~=~ \lim_{\eps \to 0^{+}} \Big(2 \cdot (1 - \eps)
    + (1 + 1 / \eps) \cdot \eps\Big)
    ~=~ 3.
\end{align*}

In the second instance $\tilde{\bF}$, the first item keeps the same $\tilde{F}_{1} = F_{1}$, while the second item follows the truncated equal-revenue distribution $\tilde{F}_{2}(v) \eqdef 1 - 1 / v \cdot \indicator(v \le 1 / \eps)$ for $v > 1$. Given $\tilde{v}_{2} \sim \tilde{F}_{2}$, the outcome benchmark $\DRB(\tilde{\bv})$ evaluates to
\begin{align*}
    \DRB(\tilde{\bv}) ~=~ 1 + \bar{\tilde{\varphi}}_{2}(\tilde{v}_{2}) ~=~
    \begin{cases}
        1, & v_{2} \in (1, 1 / \eps) \\
        1 + 1 / \eps, & v_{2} = 1 / \eps
    \end{cases}.
\end{align*}
In expectation over $\tilde{v}_{2} \sim \tilde{F}_{2}$, the benchmark $\DRB(\tilde{\bF})$ evaluates to
\begin{align*}
    \DRB(\tilde{\bF})
    ~=~ \lim_{\eps \to 0^{+}} \Big(1 \cdot (1 - \eps)
    + (1 + 1 / \eps) \cdot \eps\Big)
    ~=~ 2.
\end{align*}
We conclude the proof of \Cref{lem:additive} by noting that $\bF \preceq \tilde{\bF}$ and $\DRB(\bF) > \DRB(\tilde{\bF})$.
\end{example}

\subsection{The Unit-Demand Multi-Buyer Case}
\label{sec:non-monotonicity:unit-demand-multi-buyer}

Consider $m \ge 2$ unit-demand buyers with independent item-wise values $v_{ji} \sim F_{ji}$ for $ji \in [m] \times [n]$; let $\bv_{j} \eqdef (v_{ji})_{i \in [n]}$ for $j \in [m]$. The revenue formula of the {\DualityRelaxationBenchmark} \cite{CDW21} concerns a buyer-item bipartite matching $M \in \calM([m] \times [n])$ and is given by
\begin{align*}
    \DRB(\bF)
    ~\eqdef~ \E_{\bv \sim \bF} \Big[\, \max_{M \in \calM([m] \times [n])} \sum_{ji \in M} \big(\bar{\varphi}_{ji}(v_{ji}) \cdot \indicator(i = \argmax(\bv_{j})) + v_{ji} \cdot \indicator(i \ne \argmax(\bv_{j}))\big) \,\Big].
\end{align*}
The following \Cref{lem:unit-demand-multi-buyer} shows that, in general, the {\DualityRelaxationBenchmark} does not satisfy the revenue monotonicity in the \textit{unit-demand multi-buyer} case.


\begin{lemma}[Revenue Non-Monotonicity]
\label{lem:unit-demand-multi-buyer}
\begin{flushleft}
There exist two ``unit-demand multi-buyer'' instances $\bF \preceq \tilde{\bF}$ such that {\DualityRelaxationBenchmark} violates revenue monotonicity $\DRB(\bF) > \DRB(\tilde{\bF})$.
\end{flushleft}
\end{lemma}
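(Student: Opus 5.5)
The plan is to lift \Cref{exp:additive} to the multi-buyer setting. In the additive example, non-monotonicity comes from one mechanism: when item $2$ ties with item $1$, the tie-breaking rule makes item $2$ a non-favorite item. It then contributes its full value $v_2 = 1$. After the stochastic upgrade, item $2$ becomes the favorite and contributes only its ironed virtual value, which is $0$ for the equal-revenue distribution. In the unit-demand multi-buyer benchmark, each buyer is matched to at most one item, so a second buyer is needed to restore the "sum over two items" structure that additivity provided.

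\textbf{Construction.} I would take $m = n = 2$ and $\eps \to 0^{+}$.
\begin{itemize}
\item Buyer $1$ has the same distributions as in \Cref{exp:additive}: $F_{11}$ is a point mass at $1$.
\item In $\bF$, $F_{12}$ puts mass $1 - \eps$ on $1$ and mass $\eps$ on $1/\eps$.
\item In $\tilde{\bF}$, $\tilde{F}_{12}$ is the truncated equal-revenue distribution $1 - \frac{1}{v} \cdot \indicator(v \le 1/\eps)$.
\item Ties for buyer $1$ are broken in favor of item $1$.
\item Buyer $2$ is deterministic and identical in both instances, with $v_{21} = 1$ and $v_{22} = 0$. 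Item $1$ is therefore buyer $2$'s favorite, and $\bar{\varphi}_{21}(1) = 1$ because a point mass has virtual value equal to its value.
\end{itemize}
Clearly $\bF \preceq \tilde{\bF}$, since only $F_{12}$ changes and $\tilde{F}_{12}$ dominates $F_{12}$.

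\textbf{Evaluating $\DRB(\bF)$.} There are two outcomes.
\begin{itemize}
\item If $v_{12} = 1$, item $1$ is buyer $1$'s favorite by tie-breaking. The matching $\{(1,2),(2,1)\}$ then yields $v_{12} + \bar{\varphi}_{21}(1) = 1 + 1 = 2$. Every term is at most $1$, so this is the maximum.
\item If $v_{12} = 1/\eps$, the same matching yields $\bar{\varphi}_{12}(1/\eps) + 1 = 1/\eps + 1$. Here $\bar{\varphi}_{12}(1/\eps) = 1/\eps$, as in the additive example.
\end{itemize}
The expectation is $2(1-\eps) + (1 + 1/\eps)\eps \to 3$.

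\textbf{Evaluating $\DRB(\tilde{\bF})$.} Again there are two outcomes.
\begin{itemize}
\item For $\tilde{v}_{12} \in (1, 1/\eps)$, item $2$ is buyer $1$'s favorite with ironed virtual value $0$. I would enumerate the matchings:
\begin{itemize}
\item $\{(1,2),(2,1)\}$ gives $0 + 1 = 1$;
\item $\{(1,1),(2,2)\}$ gives $v_{11} + v_{22} = 1 + 0 = 1$;
\item each single edge gives at most $1$.
\end{itemize}
So the maximum is $1$. The boundary value $\tilde{v}_{12} = 1$ has probability zero.
\item At $\tilde{v}_{12} = 1/\eps$, which has mass $\eps$, the virtual value is $1/\eps$ and the maximum is $1/\eps + 1$.
\end{itemize}
The expectation is $(1-\eps) + (1 + 1/\eps)\eps \to 2$. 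Hence $\DRB(\bF) \to 3 > 2 \leftarrow \DRB(\tilde{\bF})$, and for small enough $\eps$ the strict inequality holds.

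\textbf{Main obstacle.} The delicate step is checking that no matching in $\tilde{\bF}$ beats $1$ on the bulk event. The danger is that buyer $2$ could pair with item $2$ and recover value. This is exactly why I set $v_{22} = 0$ and made buyer $2$'s favorite item $1$. A second check is that the virtual values of the truncated equal-revenue distribution are indeed $0$ below the top atom and $1/\eps$ at the atom. This is the same computation used in \Cref{exp:additive}, so I would cite it.
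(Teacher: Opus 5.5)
Your construction is correct and is essentially the paper's own example with the two buyers relabeled. Like the paper, you lift the additive example by adding a deterministic buyer who absorbs item $1$, so that on the bulk event item $2$ contributes its full value $1$ under $\bF$ (because of the tie) but only its ironed virtual value $0$ under $\tilde{\bF}$; the one difference is that the paper's deterministic buyer has values $(2,0)$ rather than your $(1,0)$, giving the gap $4$ versus $3$ instead of your $3-\varepsilon$ versus $2$.
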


\begin{example}[Revenue Non-Monotonicity]
\label{exp:unit-demand-multi-buyer}
There are two buyers and two items; we break ties such that ``$\{v_{21} = v_{22}\} \implies \{v_{21} \succ v_{22}\}$'' (but otherwise arbitrarily). Let $\eps \to 0^{+}$.

In the first instance $\bF$:
For the first buyer, the first item $F_{11}$ has a deterministic value of $2$, and the second item $F_{12}$ has a deterministic value of $0$.
For the second buyer, the first item $F_{21}$ has a deterministic value of $1$, and the second item $F_{22}$ follows the distribution $\Pr_{v_{22} \sim F_{22}}[v_{22} = 1] = 1 - \eps$ and $\Pr_{v_{22} \sim F_{22}}[v_{22} = 1 / \eps] = \eps$.
Given $v_{22} \sim F_{22}$, the outcome benchmark $\DRB(\bv)$ evaluates to
\begin{align*}
    \DRB(\bv) ~=~
    \begin{cases}
        2 + v_{22} = 3, & v_{22} = 1 \\
        2 + \bar{\varphi}_{22}(v_{22}) = 2 + 1 / \eps, & v_{22} = 1 / \eps
    \end{cases}.
\end{align*}
In expectation over $v_{22} \sim F_{22}$, the benchmark $\DRB(\bF)$ evaluates to
\begin{align*}
    \DRB(\bF)
    ~=~ \lim_{\eps \to 0^{+}} \Big(3 \cdot (1 - \eps)
    + (2 + 1 / \eps) \cdot \eps\Big)
    ~=~ 4.
\end{align*}

In the second instance $\tilde{\bF}$:
For the first buyer, both items keep the same $F_{11} = \tilde{F}_{11}$ and $F_{12} = \tilde{F}_{12}$.
For the second buyer, the first item keeps the same $\tilde{F}_{21} = F_{21}$, while the second item follows the truncated equal-revenue distribution $\tilde{F}_{22}(v) \eqdef 1 - 1 / v \cdot \indicator(v \le 1 / \eps)$ for $v > 1$. Given $\tilde{v}_{22} \sim \tilde{F}_{22}$, the outcome benchmark $\DRB(\tilde{\bv})$ evaluates to
\begin{align*}
    \DRB(\tilde{\bv}) ~=~ 2 + \bar{\tilde{\varphi}}_{22}(\tilde{v}_{22}) ~=~
    \begin{cases}
        2, & v_{22} \in (1, 1 / \eps) \\
        2 + 1 / \eps, & v_{22} = 1 / \eps
    \end{cases}.
\end{align*}
In expectation over $\tilde{v}_{22} \sim \tilde{F}_{22}$, the benchmark $\DRB(\tilde{\bF})$ evaluates to
\begin{align*}
    \DRB(\tilde{\bF})
    ~=~ \lim_{\eps \to 0^{+}} \Big(2 \cdot (1 - \eps)
    + (2 + 1 / \eps) \cdot \eps\Big)
    ~=~ 3.
\end{align*}
We conclude the proof of \Cref{lem:unit-demand-multi-buyer} by noting that $\bF \preceq \tilde{\bF}$ and $\DRB(\bF) > \DRB(\tilde{\bF})$.
\end{example}

\section{The Exact Worst Cases}
\label{sec:worst_case}

Following \Cref{subsec:perturb}, below we sketch how to characterize the worst cases among {\em linear} instances, based on {\bf revenue-quantile curves}. (The arguments below can be informal; the goal is to reveal the structures of the exact worst cases.)

Recall that $n \geq 2$. First, we transform an {\em $n$-item linear} instance $\bR$ into an {\em $(n + 1)$-item linear} instance $\bR \otimes R_{T}$, where the new item $R_{T}(q) \eqdef T \cdot q$ for $q \in [0,\, 1]$ has a deterministic value of $T = \alpha \cdot \DRB(\bR)$ and the same division point $\kappa_{T} = 1$ or $\kappa_{T} = 0$ as the input $\bkappa = \ones$ or $\bkappa = \zeros$.




\begin{lemma}
\begin{flushleft}
Given an $n$-item linear instance $\bR$, the $(n + 1)$-item instance $\bR \otimes R_{T}$ is a linear instance that $\DRB(\bR \otimes R_{T}) = \DRB(\bR)$ and $\UIVVSPP(\bR \otimes R_{T},\, T) \leq \UIVVSPP(\bR,\, T)$.
\end{flushleft}
\end{lemma}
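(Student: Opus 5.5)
The plan is to verify the three claims directly from the closed-form revenue expressions for semi-linear and linear instances: \Cref{lem:revenue_formula} for the benchmark, and \Cref{eq:VUSPP_positive:1,eq:VUSPP_negative:1} (equivalently \Cref{eq:OOSPP}) for {\UIVVSPP}. Throughout, the truncation point of the linear instance $\bR$ is $\tau = T$.

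\emph{Linearity.} The new curve $R_{T}(q) = T \cdot q$ has slope $T$, which we read as $T^{+}$ when $\bkappa = \ones$ and as $T^{-}$ when $\bkappa = \zeros$. It also has $R_{T}(0) = 0$. So $\bR \otimes R_{T}$ is again linear, with the same truncation point and the same uniform division point.

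\emph{Benchmark.} I would argue pointwise rather than through the formula. By the proof of \Cref{lem:revenue_formula}, every original item $i \in [n]$ has value CDF $F_{i}(v) = 1 - \frac{R_{i}(0)}{v - T}$, supported on $v \geq R_{i}(0) + T \geq T$. Hence, in every outcome, the new deterministic value $T$ is weakly below every original value; with a tie-breaking rule ranking the new item last, it is strictly below. Since $n \geq 2$, the new item is therefore never the highest nor the second-highest value. The outcome benchmark $\max(\bar{\varphi}_{(1)}(v_{(1)}), v_{(2)})$ is thus unchanged realization by realization, so $\DRB(\bR \otimes R_{T}) = \DRB(\bR)$. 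As a cross-check, \Cref{lem:revenue_formula} gives the same conclusion: $\sum_i R_i(0)$ gains the term $R_{T}(0) = 0$, and $\SPA$ is unchanged by the same argument. Consequently the {\UIVV}-threshold $\alpha \cdot \DRB$ is still $T$.

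\emph{{\UIVVSPP}, split by sign.}
\begin{itemize}
\item Positive case ($\bkappa = \ones$). The adversary may place the new item first. It then accepts its price $T$ with probability $\kappa_{T} = 1$, so the revenue is exactly $T$. By \Cref{eq:VUSPP_positive:1}, the original instance earns $R_{\bar{1}}(0) + T \geq T$ under any order. Hence $\UIVVSPP(\bR \otimes R_{T}, T) \leq T \leq \UIVVSPP(\bR, T)$.
\item Negative case ($\bkappa = \zeros$). The new item accepts with probability $\kappa_{T} = 0$ and contributes $R_{T}(0) = 0$. By \Cref{eq:VUSPP_negative:1}, the revenue under any order is $\sum_{i} R_{i}(0) + 0$, which equals the original revenue. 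So equality holds.
\end{itemize}

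The only delicate point is the benchmark invariance. One must confirm that the $\pm$ infinitesimal conventions and the infinite-value atoms do not allow the new item to become the top or second value. This is settled by the support bound $v_i \geq R_i(0) + T$ for the original items together with $n \geq 2$.
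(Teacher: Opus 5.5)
Your proposal is correct and follows essentially the same route as the paper. The benchmark is unchanged because the new item's deterministic value $T$ lies weakly below every original item's value support $[R_i(0)+T,\,+\infty]$, so with $n \geq 2$ it is never the top or second value. The {\UIVVSPP} comparison then splits into two cases, exactly as in the paper: in the positive case the new item can be placed first, giving revenue $T \leq \min_{i} R_i(0) + T$; in the negative case $\kappa_T = 0$ and $R_T(0) = 0$ give equality.
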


\begin{proof}[Proof (Sketched)]
The new instance $\bR \otimes R_{T}$ must satisfy {\bf linearity} (\Cref{def:perturb}).
The benchmark revenue keeps the same $\DRB(\bR \otimes R_{T}) = \DRB(\bR)$; for each of the $n \geq 2$ given item $R_{i}$, its value is supported on $[R_{i}(1), +\infty] = [R_{i}(0) + T, +\infty]$ and its virtual value is supported on $\{T, +\infty\}$, so the new item $R_{T}$ never contributes the benchmark revenue.
When $\bkappa = \ones$ and $\kappa_{T} = 1$ (\Cref{eq:VUSPP_positive:1}):
$\UIVVSPP(\bR \otimes R_{T},\, T) = R_{T}(0) + T = T \leq \min_{i \in [n]} \{ R_{i}(0) \} + T = \UIVVSPP(\bR,\, T)$.
Moreover, when $\bkappa = \zeros$ and $\kappa_{T} = 0$ (\Cref{eq:VUSPP_negative:1}):
$\UIVVSPP(\bR \otimes R_{T},\, T) = \sum_{i \in [n]} R_{i}(0) + R_{T}(0) = \sum_{i \in [n]} R_{i}(0) = \UIVVSPP(\bR,\, T)$.
\end{proof}

Because $\DRB(\bR \otimes R_{T}) = \DRB(\bR)$, the given {\UIVV} threshold $T = \alpha \cdot \DRB(\bR)$ remains the new instance $\bR \otimes R_{T}$'s {\UIVV} threshold.




Next, we convert an {\em $(n + 1)$-item linear} instance $\bR \otimes R_{T}$ into an {\em ``almost symmetric'' $(n + 1)$-item linear} instance $\tilde{\bR} \otimes R_{T} = \{\tilde{R}\}^{\otimes n} \otimes R_{T}$, where the symmetric items $\tilde{R}(q) \eqdef \frac{\sum_{i \in [n]} R_{i}(0)}{n} + T \cdot q$ for $q \in [0,\, 1]$ and the division points keep the same $(\tilde{\bkappa},\, \kappa_{T}) = (\bkappa,\, \kappa_{T})$.



\begin{lemma}
\begin{flushleft}
Given a linear instance $\bR \otimes R_{T}$, the almost symmetric instance $\tilde{\bR} \otimes R_{T}$ is a linear instance that $\DRB(\tilde{\bR} \otimes R_{T}) \geq \DRB(\bR \otimes R_{T})$ and $\UIVVSPP(\tilde{\bR} \otimes R_{T},\, T) = \UIVVSPP(\bR \otimes R_{T},\, T)$.
\end{flushleft}
\end{lemma}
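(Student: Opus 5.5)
Three things need checking: linearity of $\tilde{\bR} \otimes R_{T}$, equality of the $\UIVVSPP$ revenues, and the benchmark inequality. The first two are bookkeeping; the benchmark inequality carries the content.

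\textbf{Linearity.} Every $\tilde{R}(q) = \frac{\sum_{i} R_{i}(0)}{n} + T \cdot q$ has slope $T$ (read as $T^{+}$ or $T^{-}$). The division points are unchanged, all $\ones$ or all $\zeros$, so the instance is linear with the same threshold $T$.

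\textbf{Equality of $\UIVVSPP$.} I will use the closed forms \Cref{eq:VUSPP_positive:1,eq:VUSPP_negative:1}.

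In the negative case, the revenue is the order-independent sum $\sum_{i} R_{i}(0) + R_{T}(0) = \sum_i R_i(0)$. This sum is unchanged by averaging.

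In the positive case, the revenue under order $\pi$ is $R_{\pi(1)}(0) + T$. The adversary puts first the item with the smallest $R(0)$, and $R_{T}(0) = 0$ is smallest. So both instances give exactly $T$.

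Hence the threshold stays $T$ for the new instance provided its $\DRB$ does not drop below $T/\alpha$. If $\DRB$ strictly increases, I would follow this step with \scale, as in \Cref{lem:scale}, to restore the threshold; the lemma as stated only compares at the fixed $T$.

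\textbf{Benchmark inequality.} By \Cref{lem:revenue_formula}, $\DRB = \sum_i R_i(0) + \SPA$. The term $\sum_i R_i(0)$ is preserved, so I must show $\SPA(\tilde{\bR} \otimes R_{T}) \geq \SPA(\bR \otimes R_{T})$.

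Each non-$T$ item has value CDF $F_{i}(v) = 1 - \frac{a_{i}}{v - T}$ for $v \geq a_i + T$, where $a_i = R_i(0)$. The $T$-item is deterministic at $T$, below every other support, so it never affects the second-highest value when $n \geq 2$.

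Shift all values by $T$. Then each item is a scaled equal-revenue variable $a_{i} X_{i}$, with $X_i$ i.i.d. standard equal-revenue ($\Pr[X > x] = 1/x$ for $x \geq 1$). So $\SPA = T + \E[\text{second max of } (a_i X_i)]$, and I need
\[
G(\mathbf{a}) ~\eqdef~ \int_{0}^{\infty} \Pr[\text{at least two } a_{i} X_{i} > t] \, \d t
\]
to be maximized at equal $a_i$ for a fixed sum. I would prove this by showing $G$ is Schur-concave, which is the main obstacle.

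First I would write the survival probability $s_{i}(t) = \min(1, a_{i}/t)$ and
\[
\Pr[\ge 2] ~=~ 1 - \prod_{i} (1 - s_{i}) - \sum_{i} s_{i} \prod_{j \neq i} (1 - s_{j}).
\]
$G$ is symmetric and homogeneous of degree $1$, so a two-variable averaging step $(a_{1}, a_{2}) \to \big(\frac{a_{1}+a_{2}}{2}, \frac{a_{1}+a_{2}}{2}\big)$, with the other items held fixed, suffices.

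Next I would condition on the maximum $M$ of the other items. The second max equals $\max\{\text{second max of } (a_{1} X_{1}, a_{2} X_{2}),\ \min(\max(a_{1} X_{1}, a_{2} X_{2}), M)\}$. It is then enough to check, pointwise in $t$ after integration, that both $\Pr[\text{both exceed } t]$ and $\Pr[\text{at least one exceeds } t]$ increase in integrated form under averaging.

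For $t \geq \max(a_1,a_2)$ these are $a_{1}a_{2}/t^{2}$ and $(a_{1}+a_{2})/t - a_{1}a_{2}/t^{2}$. The product $a_1 a_2$ increases under averaging. The trade-off in the second term against the region $t < \max(a_1, a_2)$ is where I expect explicit calculus on $t \in [\min a, \max a]$ to be needed.

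If the pointwise argument fails, I would fall back on directly computing $G$ for two items in closed form and checking concavity along the segment joining $(a_1,a_2)$ to the average.
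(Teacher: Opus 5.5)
Your linearity check and your $\UIVVSPP$ computation are correct and match the paper: in the negative case the revenue is the order-free sum $\sum_i R_i(0)$, and in the positive case the adversary puts $R_T$ first to get exactly $T$.

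The gap is the benchmark inequality. You leave it open, and the route you propose cannot work: $G$ is \emph{not} Schur-concave, so averaging two coordinates with the others held fixed can lower $\SPA$. Take $\mathbf{a} = (0.5,\, 1.5,\, 100)$ and average the first two coordinates.
\begin{itemize}
\item For $t < 100$ the third item always exceeds $t$, so $\Pr[\geq 2] = \Pr[\max(a_1X_1, a_2X_2) > t]$.
\item For $t \geq 100$ every survival probability is $a_i/t$. Averaging changes $\Pr[\geq 2]$ by $\frac{1}{4t^2}(1 - \frac{200}{t})$, which integrates to $0$ over $[100,\infty)$.
\end{itemize}
Hence
\[
G(1,1,100) - G(0.5,1.5,100) ~=~ 2\ln 1.5 - 1 + \tfrac{1}{400} ~\approx~ -0.19 ~<~ 0 .
\]
When one item dominates, the second-highest value is essentially the maximum of the other two, and that maximum is rewarded by spread.

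The same computation refutes your proposed sufficient condition. For $m \geq \max(a_1,a_2)$, averaging changes $\int_0^m \Pr[\max(a_1X_1,a_2X_2) > t]\,\d t$ by $s\ln(1+2d/s) - 2d + d^2/m$, where $s = a_1+a_2$ and $d = (a_2-a_1)/2$. This is negative for large $m$. Your conditioning identity is also incomplete: it omits the second-highest of the remaining items. Pairwise averaging only suffices when $n = 2$.

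The missing idea is that you need no monotonicity along an averaging path. It is enough that the symmetric point attains an upper bound that depends on the $R_i(0)$'s only through their sum. By \Cref{lem:revenue_formula} and the optimality of Myerson's auction,
\[
\DRB(\bR \otimes R_T) = \sum_i R_i(0) + \SPA(\bR \otimes R_T) \leq \sum_i R_i(0) + \MA(\bR \otimes R_T) = 2\sum_i R_i(0) + T .
\]
For $\tilde{\bR} \otimes R_T$, the item $R_T$ lies below all value supports, so it does not affect $\SPA$. Since $\tilde{\bR}$ is symmetric with positive virtual values in $\{T, +\infty\}$, allocating to the highest value is a Myerson allocation. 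Therefore $\SPA(\tilde{\bR}) = \MA(\tilde{\bR}) = n\tilde{R}(0) + T$. This gives $\DRB(\tilde{\bR} \otimes R_T) = 2\sum_i R_i(0) + T \geq \DRB(\bR \otimes R_T)$, which is exactly the paper's argument.
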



    


\begin{proof}[Proof (Sketched)]
The new instance $\tilde{\bR} \otimes R_{T}$ satisfies {\bf linearity} (\Cref{def:perturb}).
The benchmark revenue can only increase; following :
\begin{align*}
    \DRB(\tilde{\bR} \otimes R_{T})
    & ~=~ \Big(n \tilde{R}(0) + R_{T}(0)\Big) + \SPA(\tilde{\bR} \otimes R_{T}) \\
    & ~=~ n \tilde{R}(0) + \SPA(\tilde{\bR}) \\
    & ~=~ n \tilde{R}(0) + \MA(\tilde{\bR}) \\
    & ~=~ 2n \tilde{R}(0) + T \\
    & ~=~ 2 \cdot \Big(\sum_{i \in [n]} R_{i}(0) + R_{T}(0)\Big) + T
    ~\geq~ \DRB(\bR \otimes R_{T}).
\end{align*}
The second step: $R_{T}$'s deterministic value $\equiv T$ is below $\tilde{\bR}$'s value supports $[\frac{\sum_{i \in [n]} R_{i}(0)}{n} + T,\, +\infty]$, thus no effect in {\SecondPriceAuction}.
The third step: $\tilde{\bR}$ are symmetric and have positive virtual values $\in \{T,\, +\infty\}$, so {\MyersonAuction} and {\SecondPriceAuction} are revenue-equivalent (\Cref{prop:revenue_equivalence}).
All the other steps follow from \Cref{lem:revenue_formula}.

That $\UIVVSPP(\tilde{\bR} \otimes R_{T},\, T) = \UIVVSPP(\bR \otimes R_{T},\, T)$ is obvious, whether $\tilde{\bR} \otimes R_{T}$ and $\bR \otimes R_{T}$ are ``positive'' linear instances (\Cref{eq:VUSPP_positive:1}) or ``negative'' linear instances (\Cref{eq:VUSPP_negative:1}).
\end{proof}

The new instance $\tilde{\bR} \otimes R_{T}$ slightly differs from our target. But reusing the {\scale} reduction, we can easily modify it into a desired instance $\hat{\bR} \otimes R_{T} = \{\hat{R}\}^{\otimes n} \otimes R_{T}$ that
$\DRB(\hat{\bR} \otimes R_{T}) = \DRB(\bR \otimes R_{T})$ (hence the same {\UIVV} threshold $\hat{T} = T$) and $\UIVVSPP(\hat{\bR} \otimes R_{T},\, \hat{T}) \leq \UIVVSPP(\bR \otimes R_{T},\, T)$.
\begin{itemize}
    \item {\DualityRelaxationBenchmark}:
    $\hat{\bR} \otimes R_{T}$ has a {\UIVV} threshold $T = \alpha \cdot \DRB$ and a benchmark revenue $\DRB = 2n\hat{R}(0) + T$, which means $T = \frac{2\alpha}{1 - \alpha} \cdot n\hat{R}(0)$ and $\DRB = \frac{2}{1 - \alpha} \cdot n\hat{R}(0)$.

    \item {\UIVVSequentialPostedPricing}:
    $\UIVVSPP = T = \frac{2\alpha}{1 - \alpha} \cdot n\hat{R}(0)$ for a ``positive'' {\em linear} instance (\Cref{eq:VUSPP_positive:1}) and $\UIVVSPP = n\hat{R}(0)$ for a ``negative'' {\em linear} instance (\Cref{eq:VUSPP_negative:1}).
    Clearly, the worst case refers to the smaller revenue $= \min\{ \frac{2\alpha}{1 - \alpha},\, 1 \} \cdot n R(0)$.
\end{itemize}
To conclude, a parameter $\alpha = (0,\, 1)$ for {\UIVVSequentialPostedPricing} induces a revenue guarantee of $\calC(\alpha) = \frac{2}{\min\{ 2\alpha,\, 1 - \alpha \}}$, which is minimized $\calC_{\DRB / \UIVVSPP} = \calC(\alpha^{*}) = 3$ at $\alpha^{*} = \frac{1}{3}$.
For completeness, we explicitly show the worst-case instances below.

\begin{example}[{Worst-Case Instances for $\calC_{\DRB / \UIVVSPP} = 3$}]
\begin{flushleft}
There are $(n + 1) \geq 3$ items $\{\hat{F}\}^{\otimes n} \otimes F_{T}$ with revenue-quantile curves $\hat{R}(q) \eqdef \frac{1}{n} + q$ for $q \in [0,\, 1]$ and $R_{T}(q) \eqdef q$ for $q \in [0, 1]$ and (elementary algebra) value CDF's $\hat{F}(v) \eqdef 1 - \frac{1 / n}{v - 1}$ for $v \in [\frac{n + 1}{n},\, +\infty]$ and $F_{T}(v) \eqdef \indicator(v > 1)$.
\end{flushleft}
\end{example}

\end{document}